\tikzstyle{vertex}=[circle, draw, inner sep=0pt, minimum size=4pt, fill = black]
\newcommand{\multiline}[1]{%
  \begin{tabularx}{\dimexpr\linewidth-\ALG@thistlm}[t]{@{}X@{}}
    #1
  \end{tabularx}
}
\def\BState{\State\hskip-\ALG@thistlm}
\newcommand{\ceil}[1]{\lceil #1 \rceil}
\newcommand{\floor}[1]{\lfloor #1 \rfloor}
\titlespacing{\section}{0pt}{3ex}{2ex}
\titlespacing{\subsection}{0pt}{2ex}{1ex}
\titlespacing{\subsubsection}{0pt}{0.5ex}{0ex}
\newtheorem{theorem}{Theorem}[section]
\newtheorem{corollary}{Corollary}[section]
\newenvironment{proofof}[1]{{\bf Proof of #1.  }}{\hfill$\Box$}
\newtheorem{definition}{Definition}[section]
\newtheorem{lemma}{Lemma}[section]
\newtheorem{claim}{Claim}
\newtheorem{hypothesis}{Hypothesis}
\newtheorem{observation}{Observation}[section]
\let\c@fconjecture\c@conjecture
\let\c@fconj\c@conj
\def \eps {\varepsilon}
\newcommand{\ignore}[1]{}
\def \poly { \text{\rm poly~} }
\title{Induced Cycles and Paths Are Harder Than You Think}
\author{Mina Dalirrooyfard\thanks{Department of Electrical Engineering and Computer Science and CSAIL, MIT. minad@mit.edu}, Virginia Vassilevska Williams\thanks{Department of Electrical Engineering and Computer Science and CSAIL, MIT. virgi@mit.edu}}
\begin{document}
\date{}
\maketitle
\thispagestyle{empty}

\begin{abstract}
The goal of the paper is to give fine-grained hardness results for the Subgraph Isomorphism (SI) problem for fixed size induced patterns $H$, based on the $k$-Clique hypothesis that the current best algorithms for Clique are optimal. 

Our first main result is that for any pattern graph $H$ that is a {\em core}, the SI problem for $H$ is at least as hard as $t$-Clique, where $t$ is the size of the largest clique minor of $H$. This improves (for cores) the previous known results [Dalirrooyfard-Vassilevska W. STOC'20] that the SI for $H$ is at least as hard as $k$-clique where $k$ is the size of the largest clique {\em subgraph} in $H$, or the chromatic number of $H$ (under the Hadwiger conjecture). For detecting \emph{any} graph pattern $H$, we further remove the dependency of the result of [Dalirrooyfard-Vassilevska W. STOC'20] on the Hadwiger conjecture at the cost of a sub-polynomial decrease in the lower bound.

The result for cores allows us to prove that the SI problem for induced $k$-Path and $k$-Cycle is harder than previously known. Previously [Floderus et al. Theor. CS 2015] had shown that $k$-Path and $k$-Cycle are at least as hard to detect as a $\lfloor k/2\rfloor$-Clique. We show that they are in fact at least as hard as $3k/4-O(1)$-Clique, improving the conditional lower bound exponent by a factor of $3/2$. This shows for instance that the known $O(n^5)$ combinatorial algorithm for $7$-cycle detection is conditionally tight.

Finally, we provide a new conditional lower bound for detecting induced $4$-cycles: $n^{2-o(1)}$ time is necessary even in graphs with $n$ nodes and $O(n^{1.5})$ edges. The $4$-cycle is the smallest induced pattern whose running time is not well-understood. It can be solved in matrix multiplication, $O(n^\omega)$ time, but no conditional lower bounds were known until ours. We provide evidence that certain types of reductions from triangle detection to $4$-Cycle would not be possible. We do this by studying a new problem called Paired Pattern Detection.

\end{abstract}
\newpage
\pagenumbering{arabic}
\section{Introduction}


A fundamental problem in graph algorithms, Subgraph Isomorphism (SI) asks, given two graphs $G$ and $H$, does $G$ contain a subgraph isomorphic to $H$? 
While the problem is easily NP-complete, many applications only need to solve the poly-time solvable version in which the pattern $H$ has constant size; this version of SI is often called Graph Pattern Detection and is the topic of this paper. 

There are two versions of SI: {\em induced} and not necessarily induced, {\em non-induced} for short. In the induced version, the copy of $H$ in $G$ must have both edges and non-edges preserved, whereas in the non-induced version only edges need to carry over, and the copy of $H$ in $G$ can be an arbitrary supergraph of $H$.
It is well-known that the induced version of $H$-pattern detection for any $H$ of constant size is at least as hard as the non-induced version (see e.g. \cite{stoc-paper}), and that often the non-induced version of SI has faster algorithms (e.g. the non-induced $k$-independent set problem is solvable in constant time). 

It is well-known that the SI problem for any $k$-node pattern $H$ in $n$-node graphs for constant $k$, can be reduced in linear time to detecting a $k$-clique in an $O(n)$ node graph (see \cite{nevsetvril1985complexity}). Thus the hardest pattern to detect is $k$-clique. A natural question is:

\begin{center}{\em How does the complexity of detecting a particular fixed size pattern $H$ compare to that of $k$-clique?}\end{center}

Let us denote by $C(n,k)$ the best running time for $k$-clique detection in an $n$ node graph. 
When $k$ is divisible by $3$, Ne\v{s}etril and Poljak
\cite{nevsetvril1985complexity} showed that $C(n,k)\leq O(n^{\omega k/3})$ time, where $\omega<2.37286$ \cite{almanv21} is the matrix multiplication exponent. For $k$ not divisible by $3$,
 $C(n,k)\leq O(n^{\omega(\lfloor k/3\rfloor,\lceil k/3\rceil,\lceil (k-1)/3\rceil)})$ time, where $\omega(a,b,c)$ is the exponent of multiplying an $n^a\times n^b$ by an $n^b\times n^c$ matrix.

This $k$-clique running time has remained unchallenged since the 1980s, and a natural hardness hypothesis has emerged (see e.g. \cite{vsurvey}):

\begin{hypothesis}[$k$-clique Hypothesis]
On a word-RAM with $O(\log n)$ bit words,
for every constant $k\geq 3$, $k$-clique  requires $n^{\omega(\lfloor k/3\rfloor,\lceil k/3\rceil,\lceil (k-1)/3\rceil)-o(1)}$ time.
\end{hypothesis}

A ``combinatorial'' version\footnote{``Combinatorial" is not well-defined, but it is a commonly used term to denote potentially practical algorithms that avoid the generally impractical Strassen-like methods for matrix multiplication.} of the hypothesis states that the best combinatorial algorithm for $k$-clique runs in $n^{k-o(1)}$ time. 
Other hypotheses such as the Exponential Time Hypothesis for SAT \cite{ipz1,CalabroIP06} imply weaker versions of the $k$-Clique Hypothesis, namely that $k$-clique requires $n^{\Omega(k)}$ time \cite{ChenCFHJKX04}. We will focus on the {\em fine-grained} $k$-Clique Hypothesis as we are after fine-grained lower bounds that focus on fixed exponents.

Our {\bf goal} is now, for every $k$-vertex pattern $H$, determine a function $f(H)$ such that detecting $H$ in an $n$-vertex graph is at least as hard (in a fine-grained sense, see \cite{vsurvey}) as detecting an $f(H)$-clique in an $n$-vertex graph.
We then say that $H$ is ``at least as hard as $f(H)$-clique''.

Obtaining such results is interesting for several reasons. 
\begin{itemize}
\item First, under the $k$-clique Hypothesis, we would get {\em fine-grained} lower bounds for detecting $H$. This would give us a much tighter handle on the complexity of $H$-detection than, say, results (such as results based on ETH, or \cite{ManurangsiRS21}) that merely provide an $n^{\Omega(k)}$ lower bound which only talks about the {\em growth} of the exponent.

\item Second, knowing the largest size clique that limits the complexity of $H$-pattern detection can allow us to compare between different patterns. The goal is to get to something like: the complexity of $k$-node $H_1$ is like the complexity of $k/10$-clique, whereas the complexity of $k$-node $H_2$ is like the complexity of $k/2$-clique, so $H_2$ seems harder. 

\item Third, this more structural approach uncovers interesting combinatorial and graph theoretic results. For instance, in \cite{stoc-paper} it was uncovered that the colorability of a pattern, and the Hadwiger conjecture can explain the hardness of pattern detection. This is not obvious at all apriori.
\end{itemize}

This approach has been taken by prior work (e.g. \cite{stoc-paper,BlaserKS18,FloderusKLL15}); see the related work section for more background.

%

\section{Our results}
Our contributions are as follows: 
\begin{enumerate}
\item First, we obtain a strengthening of a recent result of \cite{stoc-paper} that implies that the hardness of certain patterns called ``cores'' relates to the size of their maximum {\em clique minor.} This hardness is stronger than what was previously known, as previously only the chromatic number, or the maximum size of a clique {\em subgraph} were known to imply limitations, and both of these parameters are upper-bounded by the clique minor size (under the Hadwiger conjecture, for chromatic number).

\item We then apply the result above to obtain much higher hardness for {\em induced Path and Cycle} detection in graphs: a $k$-path or $k$-cycle contains an independent set of size roughly $k/2$. Thus both $k$-Cycle and $k$-Path were shown \cite{FloderusKLL15} to be at least as hard as $\lfloor k/2\rfloor$-Clique. We raise the hardness to that of $3k/4-O(1)$ clique, thus raising the {\em exponent} of the lower bound running time by a factor of $3/2$. This allows us for instance to obtain a {\em tight} conditional lower bound of $n^{5-o(1)}$ for the running time of combinatorial algorithms for $7$-Clique; an $O(n^5)$ algorithm was obtained by Bl\"{a}ser et al. \cite{blaser2018graph}.

\item Finally, we consider the smallest known case of induced $k$-Cycle whose complexity is not well-understood: induced $4$-Cycle. We provide a new conditional lower bound for the problem in sparser graphs based on a popular fine-grained hypothesis, and also provide some explanation for why reductions from triangle detection to $4$-Cycle have failed so far.

\end{enumerate}

We now elaborate on our results.

\paragraph{New results for core graphs.}
Dalirrooyfard, Vuong and Vassilevska W. \cite{stoc-paper} related the hardness of subgraph pattern detection to the size of the maximum clique or the chromatic number of the pattern.
In particular, they showed that if $H$ has chromatic number $t$, then under the Hadwiger conjecture, $H$ is at least as hard to detect as a $t$-clique. 

The Hadwiger conjecture basically states that the chromatic number of a graph is always at most the largest size of a clique minor of the graph. As the result of \cite{stoc-paper} was already assuming the Hadwiger conjecture, one might wonder if it can be extended to show that every pattern $H$ is at least as hard to detect as an $\eta$-clique, where $\eta$ is the size of the largest {\em clique minor} of $H$.

We first note that such an extension is highly unlikely to work for non-induced patterns: the four-cycle $C_4$ has a $K_3$ (triangle) minor, but a non-induced $C_4$ has an $O(n^2)$ time detection algorithm that does not use matrix multiplication, whereas any subcubic triangle detection algorithm must use (Boolean) matrix multiplication \cite{focsy}. Thus any extension of the result that shows clique-minor-sized clique hardness would either only work for certain types of non-induced graphs, or will need to only work in the induced case.

Here we are able to show that $H$-subgraph pattern detection, even in the non-induced case, is at least as hard as $\eta$-clique, where $\eta$ is the largest clique minor size of $H$, as long as $H$ is a special type of pattern called a {\em core}. Cores include many patterns of interest, including the complements of cycles of odd length.
 We also give several other hardness results, such as removing the dependence on the Hadwiger conjecture from some of the results of \cite{stoc-paper} with only a slight loss in the lower bound. 

We call a subgraph $C$ of a graph $H$ a {\em core} of $H$ if there is a homomorphism $H\rightarrow C$ but there is no homomorphism $H\rightarrow C'$ for any proper subgraph $C'$ of $C$. 
Hell and Ne{\v{s}}et{\v{r}}il~\cite{hell1992core} showed that every graph has a unique core (up to isomorphism), and the core of a graph is an {\em induced} subgraph. We denote the core of a graph $H$ by $core(H)$. A graph which is its own core is called simply a {\em core}.

We prove strong hardness results for cores, relating the hardness of detecting the pattern to the size of its maximum clique minor. We then relate the hardness of detecting arbitrary patterns to the hardness of detecting their cores.

We begin with a theorem that shows hardness for detecting a ``partitioned'' copy of a pattern $H$. Here the vertex set of the host graph $G^*$ is partitioned into $k$ parts, and one is required to detect an induced copy of a $k$-node $H$ so that the image of the $i$th node of $H$ is in the $i$th part of the vertex set of $G^*$. This version of SI is often called Partitioned Subgraph Isomorphism (PSI).  Marx \cite{marx2007can} showed that under ETH, PSI for a pattern $H$ requires at least $n^{\Omega(tw(H)/\log{tw(H)})}$ time where $tw(H)$ is the treewidth of $H$. We give a more fine-grained lower bound for PSI. We provide a reduction from $\eta$-clique detection in an $n$ node graph to PSI for a graph $H$ in an $O(n)$ node host graph, for any $H$ with maximum clique minor of size  $\eta$.

\begin{restatable}{theorem}{psihardness}
\label{thm:psihardness} 
\textbf{(Hardness of PSI)} Let $H$ be a $k$-node pattern with maximum clique minor of size $\eta(H)$, and let $G$ be an $n$-node graph. Then one can construct a $k$-partite $O(n)$-node graph $G^*$ in $O(n^2)$ time such that $G^*$ has a colorful copy of $H$ if and only if $G$ has a clique of size $\eta(H)$.
\end{restatable}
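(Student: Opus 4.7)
The plan is to encode an $\eta$-clique instance $G$ on vertex set $[n]$ into a partitioned host graph $G^*$ whose $k$ parts are indexed by $V(H)$, using the branch sets of a $K_{\eta}$-minor in $H$ as rigid gadgets that force consistent selection of a single $G$-vertex.

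First I would fix disjoint connected branch sets $B_1,\dots,B_\eta \subseteq V(H)$ witnessing the $K_\eta$-minor of $H$ (where $\eta = \eta(H)$), and for every pair $i \neq j$ pick one representative cross-edge $(a_{ij}, a_{ji}) \in E(H)$ with $a_{ij} \in B_i$ and $a_{ji} \in B_j$. For each $v \in V(H)$ create a part $P_v = \{v_u : u \in [n]\}$, so $|V(G^*)| = kn = O(n)$. Edges of $G^*$ are defined pair-by-pair on parts $(P_v, P_w)$ according to three rules: (i) if $v,w$ lie in a common branch set $B_i$, put $(v_u, w_{u'}) \in E(G^*)$ iff $u = u'$ and $(v,w) \in E(H)$, i.e. assemble $n$ parallel copies of $H[B_i]$; (ii) if $v \in B_i$, $w \in B_j$ with $i \neq j$, put $(v_u, w_{u'}) \in E(G^*)$ iff $(v,w) \in E(H)$ and $(u, u') \in E(G)$, routing consistency between distinct branch sets through edges of $G$; (iii) if at least one of $v, w$ lies in no branch set, put $(v_u, w_{u'}) \in E(G^*)$ iff $(v,w) \in E(H)$, independent of $u, u'$. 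The construction clearly takes $O(k^2 n^2) = O(n^2)$ time.

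For the forward direction, given an $\eta$-clique $\{c_1, \dots, c_\eta\}$ in $G$, I would set $\phi(v) = v_{c_i}$ when $v \in B_i$ and $\phi(v) = v_{c_1}$ otherwise; a case check against the three rules verifies $(\phi(v), \phi(w)) \in E(G^*)$ iff $(v,w) \in E(H)$, so $\phi$ yields a colorful induced copy of $H$. For the converse, given a colorful induced copy $\phi(v) = v_{u_v}$, rule (i) forces $u_v = u_w$ on every $H$-edge inside a branch set, and connectivity of each $B_i$ (traversing a spanning tree) propagates this to a single value $c_i$ for all $v \in B_i$. Applying rule (ii) to each representative edge $(a_{ij}, a_{ji})$ gives $(c_i, c_j) \in E(G)$ for all $i \neq j$, which also forces $c_i \neq c_j$ because $G$ has no self-loops; hence $\{c_1, \dots, c_\eta\}$ is a genuine $\eta$-clique.

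The main subtlety I anticipate is the \emph{induced} nature of PSI: non-edges of $H$ must become non-edges of $G^*$ on the chosen copy, which is why rule (i) must suppress all edges between $v_u$ and $w_{u'}$ inside a branch set unless both $u = u'$ and $(v,w) \in E(H)$, and rule (ii) must similarly suppress cross-branch edges unless both $H$ and $G$ witness them; without these suppressions the backwards direction could admit spurious copies that do not project to a clique. A smaller point is guaranteeing distinctness of the $c_i$ across branch sets, but this falls out for free from looplessness of $G$ once each pair is certified via a single cross-edge $(a_{ij}, a_{ji})$, so no auxiliary argument is needed.
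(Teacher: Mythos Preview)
Your proposal is correct and follows essentially the same construction as the paper: one part per vertex of $H$, perfect-matching edges inside a branch set, $E(G)$-edges between distinct branch sets, and the connectivity argument to force a single $G$-vertex per branch set. The only cosmetic difference is that the paper takes the minor as a total function $f:V(H)\to\{1,\dots,\eta\}$ (so every vertex of $H$ lies in some branch set and your rule~(iii) never fires), whereas you allow uncovered vertices; and your ``main subtlety'' about induced-ness is a non-issue, since all three rules already make $G^*$ $H$-partite, so any colorful copy is automatically induced --- the real reason for the $u=u'$ clause in rule~(i) is the backward-direction consistency, not the preservation of non-edges.
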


Thus the hardness of Partitioned SI is related to the size of the largest clique minor.
To obtain a bound on the size of the maximum clique minor of any graph we use a result of Thomason \cite{thomason2001extremal} as follows: Let $c(t)$ be the minimum number such that every graph $H$ with $|E(H)|\ge c(t)|V(H)|$ has a $K_t$ minor. Then $c(t)=(\alpha+o(1))t\sqrt{\log{t}}$, where $\alpha\le 0.32$ is an explicit constant. Since for $t=\frac{|E(H)|/|V(H)|}{\sqrt{\log (|E(H)|/|V(H)|)}}$ the above inequality is true, we have the following corollary. 
\begin{corollary}
Let $H$ be a $k$-node $m$-edge pattern. Then the problem of finding a partitioned copy of $H$ in an $n$-node $k$-partite graph is at least as hard as finding a clique of size $\frac{m/k}{\sqrt{\log{m/k}}}$ in an $O(n)$-node graph.
\end{corollary}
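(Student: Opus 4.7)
The plan is to derive the corollary by combining Theorem~\ref{thm:psihardness} with Thomason's extremal bound on clique minors. Theorem~\ref{thm:psihardness} already reduces $\eta(H)$-Clique on $O(n)$ nodes to PSI for $H$ on $O(n)$ nodes, so the only task left is to lower-bound $\eta(H)$ in terms of the density parameter $m/k$.

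First, I would invoke Thomason's theorem as stated in the excerpt: any graph $F$ with $|E(F)| \geq c(t)\,|V(F)|$ contains a $K_t$ minor, where $c(t) = (\alpha + o(1))\,t\sqrt{\log t}$. Applied to our pattern $H$ with $k$ vertices and $m$ edges, this means $\eta(H) \geq t$ whenever $c(t) \leq m/k$. So the goal is to pick the largest $t$ satisfying $t\sqrt{\log t} \lesssim m/k$, which is essentially $t = (m/k)/\sqrt{\log(m/k)}$.

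Next I would verify that this choice indeed works. Set $t := (m/k)/\sqrt{\log(m/k)}$. Then $\log t = \log(m/k) - \tfrac{1}{2}\log\log(m/k) \leq \log(m/k)$, so
\[
t\sqrt{\log t} \;\leq\; \frac{m/k}{\sqrt{\log(m/k)}} \cdot \sqrt{\log(m/k)} \;=\; m/k,
\]
and hence $c(t) \leq m/k$ for this $t$ (absorbing the $(\alpha+o(1))$ factor into the asymptotic statement). By Thomason's theorem, $H$ therefore contains a $K_t$ minor, so $\eta(H) \geq (m/k)/\sqrt{\log(m/k)}$.

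Finally, I would conclude by feeding this bound into Theorem~\ref{thm:psihardness}: the theorem builds, from any $n$-node input graph $G$, an $O(n)$-node $k$-partite host $G^*$ whose colorful copies of $H$ correspond exactly to $\eta(H)$-cliques in $G$, and hence to cliques of size at least $(m/k)/\sqrt{\log(m/k)}$. The construction runs in $O(n^2)$ time, which is subsumed by clique-detection time for any nontrivial clique size, so the reduction is fine-grained. The main (minor) obstacle is just the bookkeeping in the previous paragraph: making sure the $\sqrt{\log t}$ inside Thomason's bound can be replaced by $\sqrt{\log(m/k)}$ without blowing up the constant, which is immediate from $\log t \leq \log(m/k)$ as shown.
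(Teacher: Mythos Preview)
Your proposal is correct and follows essentially the same approach as the paper: apply Thomason's bound with $t=(m/k)/\sqrt{\log(m/k)}$ to conclude $\eta(H)\ge t$, then invoke Theorem~\ref{thm:psihardness}. Your verification that $t\sqrt{\log t}\le m/k$ via $\log t\le \log(m/k)$ is exactly the computation the paper leaves implicit.
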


Hence, for example if $m=ck^2$ for some constant $c$, then the PSI problem for $H$  cannot be solved in $n^{o\left({\frac{k}{\sqrt{\log{k}}}}\right)}$ time. Thus, for dense enough graphs, we 
improve the lower bound of $n^{\Omega(tw(H)/\log{tw(H)})}$ due to Marx \cite{marx2007can}, since $tw(H)\le k$.

While Theorem~\ref{thm:psihardness} only applies to PSI, one can use it to obtain hardness for SI as well, as long as $H$ is a core. In particular, Marx~\cite{marx2007can} showed that PSI and SI are equivalent on cores. Thus we obtain:

\begin{restatable}{corollary}{corehardnessinsi}
\label{thm:core-hardness}
\textbf{(Hardness of cores in SI)} Let $G$ be an $n$-node $m$-edge graph and let $H$ be a $k$-node pattern with maximum clique minor of size $\eta(H)$. If $H$ is a core, then one can construct a graph $G^*$ with at most $O(n)$ vertices in $O(m+n)$ time such that $G^*$ has a subgraph isomorphic to $H$ if and only if $G$ has a $\eta(H)$-clique as a subgraph.
\end{restatable}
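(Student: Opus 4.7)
The plan is to compose two reductions. First, apply Theorem~\ref{thm:psihardness} to turn the input $\eta(H)$-clique instance $G$ into a PSI instance: an $O(n)$-vertex $k$-partite graph $G^*$ with partition $V_1,\ldots,V_k$ (one part per vertex of $H$), such that $G^*$ contains a colorful copy of $H$ if and only if $G$ contains an $\eta(H)$-clique. Second, use the fact that $H$ is a core to argue that detecting an unrestricted subgraph isomorphic to $H$ in $G^*$ is equivalent to detecting a colorful one, invoking Marx's PSI--SI equivalence on cores.

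For the second step, I would rely on the structural property of the construction of Theorem~\ref{thm:psihardness}: the only edges of $G^*$ go between parts $V_i$ and $V_j$ with $ij \in E(H)$, and there are no edges inside a single part. Suppose then that $G^*$ contains a subgraph isomorphic to $H$, witnessed by an injection $\varphi: V(H)\to V(G^*)$. Define $c(u)\in[k]$ to be the index of the part containing $\varphi(u)$. For every edge $uu' \in E(H)$, $\varphi(u)\varphi(u')$ is an edge of $G^*$, so by the structure of $G^*$ we have $c(u)c(u') \in E(H)$. Thus $c: H\to H$ is a homomorphism, and since $H$ is a core, $c$ must be an automorphism of $H$. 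Precomposing $\varphi$ with $c^{-1}$ yields a colorful copy of $H$ in $G^*$. The converse is trivial, so the two conditions are equivalent.

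Chaining the two equivalences yields that $G^*$ has a subgraph isomorphic to $H$ if and only if $G$ has an $\eta(H)$-clique. The graph $G^*$ has $O(n)$ vertices by Theorem~\ref{thm:psihardness}. For the time bound, observe that the PSI construction populates $G^*$ by processing each edge of $G$ and adding only $O(1)$ edges of $G^*$ per processed edge (together with an $O(n)$ initialization phase creating the parts), so the total construction time is $O(m+n)$ rather than the $O(n^2)$ worst case stated for dense $G$ in Theorem~\ref{thm:psihardness}.

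The main obstacle is the structural property I invoke for the second step: one must verify that the gadget output by Theorem~\ref{thm:psihardness} has edges only between parts corresponding to edges of $H$ (and no intra-part edges). Without this, the ``homomorphism from a core is an automorphism'' argument does not apply, and one would instead need heavier machinery such as Marx's original PSI-to-SI reduction, which could increase either the vertex count or the construction time. Once that structural property is in hand, the core assumption does the rest automatically.
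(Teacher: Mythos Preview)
Your proposal is correct and follows the same route as the paper: apply Theorem~\ref{thm:psihardness} and then pass from PSI to SI using that $H$ is a core. The paper simply cites Marx's PSI--SI equivalence for cores, whereas you spell out the argument directly; the structural property you flag as the ``main obstacle'' (that $G^*$ is $H$-partite, with edges only between parts $G^*_u,G^*_v$ when $uv\in E(H)$ and no intra-part edges) is explicit in the construction of Theorem~\ref{thm:psihardness}, so it is not an obstacle at all. Your refinement of the construction time from $O(n^2)$ down to $O(m+n)$ is also straightforward from that construction, since for each edge of $H$ one adds either a copy of $E(G)$ or a perfect matching between the corresponding parts.
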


As the complements of odd cycles are cores with a clique minor of size at least $\floor{3k/4}$, for $C_k$ when $k$ is odd, we immediately obtain a lower bound of $C(n,\floor{3k/4})$ for $C_k$ detection. When $k$ is even, more work is needed.



Corollary~\ref{thm:core-hardness} applies to the non-induced version of SI. We obtain a stronger result for the induced version in terms of the $k$ and the size of the largest clique subgraph.

\begin{restatable}{corollary}{corehardnessgeneral}
\label{cor:coreLB}
\textbf{(Hardness for induced-SI for cores)}
Let $H$ be a $k$-node pattern which is a core. Suppose that $w(H)$ is the size of the maximum clique in $H$. Then detecting $H$ in an $n$-node graph as an induced subgraph is at least as hard as detecting a clique of size $\max\{\ceil{\sqrt{(k+2w(H))/2}},\ceil{\sqrt{k/1.95}\}}$.
\end{restatable}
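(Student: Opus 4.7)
My plan is to deduce Corollary~\ref{cor:coreLB} from Corollary~\ref{thm:core-hardness} together with two graph-theoretic lower bounds on the maximum clique-minor number $\eta(H)$ that hold for cores. Since the introduction already notes that induced-SI is at least as hard as non-induced SI, Corollary~\ref{thm:core-hardness} immediately gives that induced-SI for a core $H$ is at least as hard as $\eta(H)$-clique detection. It therefore suffices to establish
\[
\eta(H) \;\geq\; \max\left\{\left\lceil\sqrt{(k+2w(H))/2}\,\right\rceil,\; \left\lceil\sqrt{k/1.95}\,\right\rceil\right\}
\]
for any core $H$ on $k$ vertices with clique number $w(H)$, after which the bound follows.

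For the first inequality, I would build an explicit $K_t$-minor of size $t:=\left\lceil\sqrt{(k+2w(H))/2}\right\rceil$ by seeding from the maximum clique. Take the $w:=w(H)$ vertices of a maximum clique as $w$ singleton branch sets, and extend by at most $t-w$ further branch sets of size at most two drawn from the remaining $k-w$ vertices. The quadratic relation $2t^2\leq k+2w$ comes from a counting argument: each of the $\binom{t}{2}$ pairs of branch sets needs a cross-edge, and since the non-clique vertices of a core each have at least two neighbors (a core has no dominated vertex, hence no vertex of degree strictly less than two) there are enough cross-edges when the number of branch sets is small enough to satisfy the inequality. For the second inequality I would use an edge-density lower bound for cores combined with a minor extremal bound in the style of Thomason: every core on $k$ vertices has at least a linear-in-$k$ number of edges (again via the ``no dominated vertex'' property), and this forces a $K_t$-minor of size $\left\lceil\sqrt{k/1.95}\right\rceil$; the constant $1.95$ is the outcome of optimizing the edge-density bound for cores against the extremal constant.

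The main obstacle is proving these two minor lower bounds with the claimed constants. The first bound is delicate because the existence of the necessary cross-edges between branch sets must be extracted from the combined structure of the maximum clique and the non-clique vertices; a naive greedy construction may allow pairs of branch sets to have no edge between them, and the core property must be used carefully to rule this out. The second bound is sensitive to the precise edge-density guarantee for cores --- using only minimum degree two would yield a weaker constant, so one needs to exploit the stronger ``no retraction'' structure of cores to push the constant down to $1.95$. A final subtlety is handling the extreme regimes where $w(H)$ is very small, so the second term dominates, and where $w(H)$ is close to $k$, so $H$ is essentially a clique and the bound $\eta(H)\geq w(H)$ is already tight.
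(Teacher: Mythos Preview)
Your central reduction step --- showing that $\eta(H)\geq \max\{\lceil\sqrt{(k+2w(H))/2}\rceil,\lceil\sqrt{k/1.95}\rceil\}$ for every core $H$ --- is false, so the whole plan collapses. Take $H=C_{2t+1}$, the odd cycle on $2t+1$ vertices. This is a core, has $w(H)=2$, and $\eta(C_{2t+1})=3$ for every $t\geq 2$ (cycles are series--parallel and have no $K_4$ minor). Yet your claimed bound would force $\eta(H)\geq\lceil\sqrt{(2t+5)/2}\rceil$, which exceeds $3$ already for $t\geq 7$, and similarly $\lceil\sqrt{(2t+1)/1.95}\rceil>3$ for $t\geq 9$. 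So neither inequality you set out to prove about $\eta(H)$ alone can hold, and no amount of care in the branch-set construction or edge-density argument will rescue it: the target statement is simply not true.

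The paper does something quite different. It never bounds $\eta(H)$ by itself. Instead it invokes two known results from extremal graph theory --- Kawarabayashi's inequality $(2\alpha(H)-1)\eta(H)\geq k+w(H)$ and the Balogh--Kostochka bound $\alpha(H)\eta(H)\geq k/(2-c)$ with $c>1/19.5$ --- which, combined with $\eta(H)\geq w(H)$, give $\alpha(H)\,\eta(H)\geq\max\{(k+2w(H))/2,\;k/1.95\}$. The point is that induced-SI for $H$ is at least as hard as $\eta(H)$-clique (via Corollary~\ref{thm:core-hardness}, using that $H$ is a core) \emph{and} at least as hard as $\alpha(H)$-clique (since detecting $H$ induced is the same as detecting $\bar{H}$ induced, and $\bar{H}$ contains an $\alpha(H)$-clique). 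Hence the hardness is governed by $\max(\eta(H),\alpha(H))$, and $\max(\eta(H),\alpha(H))^2\geq \eta(H)\alpha(H)$ yields the square-root bound. In the odd-cycle example the hardness comes entirely from the $\alpha(H)$ side, which your approach discards.
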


For comparison, the result of \cite{stoc-paper} shows that
 non-induced SI for any $k$-node $H$ is at least as hard as detecting a clique of size $\sqrt{k}$, but the result is conditioned on the Hadwiger conjecture. 
 Corollary \ref{cor:coreLB} is the strongest known clique-based lower bound result for $k$-node {\em core} $H$ that is not conditioned on the Hadwiger conjecture.




Our next theorem relates the hardness of detecting a pattern to the hardness of detecting its core.

\begin{restatable}{theorem}{generalhardness}
\label{thm:general-hardness}
Let $G$ be an $n$-node $m$-edge graph and let $H$ be a $k$-node pattern. Let $C$ be the core of $H$. Then one can construct a graph $G^*$ with at most $O(n)$ vertices in $O(n^2)$ time such that $G^*$ has a subgraph isomorphic to $H$ if and only if $G$ has a subgraph isomorphic to $C$, with high probability\footnote{with probability $1/\poly{n}$}.
\end{restatable}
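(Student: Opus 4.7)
The plan is to combine color-coding on $G$ with a carefully defined ``blow-up'' of $H$ whose bags are indexed by $V(H)$ but labeled by vertices of $C$ via the canonical retraction. First I would randomly color $V(G)$ with $|V(C)|$ colors drawn from $V(C)$, writing $A_c\subseteq V(G)$ for the vertices receiving color $c$. Any fixed copy of $C$ in $G$ becomes ``colorful'' (one vertex in each $A_c$, matching the identity pattern) with probability at least $|V(C)|!/|V(C)|^{|V(C)|}=\Omega(1)$, and this is the only source of randomness in the reduction.

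Using the fact that $C$ is the core of $H$, I will fix a retraction $\psi\colon H\to C$ that is the identity on $V(C)\subseteq V(H)$ (this exists by the core property and can be computed in $O(1)$ time since $H$ has constant size). Define
\[
V(G^*)=\bigsqcup_{h\in V(H)} V_h, \qquad V_h=\{(v,h):v\in A_{\psi(h)}\},
\]
so that $|V(G^*)|\le k\cdot n=O(n)$. For each $(h,h')\in E(H)$ and each pair $(v,v')\in A_{\psi(h)}\times A_{\psi(h')}$, include the edge $\{(v,h),(v',h')\}$ in $G^*$ iff $\{v,v'\}\in E(G)$, and add no other edges; the construction runs in $O(n^2)$ time. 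The forward direction is routine: given a colorful copy $f\colon V(C)\hookrightarrow V(G)$ with $f(c)\in A_c$, the lift $g(h)=(f(\psi(h)),h)$ lies in $V_h$, is injective via its second coordinate, and sends edges of $H$ to edges of $G^*$ because $\psi$ is a homomorphism and $f$ realizes edges of $C$ as edges of $G$.

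The reverse direction is where the work lies and where the core property of $C$ is essential. Given an $H$-copy $g$ in $G^*$, let $\sigma(h)$ and $\rho(h)$ denote the second and first coordinates of $g(h)$ respectively. The edge rule of $G^*$ forces $\sigma\colon H\to H$ to be an endomorphism of $H$ and $\rho\colon H\to G$ to be a homomorphism. Restricting to the embedded $V(C)\subseteq V(H)$, the composition $\psi\circ\sigma|_C\colon C\to C$ is an endomorphism of $C$, and since $C$ is a core it must be an automorphism of $C$. Because $g(c)\in V_{\sigma(c)}$ forces $\rho(c)\in A_{\psi(\sigma(c))}$ and this automorphism permutes $V(C)$ bijectively, the images $\rho(c)$ for $c\in V(C)$ lie in pairwise distinct (hence disjoint) color classes, so $\rho|_{V(C)}$ is injective. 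Combined with edge preservation, $\rho|_C\colon C\hookrightarrow G$ is the desired subgraph embedding.

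The main obstacle is precisely this injectivity step: without the core property of $C$, the endomorphism $\psi\circ\sigma|_C$ could collapse two vertices of $V(C)$, in which case the corresponding core vertices might be mapped by $\rho$ into the same color class (and potentially to the same vertex of $G$), breaking the extraction of a $C$-subgraph. Invoking the defining property of a core -- that every endomorphism is an automorphism -- is the one combinatorial fact that makes the entire reduction go through; everything else is a mechanical check of the bag-and-edge construction.
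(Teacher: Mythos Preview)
Your argument is correct and is in fact cleaner than the paper's own proof. A minor quibble: the probability that a fixed embedding $f\colon C\hookrightarrow G$ satisfies $f(c)\in A_c$ for every $c$ is $|V(C)|^{-|V(C)|}$ (or $|\mathrm{Aut}(C)|/|V(C)|^{|V(C)|}$ if you allow any isomorphism onto the copy), not $|V(C)|!/|V(C)|^{|V(C)|}$ in general; but this is still $\Omega(1)$ and does not affect anything.

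The paper takes a substantially different route. It introduces two auxiliary notions: a \emph{$C$-coloring} of a subgraph $B\subseteq H$ (a labelling of $V(B)$ by $\{1,\dots,|V(C)|\}$ under which every copy of $C$ in $B$ is colorful) and a \emph{$C$-covering} of $H$ (a family $\{C_1,\dots,C_r\}$ of $C$-colorable subsets that together hit every copy of $C$ in $H$). The reduction graph blows up only the bags indexed by the first covering set $C_1$ and keeps all other bags as singletons; the reverse direction is then a minimality argument showing that any $H$-copy in $G^*$ must place a full copy of $C$ inside the $C_1$-bags, for otherwise $\{C_2,\dots,C_r\}$ would induce a smaller $C$-covering of $H$. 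Your approach bypasses this machinery entirely: by labelling \emph{every} bag $V_h$ with the color $\psi(h)$ coming from a single fixed retraction, you get the injectivity of $\rho|_{V(C)}$ in one line from the fact that $\psi\circ\sigma|_C$ is an automorphism of the core. What the paper's approach buys is reusability: the $C$-covering framework and its minimality argument are what drive the later Subset Pattern Detection result (Theorem~\ref{thm:set-hardness}), where one must compare $C$-covering numbers across several patterns with the same core. For Theorem~\ref{thm:general-hardness} in isolation, your retraction-based construction is the more economical proof.
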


One consequence of Theorem~\ref{thm:general-hardness} and Corollary \ref{cor:coreLB} is that induced-SI for \emph{any} pattern $H$ of size $k$ is at least as hard as detecting a clique of size $\ceil{k^{1/4}/1.39}$. Note that this is the first lower bound for induced SI that is {\bf only} under the $k$-clique hypothesis.

\begin{restatable}{corollary}{inducedsihardnessone}
\label{cor:inducedhardnessone}
\textbf{(Hardness of Induced-SI)}
For any $k$-node pattern $H$, detecting an induced copy of $H$ in an $n$-node graph is at least as hard as detecting a clique of size $\ceil{k^{1/4}/1.39}$ in an $O(n)$ graph.
\end{restatable}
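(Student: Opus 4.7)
The plan is to apply Theorem~\ref{thm:general-hardness} and Corollary~\ref{cor:coreLB} \emph{twice}, once to $H$ and once to its complement $\bar H$, and then take the stronger of the two resulting lower bounds. Let $C$ be the core of $H$ with $k_0:=|V(C)|$, and let $C'$ be the core of $\bar H$ with $k_0':=|V(C')|$. Since the core is an induced subgraph of its ambient graph and $K_{w(H)}$ is itself a core, the restriction of $H\to C$ to any $K_{w(H)}\subseteq H$ is an injective homomorphism, giving $w(C)=w(H)$; symmetrically $w(C')=w(\bar H)=\alpha(H)$. Induced-SI for $H$ in a host $G$ is the same problem as induced-SI for $\bar H$ in $\bar G$, so any hardness obtained for $\bar H$ transfers to $H$. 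Chaining Theorem~\ref{thm:general-hardness} with Corollary~\ref{cor:coreLB} on each side then produces two lower bounds for induced-$H$ detection: $K_{s_1}$-hardness with $s_1\ge\lceil\sqrt{k_0/1.95}\,\rceil$ (via $(H,C)$) and $K_{s_2}$-hardness with $s_2\ge\lceil\sqrt{(k_0'+2w(C'))/2}\,\rceil$ (via $(\bar H,C')$).

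Next I would case-split on whether $k_0$ meets the break-even threshold $\tau:=(1.95/1.39^2)\sqrt{k}\approx 1.0093\sqrt{k}$, chosen precisely so that $\sqrt{\tau/1.95}=k^{1/4}/1.39$. If $k_0\ge\tau$, the first bound already gives $s_1\ge k^{1/4}/1.39$. Otherwise $k_0<\tau$, in which case $\chi(H)=\chi(C)\le k_0<1.01\sqrt{k}$ forces, by pigeonhole on the color classes of a proper $k_0$-coloring of $H$, an independent set of size $\alpha(H)\ge k/\chi(H)>\sqrt{k}/1.0093$. Under complement this becomes $w(C')=\alpha(H)>\sqrt{k}/1.0093$, and combined with $k_0'\ge w(C')$ the second bound yields
\[
s_2\ \ge\ \sqrt{3\,w(C')/2}\ >\ \sqrt{1.485\,\sqrt{k}}\ \approx\ 1.22\,k^{1/4},
\]
which comfortably exceeds $k^{1/4}/1.39$.

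Taking the stronger of $s_1,s_2$ across the two cases completes the proof. The conceptual point is the duality between a ``large-core / high-chromatic-number witness of hardness in $H$'' (caught by applying the machinery to $H$) and a ``large-independent-set witness of hardness in $H$'' (caught by applying it to $\bar H$), glued together by $\chi(H)\cdot\alpha(H)\ge k$. The main obstacle is purely numerical rather than combinatorial: the constants $1.95$ in Corollary~\ref{cor:coreLB} and $1.39$ in the target statement have to be aligned so that the case-split at $\tau$ hands off cleanly, and a finite number of small $k$ may need direct inspection to upgrade the strict inequalities above to the required ceiling $\lceil k^{1/4}/1.39\rceil$.
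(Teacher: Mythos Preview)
Your proposal is correct and uses the same three ingredients as the paper's proof: Theorem~\ref{thm:general-hardness} (reduce $H$ to its core), Corollary~\ref{cor:coreLB} (clique hardness for cores), and the $H\leftrightarrow\bar H$ duality for induced SI. The difference is in the packaging. The paper avoids your threshold $\tau$ and case split entirely: it invokes the Nordhaus--Gaddum inequality $\chi(H)\chi(\bar H)\ge k$ once, so that without loss of generality $\chi(H)\ge\sqrt{k}$; since the core preserves chromatic number, $|core(H)|\ge\chi(core(H))=\chi(H)\ge\sqrt{k}$, and a single application of the $\sqrt{|core(H)|/1.95}$ branch of Corollary~\ref{cor:coreLB} yields $\lceil k^{1/4}/\sqrt{1.95}\rceil$ directly. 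Your asymmetric split and use of the $\sqrt{(k_0'+2w(C'))/2}$ branch in the second case do work, but they buy nothing beyond the symmetric argument, and the extra moving parts (verifying $w(C')=\alpha(H)$, tuning $\tau$, handling small $k$) are all avoided by the paper's one-line reduction to $|core(H)|\ge\sqrt{k}$.
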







\paragraph{Hardness for induced cycles and paths.}
We now focus on $k$-paths $P_k$ and $k$-cycles $C_k$ for fixed $k$ and provide highly improved fine-grained lower bounds for their detection under the $K$-clique Hypothesis (for $k$ larger than some constant). The results can be viewed as relating how close induced paths and cycles are to cliques. Our techniques for proving our results can be of independent interest and can potentially be implemented to get stronger hardness results for other classes of graphs. 

\begin{table}[h]
\caption{Known upper and lower bounds for paths and cycles. The algorithms for $P_3$ and $C_3$ are folklore,  for $P_4$ from \cite{4path}, for $C_4$  from \cite{williams2014finding}, and for $C_k$ and $P_k$ for $k\geq 5$ from \cite{BlaserKS18}. The old lower bounds are all from \cite{FloderusKLL15}; our new lower bounds appear in Theorem~\ref{thm:hardnesscyclepath}.}
\centering
\begin{tabular}{|c|c|c|c|c|}
\hline
Pattern & Runtime  & Lower Bound & Comb. Runtime & Comb. Lower Bound\\
\hline\hline
$P_3$, $P_4$ & $O(m+n)$   & & $O(m+n)$  & \\
\hline
$C_4$ & $\tilde{O}(n^\omega)$  &   & $O(n^3)$  & \\
\hline
$C_3$, $C_5,$ $P_5$ & $\tilde{O}(n^\omega)$   & $n^{\omega-o(1)}$  & $O(n^3)$  & $n^{3-o(1)}$ \\
\hline
$C_6,P_6$ & $C(n,4)$  & $n^{\omega-o(1)}$ & $O(n^4)$ & $n^{3-o(1)}$ \\
& $=O(n^{\omega(2,1,1)})$ &   &   & \\
\hline
$P_7$ & $C(n,6)$ & $n^{\omega-o(1)}$ & {\bf $O(n^{5})$} & $n^{3-o(1)}$ \\
  & $=O(n^{2\omega})$ &  &   &  \\
\hline
$C_7$ & $C(n,6)$ & \textcolor{red}{$C(n,5)$} {\bf [new]} & {\bf $O(n^{5})$} & \textcolor{red}{$n^{5-o(1)}$} {\bf [new]} \\
  & $=O(n^{2\omega})$ & $n^{\omega-o(1)}$  [old] &   & $n^{3-o(1)}$ [old] \\
\hline 
$C_k$, $P_k$, $k\geq 8$ & $C(n,k)$ & \textcolor{red}{$C(n,\lfloor 3k/4\rfloor-2)$} {\bf [new]} & $O(n^{k-2})$  & \textcolor{red}{$n^{\lfloor3k/4\rfloor-2-o(1)}$} {\bf [new]}\\
$C_k$, odd $k$ & $C(n,k)$ & \textcolor{red}{$C(n,\lfloor 3k/4\rfloor)$} {\bf [new]} & $O(n^{k-2})$  & \textcolor{red}{$n^{\lfloor3k/4\rfloor-o(1)}$} {\bf [new]}\\
 &   & $C(n,\lceil k/2 \rceil)$  [old] & & $n^{\lceil k/2 \rceil-o(1)}$ [old]\\
\hline
\end{tabular}
\label{tab:paths}
\end{table}


The fastest known algorithms for finding induced cycles or paths on $k$ nodes can be found in Table \ref{tab:paths}. For larger $k$, the best known algorithms are either the $k$-clique running time $C(n,k)$, or an $O(n^{k-2})$ time combinatorial algorithm by \cite{BlaserKS18}. For $k\leq 7$, slightly faster algorithms are known.

The best known conditional lower bounds so far \cite{FloderusKLL15} under the $k$-clique hypothesis stem from the fact that the complement of $C_k$ contains a $\lfloor k/2 \rfloor$-clique, and the complement of $P_k$ contains a $\lceil k/2\rceil$-clique.
These lower bounds show that the best known running time of $O(n^\omega)$ for $C_5$ and $P_5$ are likely optimal. Unfortunately, for larger $k$, these lower bounds are far from the best known running times.

We obtain polynomially higher lower bounds, raising the lower bound exponent from roughly $k/2$ to roughly $3k/4$.

\begin{restatable}{theorem}{hardnesscyclepath}
\label{thm:hardnesscyclepath}
\textbf{(Hardness of $P_k$ and $C_k$)}
Let $H$ be the complement of a $P_k$ or the complement of a $C_k$. Suppose that $t$ is the size of the maximum clique minor of $H$. Then the problem of detecting $H$ in an $O(n)$-node graph is at least as hard as finding a $(t-2)$-clique in an $n$-node graph. If $k$ is odd, then detecting an induced $C_k$ is at least as hard as finding a $t$-clique.
\end{restatable}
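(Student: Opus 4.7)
The plan is to split by parity: for $k$ odd and $H = \overline{C_k}$, obtain the $t$-clique bound directly; for the other cases (even-$k$ cycle, all paths), obtain $(t-2)$-clique via a modified reduction. As a preliminary step, I observe that detecting an induced $C_k$ (resp.\ $P_k$) in $G$ is equivalent to detecting an induced copy of $H = \overline{C_k}$ (resp.\ $\overline{P_k}$) in $\overline{G}$, so it suffices to establish hardness for induced $H$-detection throughout.

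For odd $k$ with $H = \overline{C_k}$, the crux is to show that $\overline{C_k}$ is a core. I would prove this via a parity argument: since the maximum clique of $C_k$ for $k \ge 4$ has size $2$, the independence number of $\overline{C_k}$ equals $2$, so the fibers of any endomorphism have size at most $2$, and a fiber of size $2$ must be an edge of $C_k$. A non-injective endomorphism would then yield a non-trivial partial matching of the odd cycle $C_k$ compatible with the fold, which a parity count rules out. With $\overline{C_k}$ established as a core of maximum clique minor $t$, Corollary~\ref{thm:core-hardness} produces a reduction from $t$-clique detection to non-induced $\overline{C_k}$-detection. Combining this with the fact (noted in the introduction) that induced detection is at least as hard as non-induced detection, and with complementation, yields the $t$-clique lower bound for induced $C_k$-detection.

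For the remaining cases ($H = \overline{P_k}$ for any $k$, or $H = \overline{C_k}$ for even $k$), $H$ need not be a core, so Corollary~\ref{thm:core-hardness} does not directly apply. The plan is to adapt the proof of Theorem~\ref{thm:psihardness}/Corollary~\ref{thm:core-hardness} by designating two "anchor" vertices $v_a, v_b \in V(H)$ whose roles in the reduction are fixed to specific vertices of the constructed graph. Each anchor effectively consumes one branch set of a maximum clique minor of $H$, so the remaining $t-2$ branch sets encode a $(t-2)$-clique rather than a $t$-clique; the anchor vertices are given structural signatures (distinguishing degrees or forced non-edges) that ensure any induced $H$ in the constructed graph is forced to use $v_a^*, v_b^*$ in the intended anchor roles, making the reduction work for induced (not just colorful) detection despite $H$ not being a core. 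The main obstacles are: (i) giving a clean parity-based proof that $\overline{C_k}$ is a core for all odd $k \ge 5$; (ii) in the non-core cases, choosing $v_a, v_b$ so that each anchor indeed corresponds to a singleton branch set in a maximum clique minor of $H$, keeping the loss exactly $2$; and (iii) engineering the anchor adjacencies so that no spurious induced $H$ appears outside the intended anchor positions—this is the delicate point, since without the core property one cannot rely on Marx's PSI-to-SI equivalence and must instead enforce the colorful structure by explicit adjacency constraints on $v_a^*, v_b^*$.
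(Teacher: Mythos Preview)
Your overall architecture matches the paper's: split off odd cycles (where $\overline{C_k}$ is a core, so Corollary~\ref{thm:core-hardness} applies directly), and for the remaining cases modify the PSI construction by sacrificing two branch sets to force colorfulness. But two points deserve comment.

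For odd $k$, the paper's route to ``$\overline{C_k}$ is a core'' is cleaner than the parity sketch you propose: it shows $\overline{C_k}$ is color-critical (chromatic number $(k+1)/2$, dropping upon deletion of any vertex) and then uses the general fact that color-critical graphs are cores (Lemma~\ref{lem:color-critical-core}). Your parity idea may be salvageable, but as stated it is vague; the color-criticality argument is a two-line computation.

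The real gap is in your point (iii). You propose to pin down two anchor vertices via ``structural signatures (distinguishing degrees or forced non-edges),'' but you never say what these signatures are, and for even $k$ the graph $\overline{C_k}$ is vertex-transitive, so no degree-based signature exists. The paper does \emph{not} add any extra structure. Instead it proves a structural lemma (Lemma~\ref{lem:hom-path-cycle}): any homomorphism from $\overline{C_k}$ (even $k$) or $\overline{P_k}$ to a \emph{proper} subgraph of itself must send two vertices to one of the endpoints $v_1$ or $v_k$. Given this, the fix is simply to shrink to singletons every partition $G^*_{v_i}$ with $f(v_i)\in\{f(v_1),f(v_k)\}$ (not just two individual vertices, but the two full branch sets containing $v_1,v_k$). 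Then if a copy $H^*$ of $H$ in the $H$-partite graph $G^*$ were not colorful, the natural homomorphism $H^*\to H$ would have proper image, forcing two vertices of $H^*$ into the singleton partition $G^*_{v_1}$ or $G^*_{v_k}$---a contradiction. This is the missing idea: the anchors are dictated by the homomorphism lemma, not by ad hoc signatures, and the lemma itself (proved by tracking where the two maximum cliques of $\overline{C_k}$ or $\overline{P_k}$ can land) is what makes the whole thing work.
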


The largest \emph{clique minor}\footnote{A $t$-clique minor of a graph $H$ is a decomposition of $H$ into $t$ connected subgraphs such that there is at least one edge between any two subgraphs.} of the complement of $C_k$ has size $\lfloor 3k/4\rfloor$ and of
the complement of 
$P_k$ has size at least $\lfloor (3k+1)/4\rfloor$.

Table \ref{tab:paths} summarizes our new lower bounds. Aside from obtaining a much higher conditional lower bound, our result shows that the best known combinatorial algorithm for $C_7$ detection is tight, unless there is a faster combinatorial algorithm for $5$-clique detection.
For algorithms that may be non-combinatorial, our lower bound for $C_7$ is at least $\Omega(n^{4.08})$ assuming that the current bound for $5$-clique is optimal.

\paragraph{The curious case of Four-Cycle.} 
The complexity of SI for all patterns on at most $3$ nodes in $n$-node graphs is well-understood, both in the induced and non-induced case: all patterns except the triangle and (in the induced case) the independent set can be detected in $O(n^2)$ time, whereas the triangle (and independent set in the induced case) can be detected in $O(n^\omega)$ time where $\omega<2.373$ is the exponent of matrix multiplication \cite{almanv21}. The dependence on (Boolean) matrix multiplication for triangle detection was proven to be necessary \cite{focsy}.

Table \ref{fig:smallgraphs} gives the best known algorithms and conditional lower bounds for induced SI for all $4$-node patterns. In the non-induced case, the change is that, except for the $4$-clique $K_4$, the diamond, co-claw and the paw whose runtimes and conditional lower bounds stay the same, all other patterns can be solved in $O(n^2)$ time.

\begin{figure}[ht]\centering
\includegraphics[width=8cm]{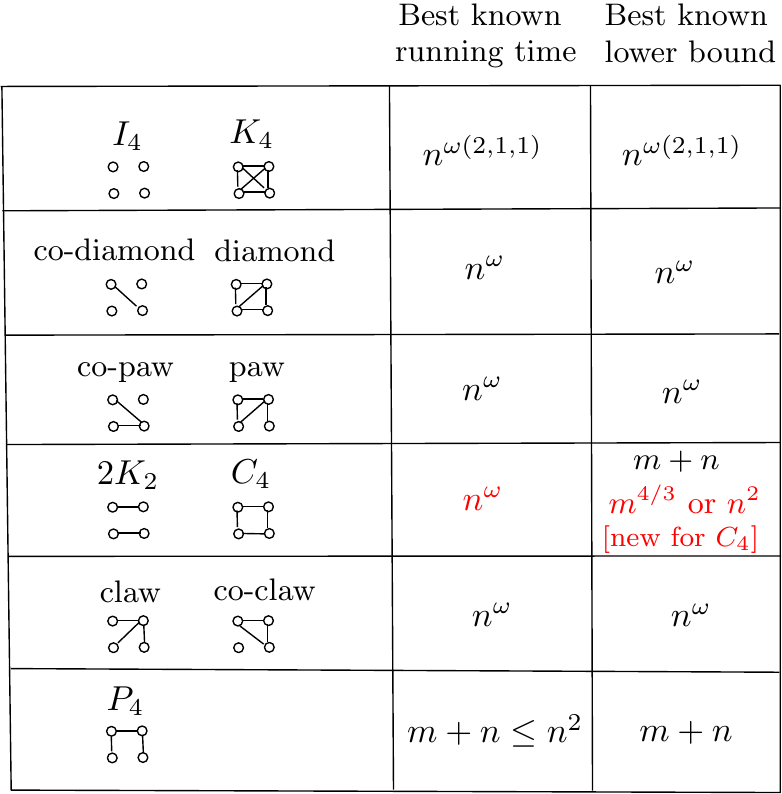}
\caption{A list of all induced $4$-node patterns with their best known algorithms and conditional lower bounds. All runtime bounds are within polylogarithmic factors, and can be found in \cite{williams2014finding} and \cite{EG04}. All lower bounds except for the new one in red are under the $k$-clique hypothesis for $k=3$ or $4$, and every lower bound $l$ as a function of $m$ or $n$ should be interpreted as $l^{1-o(1)}$. The new lower bound is under the $3$-uniform $4$-hyperclique hypothesis, only holds for $C_4$ and is in Theorem~\ref{thm:c4hard}. All upper and lower bounds are tight except that for $C_4$ and its complement.}
\label{fig:smallgraphs}
\end{figure}

All conditional lower bounds in Table \ref{fig:smallgraphs} are tight, except for the curious case of the induced $4$-Cycle $C_4$. Non-induced $C_4$ can famously be detected in $O(n^2)$ time (see e.g. \cite{non-induced-4cycle}). Meanwhile, the fastest algorithm for induced $C_4$ runs in $O(n^\omega)$ time (see e.g. \cite{williams2014finding}). 
There is no non-trivial lower bound known for $C_4$ detection (except that one needs to read the graph), and obtaining a higher lower bound or a faster algorithm for $C_4$ has been stated as an open problem several times (see e.g. \cite{EschenHSS11}).

The induced $4$-cycle is the smallest pattern $H$ whose complexity is not tightly known, under any plausible hardness hypothesis.

We make partial progress under the popular $3$-Uniform $4$-Hyperclique Hypothesis (see e.g. \cite{LincolnWW18,AbboudBDN18}) that postulates that hyperclique on $4$ nodes in an $n$ vertex $3$-uniform hypergraph cannot be detected in $O(n^{4-\eps})$ time for any $\eps>0$, in the word-RAM model of computation with $O(\log n)$ bit words. The believability of this hyperclique hypothesis is discussed at length in \cite{LincolnWW18} (see also \cite{AbboudBDN18}); one reason to believe it is that refuting it would imply improved algorithms for many widely-studied problems such as Max-$3$-SAT \cite{Williams05}.

\begin{theorem}
Under the $3$-Uniform $4$-Hyperclique Hypothesis, there is no $O(m^{4/3-\eps})$ time or $O(n^{2-\eps})$ time algorithm for $\eps>0$ that can detect an induced $4$-cycle in an $n$-node, $m$-edge undirected graph.\label{thm:c4hard}
\end{theorem}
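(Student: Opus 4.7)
The plan is to reduce the $3$-uniform $4$-hyperclique problem on an $N$-vertex hypergraph $H$ to induced $C_4$ detection on a graph $G$ with $n = O(N^2)$ vertices and $m = O(N^3)$ edges. Since $n^{2-\eps} = O(N^{4-2\eps})$ and $m^{4/3-\eps} = O(N^{4-3\eps})$, any $O(n^{2-\eps})$ or $O(m^{4/3-\eps})$ induced-$C_4$ algorithm would yield an $O(N^{4-\eps'})$ algorithm for $4$-hyperclique with $\eps'>0$, contradicting the hypothesis.

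For the construction, first apply standard color coding (at polylogarithmic overhead) to reduce to colored $4$-hyperclique, where $V(H)$ is partitioned into classes $V_1, V_2, V_3, V_4$ and one seeks a $4$-hyperclique with exactly one vertex per class. Build $G$ as a $4$-partite graph with parts $B_{i,i+1} = V_i \times V_{i+1}$ (indices mod $4$), so $|V(G)| = O(N^2)$. Between consecutive parts $B_{i,i+1}$ and $B_{i+1,i+2}$, which share the class $V_{i+1}$, include an edge from $(u,v)$ to $(v',w)$ iff $v=v'$ and $\{u,v,w\}$ is a hyperedge of $H$; add no edges between non-adjacent parts or within a part. Then $|E(G)| = O(|E(H)|)=O(N^3)$. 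Given a colored $4$-hyperclique $v_1, v_2, v_3, v_4$ with $v_i \in V_i$, the four vertices $(v_i, v_{i+1}) \in B_{i,i+1}$ form an induced $C_4$: each cycle edge is certified by exactly one triple of the hyperclique, and the two non-edges are automatic because opposite parts share no color class.

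For the backward direction, I would case-analyze induced $C_4$s in $G$ by which parts their four vertices occupy. The $4$-partite structure leaves two possible shapes: (a) one vertex in each part, visiting all four parts cyclically, in which case chasing the shared-color coordinates around the cycle produces four distinct vertices $v_i \in V_i$ satisfying all four triples, i.e., a colored $4$-hyperclique; and (b) a bipartite ``bouncing'' $C_4$ confined to two adjacent parts $B_{i,i+1}, B_{i+1,i+2}$, which corresponds to a $K_{2,2}$ in the bipartite link graph of some vertex in $V_{i+1}$. The main technical obstacle is neutralizing the spurious case (b). I would handle it either by adding asymmetric auxiliary structure to the parts (for instance, unique-identifier gadgets that force any induced $C_4$ to visit all four parts) or by a preprocessing phase that detects all link-$K_{2,2}$s in $O(N \cdot N^\omega) = o(N^4)$ time and uses them to either modify $G$ safely or to certify the answer directly. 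Once case (b) is controlled, the reduction yields the stated conditional lower bounds.
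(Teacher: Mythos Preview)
Your reduction skeleton matches the paper's exactly: vertices are pairs in $V_i\times V_{i+1}$, inter-part edges encode hyperedges, and a $4$-hyperclique maps to a transversal induced $C_4$. You also correctly identify the one obstruction: spurious induced $C_4$'s confined to two adjacent parts $B_{i,i+1}\cup B_{i+1,i+2}$, arising whenever some $v\in V_{i+1}$ has a $K_{2,2}$ in its link.

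The gap is that you do not actually eliminate case~(b); both of your suggested fixes are too vague to work as stated. A preprocessing pass that detects link-$K_{2,2}$'s does not help: their presence tells you nothing about the existence of a $4$-hyperclique, so you can neither ``certify the answer directly'' nor safely delete the offending hyperedges (which may participate in genuine hypercliques). ``Unique-identifier gadgets that force any induced $C_4$ to visit all four parts'' is the right instinct, but you have to exhibit one and verify it creates no new induced $C_4$'s---and this is precisely the content of the proof.

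The paper's missing idea is short but decisive: add \emph{intra-part} edges. Within each $B_{i,i+1}$, join $(x,y)$ and $(x',y')$ whenever $y=y'$. This single rule kills your case~(b): the two nodes $(u_1,v),(u_2,v)\in B_{i,i+1}$ in the bad configuration are now adjacent, turning the would-be $C_4$ into a diamond. One must then check that these new edges introduce no fresh induced $C_4$'s; the paper does this by a short case analysis (three nodes in one part force a triangle; two adjacent nodes in one part force equal second coordinates and lead to a contradiction; two non-adjacent nodes in one part force a diamond via a common neighbor). Once you insert this gadget and carry out that verification, your argument is complete.
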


While our result conditionally rules out, for instance, a linear time (in the number of edges) algorithm for induced $C_4$, it does not rule out an $O(n^2)$ time algorithm for induced $C_4$ in dense graphs since the number of edges in the reduction instance is $\Theta(n^{3/2})$ in terms of the number of nodes $n$.
Ideally, we would like to have a reduction from triangle detection to induced $C_4$-detection, giving evidence that $n^{\omega-o(1)}$ time is needed. Our Theorem does show this if $\omega=2$, but we would like the reduction to hold for any value of $\omega$, and for it to be meaningful in dense graphs. Note that even if $\omega=2$, a reduction from triangle detection would be meaningful, as it would say that a practical, combinatorial algorithm would be extremely difficult to obtain (or may not even exist).



All known reductions from $k$-clique to SI for other patterns $H$ (e.g. \cite{FloderusKLL15,stoc-paper,Lin15}) work equally well for non-induced SI. In particular, in the special case when $H$ is bipartite, such as when $H=C_4$, the host graph also ends up being bipartite (e.g. \cite{Lin15} for bicliques, and  \cite{FloderusKLL15,stoc-paper} more generally).


Unfortunately such reductions are doomed to fail for $C_4$. 
In bipartite graphs and more generally in triangle-free graphs, any non-induced $C_4$ is an induced $C_4$. Of course, any hypothetical fine-grained reduction from triangle detection to non-induced $C_4$ detection in triangle-free graphs, combined with the known $O(n^2)$ time algorithm for non-induced $C_4$ would solve triangle detection too fast.

The difference between induced $C_4$ and non-induced $C_4$ is that the latter calls for detecting {\em one} of the three patterns: $C_4,$ diamond or $K_4$. Could we have a reduction from triangle detection to induced $C_4$-detection in a graph that is {\em not} triangle-free, but is maybe $K_4$-free?
In order for such a reduction to work, it must be that detecting one of $\{C_4,K_4\}$ is computationally hard.

We show that such reductions are also doomed. We provide a fast combinatorial algorithm that detects one of $\{C_4,H\}$ for any $H$ that contains a triangle. The algorithm in fact runs faster than the current matrix multiplication time, which (under the $k$-clique Hypothesis) is required for detecting any $H$ containing a triangle. Thus, any tight reduction from triangle detection to induced $C_4$ must create instances that contain {\em every} induced $4$-node $H$ that has a triangle.

\begin{restatable}{theorem}{cfourgeneral}
\label{thm:c4-general}
For any $4$-node graph $H$ that contains a triangle, 
detecting one of $\{C_4,H\}$ as an induced subgraph of a given $n$-node host graph can be done in $O(n^{7/3})$ time.
If $H$ is not a diamond or $K_4$, then $H$ or $C_4$ can be detected 
 $O(n^2)$ time.
\end{restatable}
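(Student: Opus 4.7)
The plan is to use the classical $O(n^2)$-time algorithm for detecting a (not necessarily induced) $C_4$ subgraph as preprocessing, then case-analyze on its outcome and on $H \in \{K_3 \cup K_1, \text{paw}, \text{diamond}, K_4\}$.

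If no $C_4$ subgraph is found, then by the K\H{o}v\'ari--S\'os--Tur\'an theorem $|E(G)| = O(n^{3/2})$, and $G$ has no induced $K_4$ or diamond (both contain $C_4$), so for $H \in \{K_4, \text{diamond}\}$ I output NO. For $H \in \{\text{paw}, K_3 \cup K_1\}$ in this sparse regime, I list all triangles in $O(m\sqrt n)=O(n^2)$ time via Chiba--Nishizeki, exploiting the $O(\sqrt n)$ arboricity of $C_4$-free graphs. The key structural observation is that in a $C_4$-free graph, every vertex $v$ outside a triangle $T=\{a,b,c\}$ is adjacent to at most one vertex of $T$ (otherwise a diamond arises, giving a $C_4$), so by inclusion--exclusion $|N(a)\cup N(b)\cup N(c)|=d(a)+d(b)+d(c)-3$. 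Hence an induced paw exists iff some listed triangle has $d(a)+d(b)+d(c)\geq 7$, and an induced $K_3\cup K_1$ exists iff some listed triangle has $d(a)+d(b)+d(c)<n+3$; each test is $O(1)$ per triangle.

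If a $C_4$ subgraph on $\{a,b,c,d\}$ is found, I inspect its induced structure in $O(1)$ time: it is $C_4$, diamond, or $K_4$. If this matches $C_4$ or the required $H$, output YES. Otherwise, the 4-set already witnesses a triangle in $G$, which I use to bootstrap. For $H \in \{\text{paw}, K_3 \cup K_1\}$ I would combine the triangle with Olariu's characterization of paw-free graphs (every component is either triangle-free or complete multipartite) to decide $\{C_4,H\}$-freeness component-by-component in $O(n^2)$: each component is tested in $O(|C|^2)$ time for being a $C_4$-free complete multipartite (equivalently, at most one part of size $\geq 2$), and any component already known to contain a triangle is scanned directly for paw or $K_3\cup K_1$ extensions. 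For $H \in \{\text{diamond}, K_4\}$ I would use a degree-threshold at $\Delta=n^{2/3}$: enumerate pairs inside the neighborhoods of low-degree vertices in $\sum_{w : d(w) \leq \Delta} d(w)^2 \leq n\Delta^2 = n^{7/3}$ total time, while the at most $n/\Delta=n^{1/3}$ high-degree vertices are each processed in $O(n^2)$ time, yielding $O(n^{7/3})$ overall.

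The main obstacle is correctness of the dense-case search. For the $O(n^{7/3})$ branch, I must argue that every induced $C_4$, diamond, and $K_4$ either has a low-degree witness vertex $w$ visited by the cherry enumeration $u$--$w$--$v$, or is concentrated on the $n^{1/3}$ high-degree vertices which are processed exhaustively; this reduces to a standard counting argument that any 4-vertex induced pattern missing a low-degree vertex must be contained in the high-degree part. For the $O(n^2)$ branch, the structural route via Olariu, together with the bootstrapped triangle from the found $C_4$ subgraph and careful component-wise analysis, must be assembled into a clean decision procedure that correctly handles components without triangles as well.
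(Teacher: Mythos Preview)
Your sparse branch (no non-induced $C_4$) is clean and correct, and in fact tidier than the paper's treatment of the paw and co-claw cases there: the degree-sum test $d(a)+d(b)+d(c)\geq 7$ (resp.\ $<n+3$) over all listed triangles is a nice shortcut that the paper does not use. The dense branch, however, breaks in two places.

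For $H\in\{K_4,\text{diamond}\}$, your claim that there are at most $n/\Delta=n^{1/3}$ vertices of degree $\geq\Delta=n^{2/3}$ is simply false once you leave the $C_4$-free regime. In a dense graph (e.g.\ $K_n$, or any graph with $\Theta(n^2)$ edges) every vertex can have degree exceeding $n^{2/3}$, so your ``exhaustive'' processing of high-degree vertices costs $\Theta(n^3)$. A degree threshold only bounds the high-degree count via $\sum_v d(v)=2m$, which gives nothing here. The paper instead thresholds on the size of a maximal \emph{independent set} $I$: if $|I|\geq T=n^{2/3}$ it removes $I$ and recurses (at most $n/T$ rounds, each $O(n^2)$, for $O(n^3/T)$ total); if $|I|<T$ it uses the small $I$ to decompose $V\setminus I$ into cliques with controlled overlap and at most one ``non-clique'' edge between any two cliques, after which a sequence of local checks runs in $O(nT^2)$. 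Balancing gives $T=n^{2/3}$. Getting that decomposition and the ensuing case analysis right is the actual content of the $O(n^{7/3})$ bound; a vertex-degree threshold does not substitute for it.

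For $H=K_3\cup K_1$ (co-claw), Olariu's theorem characterizes \emph{paw}-free graphs, not co-claw-free graphs, so invoking it for co-claw is a mismatch. Moreover, ``scan directly for paw or $K_3\cup K_1$ extensions'' from a single found triangle only tests patterns containing that particular triangle, which is not sufficient. The paper handles co-claw by stripping universal vertices, growing a found triangle into a maximal clique $C$, and then doing a short but genuine case analysis on the non-neighbor sets $S(u)\subseteq C$ of vertices $u\notin C$ to show that any edge in $G\setminus C$ forces an induced $C_4$ or co-claw; for paw it argues separately via the structure of $N(v)$ for a vertex $v$ whose neighborhood contains an edge. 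Your Olariu-based route can likely be made to work for paw specifically (a $\{C_4,\text{paw}\}$-free component with a triangle must be complete multipartite with at most one part of size $\geq 2$), but you would still need a separate argument for co-claw.
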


The only case of Theorem \ref{thm:c4-general} that was known is that for $\{C_4,$ diamond $\}$. Eschen et al. \cite{EschenHSS11} considered the recognition of $\{C_4,$ diamond $\}$-free graphs and gave a combinatorial $O(n^{7/3})$ time algorithm for the problem. We show a similar result for every $H$ that contains a triangle. 

The $C_4$ OR $H$ problem solved by our theorem above is a special case of the subgraph isomorphism problem in which we are allowed to return one of a {\em set of possible patterns}.  This version of SI is a natural generalization of non-induced subgraph isomorphism in which the set of patterns are all supergraphs of a pattern. This generalized version of SI
has practical applications as well. Often computational problems needed to be solved in practice are not that well-defined, so that for instance
you might be looking for something like a matching or a clique, but maybe you are okay with extra edges or some edges missing.
In graph theory applications related to graph coloring, one is often concerned with $\{H,F\}$-free graphs for various patterns $H$ and $F$ (e.g. \cite{KarthickM18,DabrowskiP18,DabrowskiDP17}). Recognizing such graphs is thus of interest there as well. 
 We call the problem of detecting one of {\em two} given induced patterns, ``Paired Pattern Detection''. 

Intuitively, if a set of patterns all contain a $k$-clique, then returning at least one of them should be at least as hard as $k$-clique. While this is intuitively true, proving it is not obvious at all. In fact, until recently \cite{stoc-paper}, it wasn't even known that if a single pattern $H$ contains a $k$-clique, then detecting an induced $H$ is at least as hard as $k$-clique detection. We are able to reduce $k$-clique in a fine-grained way to ``Subset Pattern Detection" for any subset of patterns that all contain the $k$-clique as a subgraph\footnote{Our reduction works in the weaker non-induced version and so it works for the induced version as well.}.

\begin{restatable}{theorem}{setdetectioncore}
\label{thm:set-detection-cor}
Let $S$ be a set of patterns such that every $H\in S$ contains a $k$-clique. Then detecting whether a given graph contains some pattern in $S$ is at least as hard as $k$-clique detection.
\label{thm:cliqueset}
\end{restatable}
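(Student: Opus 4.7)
The plan is to reduce partitioned $k$-clique (which is fine-grained equivalent to $k$-clique) to $S$-detection via a single construction. Given a $k$-partite graph $G$ with independent parts $V_1,\ldots,V_k$, I build $G^*$ by attaching to $G$, for each $H \in S$, a disjoint ``gadget'' $\Gamma_H$. Concretely, fix a designated $k$-clique $C_H = \{v_1^H,\ldots,v_k^H\} \subseteq V(H)$ with $v_l^H$ assigned to part $l$, let $U_H = V(H)\setminus C_H$, and introduce $\Gamma_H = \{u' : u \in U_H\}$; the $\Gamma_H$ are pairwise vertex-disjoint. The edges of $G^*$ consist of: all edges of $G$; inside each $\Gamma_H$, the same adjacency as $H[U_H]$; between $\Gamma_H$ and $V_l$, the gadget vertex $u'$ is adjacent to every vertex of $V_l$ iff $uv_l^H \in E(H)$; and no edges between distinct gadgets.

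For the forward direction, a partitioned $k$-clique $(w_1,\ldots,w_k) \in V_1\times\cdots\times V_k$ extends, via $v_l^H \mapsto w_l$ and $u \mapsto u'$, to a non-induced copy of $H$ in $G^*$ for every $H \in S$, so $G^*$ contains every pattern in $S$. The backward direction is the heart of the proof. Suppose $G^*$ contains some $H' \in S$; because different gadgets are disjoint and non-adjacent, any connected subgraph of $G^*$ meets at most one $\Gamma_H$, so the component of the embedded $H'$ containing its designated $k$-clique $C_{H'}$ lies in $V(G)\cup\Gamma_H$ for a single $H \in S$. The image of $C_{H'}$ is then a $K_k$ inside $V(G)\cup\Gamma_H$; if this $K_k$ lies entirely in $V(G)$, the parts being independent immediately yields a partitioned $k$-clique in $G$.

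The main obstacle is ruling out the remaining case, where $r \geq 1$ vertices of $C_{H'}$ land in $\Gamma_H$. The construction then forces the used gadget vertices to be an $r$-clique $\{u_{j_1},\ldots,u_{j_r}\} \subseteq U_H$, the other $k-r$ vertices of $C_{H'}$ to form a $(k-r)$-clique in distinct parts $V_{l_1},\ldots,V_{l_{k-r}}$ of $G$, and the union $\{u_{j_1},\ldots,u_{j_r},v_{l_1}^H,\ldots,v_{l_{k-r}}^H\}$ to be a $k$-clique in $H$. A priori this configuration could arise even without $G$ containing a partitioned $k$-clique (for instance, when $H = K_{k+1}$ has many $k$-cliques involving $U_H$). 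To conclude I will exploit the fact that the \emph{entire} $H'$, not just $C_{H'}$, must embed: $\Gamma_H$ contains only $|V(H)|-k$ vertices with a rigid adjacency pattern to the parts, and the remaining $|V(H')|-k$ vertices of $H'$, forced mostly into $V(G)$, drag their edges with the gadget-image vertices of $C_{H'}$ into specific parts, ultimately producing a partitioned $k$-clique in $G$. Carrying out this case analysis uniformly over all $H,H' \in S$---so that spurious $K_k$'s in $G^*$ never extend to a full copy of any pattern in $S$ absent a partitioned $k$-clique in $G$---is the delicate technical step.
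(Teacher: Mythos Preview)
Your construction has a genuine gap that the proposed ``case analysis'' cannot close. Take $S=\{K_k,\,2K_k\}$ (both patterns contain a $k$-clique). Your gadget $\Gamma_{K_k}$ is empty, while $\Gamma_{2K_k}$ is a copy of $U_{2K_k}=K_k$, disjoint from $V(G)$. Hence $G^*$ \emph{always} contains a copy of $K_k\in S$, namely $\Gamma_{2K_k}$ itself, regardless of whether $G$ has a partitioned $k$-clique; the reduction outputs YES unconditionally. Here $H'=K_k$ has no vertices outside $C_{H'}$, so there is nothing to ``drag'' back into $V(G)$, and the embedding lives entirely inside a gadget. The same failure occurs whenever some $H\in S$ has a $k$-clique inside $V(H)\setminus C_H$ and some other $H'\in S$ (or the same $H$) can embed into $\Gamma_H$ alone or into $\Gamma_H$ together with a few $G$-vertices without forcing a $K_k$ in $G$. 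A secondary issue: the claim that ``any connected subgraph of $G^*$ meets at most one $\Gamma_H$'' is false, since distinct gadgets are linked through $V(G)$.

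What you are attempting is essentially the Floderus--Kowaluk--Lingas--Lundell idea (a designated $k$-clique plus a pendant copy of the rest), and the paper explicitly explains why that idea only works when the chosen $k$-clique is disjoint from all other $k$-cliques in the pattern. The paper's proof is structurally different: it first proves the general Theorem~\ref{thm:set-hardness}, which selects a particular $H\in S$ by looking at a source strongly connected component of the homomorphism digraph on $S$, and reduces detection of $C=\operatorname{core}(H)$ to $S$-detection using the $C$-covering machinery (so that \emph{any} $H'\in S$ appearing in $G^*$ either has no homomorphism to $H$ at all, or forces a copy of $C$ via a minimality-of-covering-number argument). The present theorem then follows in one line: since each $H\in S$ contains $K_k$ and $K_k$ is its own core, every $\operatorname{core}(H)$ contains $K_k$, so $C$-detection is already $k$-clique-hard. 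To salvage your approach you would need, at minimum, a mechanism ensuring no pattern of $S$ embeds into the union of gadgets and $V(G)$ without using a partitioned $K_k$ from $G$; the paper's $C$-covering argument is exactly such a mechanism, and it is not clear how to replace it by a direct gadget construction.
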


While having a clique in common makes a subset of patterns hard to detect, intuitively, if several patterns are very different from each other, then detecting one of them should be easier than detecting each individually. We make this formal for Paired Pattern Detection in $n$ node graphs for $k\leq 4$ as follows:

\begin{itemize}
\item Paired Pattern Detection is in $O(n^2)$ time for {\em every pair} of $3$ node patterns. Moreover, for all but two pairs of patterns, it is actually in linear time.
\item Paired Pattern Detection for any pair of $4$-node patterns is in $\tilde{O}(n^\omega)$ time, whereas the fastest known algorithm for $4$-clique runs in supercubic, $O(n^{\omega(1,2,1)})$ time where $\omega(1,2,1)\leq 3.252$ \cite{GallU18} is the exponent of multiplying an $n\times n^2$ by an $n^2\times n$ matrix.
\item There is an $O(n^2)$ time algorithm that solves Paired Pattern Detection for $\{H,\bar{H}\}$ for any $4$-node $H$, where $\bar{H}$ is the complement of $H$. 
\end{itemize}

 The last bullet is a generalization of an old Ramsey theoretic result of Erd\"{o}s and Szekeres~\cite{erdosramsey}
made algorithmic by Boppana and Halld{\'{o}}rsson~\cite{BoppanaH92}. The latter shows that in linear time for any $n$-node graph, one can find either a $\log(n)$ size independent set or a $\log(n)$ size clique.
 Thus, for every constant $k$ and large enough $n$, there is a linear time algorithm that either returns a $k$-clique or an $I_k$.
 
We note that our generalization for $\{H,\bar{H}\}$ cannot be true in general for $k\geq 5$: both $H$ and its complement \footnote{For $k=5$, consider $H$ to be a triangle and two independent nodes. Both $H$ and its complement contain a triangle.} can contain a clique of size $\lceil k/2\rceil\geq 3$, and thus by our Theorem~\ref{thm:cliqueset}, their Paired Pattern Detection is at least as hard as $\lceil k/2\rceil$-clique, and thus is highly unlikely to have an $O(n^2)$-time algorithm.

\subsection{Related work}
There is much related work on the complexity of graph pattern detection in terms of the {\em treewidth} of the pattern. Due to the Color-Coding method of Alon, Yuster and Zwick \cite{alon1995color}, it is known that if a pattern $H$ has treewidth $t$, then detecting $H$ as a {\em non-induced} pattern can be done in $O(n^{t+1})$ time.
This implies for instance that non-induced $k$-paths and $k$-cycles can be found in $2^{O(k)}\textrm{poly}(n)$ time.

Marx \cite{marx2007can} showed that there is an infinite family of graphs of unbounded treewidth so that under ETH, (non-induced) SI on these graphs requires $n^{\Omega(t/\log t)}$ time where $t$ is the treewidth of the graph. Recently, Bringmann and Slusallek \cite{BringmannS21} showed that under the Strong ETH, for every $\eps>0$, there is a $t$ and a pattern $H$ of treewidth $t$ so that detecting $H$ cannot be done in $O(n^{t+1-\eps})$ time. That is, for some non-induced patterns, $n^{t+1}$ is essentially optimal.

In the induced case, many patterns are also easier than $k$-clique, e.g. for $k=3,4,5,6$, any $H$ that is not the $k$-independent set or the $k$-clique can be found in the current best running time $C(n,k-1)$ for $(k-1)$-clique \cite{williams2014finding,BlaserKS18,stoc-paper}.
For $k\geq 7$, Bl\"{a}ser et al.~\cite{BlaserKS18} showed the weaker result that all $k$-node $H$ that are not the clique or independent set can be detected in $O(n^{k-1})$ time combinatorially, whereas the best known combinatorial algorithms for $k$-clique run in $n^{k-o(1)}$ time.

For induced pattern detection for patterns of size $k\ge 8$, the best algorithm for almost all of the patterns has the same running time as $k$-clique detection. If we only resort to combinatorial algorithms there is a slight improvement: any pattern that is not a clique or independent set can be detected in $O(n^{k-1})$ time \cite{BlaserKS18}.

Manurangsi, Rubinstein and Schramm~\cite{ManurangsiRS21} formulated a brand new hypothesis on the hardness of planted clique. This new hypothesis implies many results that are not known to hold under standard hypotheses such as ETH or Strong ETH, including that for every $k$-node $H$, its induced pattern detection  problem requires $n^{\Omega(k)}$ time. While identifying new plausible hypotheses is sometimes worthwhile, our work strives to get results under standard widely-believed hypotheses, and to uncover combinatorial relationships between $H$-pattern detection and clique-detection, as cliques are the hardest patterns to detect.

Note that the results of Marx~\cite{marx2007can} and Bringmann and Slusallek~\cite{bringmann2021current} show hardness for specific classes of patterns, whereas the the results of Dalirrooyfard et al.~\cite{stoc-paper}, Manurangsi et al.~\cite{ManurangsiRS21} and this paper aim to determine hardness for \emph{any} $k$-node pattern. Our paper primarily focuses on giving lower bounds for fixed patterns such as $C_4,C_7$ etc., whereas the focus of \cite{ManurangsiRS21} is more asymptotic.

\subsection{Organization of the paper} In Section \ref{sec:technical-overview} we give a high level overview of our techniques, and a comparison to the past techniques. 
In Section \ref{sec:prelim} we give the necessary definitions. In Section \ref{sec:lower-bounds} we first state our hardness result for PSI (Theorem \ref{thm:psihardness}) in subsection \ref{sec:hardnessofpsi}, and then in subsection \ref{sec:corehardness} we state our hardness result for SI (Theorem \ref{thm:general-hardness}). Finally, in subsection \ref{sec:paths-and-cycles} we show hardness for paths and cycles (Theorem \ref{thm:hardnesscyclepath}).
We state our results on Paired Pattern Detection in Section \ref{sec:paired} by first showing hardness for Subset Pattern Detection (Theorem \ref{thm:set-hardness}) and then we state our algorithmic results. In section \ref{sec:c4lb} we state our lower bound for induced four cycle detection from $3$-Uniform $4$-Hyperclique Hypothesis.

\label{sec:prelim}
For an integer $k$, let $P_k,C_k,K_k$ and $I_k$ be the path, cycle, clique and independent set on $k$ nodes.

Let $G$ be a graph and $S\subseteq V(G)$ be a subgraph of it. For every node $v\in V(G)$, define $N_S(v)$ to be the neighbors of $v$ in $S$. Define $d_S(v)=|N_S(v)|$.

A $k$-partite graph $G$ can be decomposed into $k$ partitions $G_1,\ldots,G_k$ where each $G_i$ is an independent set. For a pattern $H$ of size $k$ with vertices $v_1,\ldots,v_k$, we say that a graph $G$ is $H$-partite if it is a $k$-partite graph with $G_1,\ldots,G_k$ as its partitions such that there is no edge between $G_i$ and $G_j$ if $v_iv_j$ is not an edge in $H$. 

Let $G$ be an $H$-partite subgraph for a pattern $H$. We say that subgraph $H'$ of $G$ is a \textit{colorful} copy of $H$ if $H'$ has exactly one node in each partition of $G$. Note that if the vertices of $H'$ are $u_1,\ldots,u_k$ where $u_i$ is a copy of $v_i$ for all $i$, then $u_i$ must be in $G_i$ for all $i$\footnote{Note that this statement and many more in the paper are true up to automorphisms}. This is because for every $i,j$ where $v_iv_j$ is an edge, there must be an edge between the vertex of $H'$ that is in $G_i$ and the vertex of $H'$ that is in $G_j$. Otherwise, the number of edges of $H'$ is going to be smaller than the number of edges of $H$. 

For a set of patterns $S$, by (induced) \textit{$S$-detection} we mean finding a (induced) copy of one of the patterns in $S$, or indicating that there is no copy of any of the patterns in $S$.

Let $f:\{1,\ldots,c\}\rightarrow V(H)$ be a \textit{proper} coloring of the graph $H$ if the color of any two adjacent nodes is different. Let the \textit{chromatic number} of a graph $H$ be the smallest number $c$ such that there exists a proper coloring of $H$ with $c$ colors. We say that a graph $H$ is \textit{color critical} if the chromatic number of $H$ decreases if we remove any of its nodes. 

We call the subgraph $C$ of a graph $H$ a {\em core} of $H$ if there is a homomorphism $H\rightarrow C$ but there is no homomorphism $H\rightarrow C'$ for any proper subgraph $C'$ of $C$. Recall that a graph which is its own core is called simply a {\em core}. Moreover, any graph has a unique core up to isomorphisms, and the core of a graph is an {\em induced} subgraph of it \cite{hell1992core}.

\section{Technical Overview}\label{sec:technical-overview}
Here we give high level overview of our techniques. 
To understand our lower bounds for $k$-node patterns, we should first give an overview of the techniques used in \cite{stoc-paper}. In their first result \cite{stoc-paper} shows that if $H$ is $t$-chromatic and has a $t$-clique, then it is at least as hard to detect as a $t$-clique.
\paragraph{Reduction (1) \cite{stoc-paper}.} To prove the result of \cite{stoc-paper}, suppose that we want to reduce detecting a $t$-clique in a host graph $G=(V,E)$ to detecting $H$ in a graph $G^*$ built from $G$ and $H$. We build $G^*$ by making a copy $G^*_h$ of the vertices of $G$ for each node $h\in H$ as an independent set. Then if $hh'\in E(H)$, we put edges between $G^*_h$ and $G^*_{h'}$ using $E$: if $uw\in E$, then we connect the copy of $u$ in $G^*_h$ to the copy of $w$ in $G^*_{h'}$. Note that we have edges between $G^*_h$ and $G^*_{h'}$ if and only if $hh'$ is an edge and this enforces an encoding of $H$ in $G^*$ (we refer to $G^*$ as being $H$-partite).

To show that this reduction works, first suppose that there is a $t$-clique $\{v_1,\ldots,v_t\}$ in $G$. To prove that there is a $H$ in $G^*$, we consider a $t$ coloring of the vertices of $H$, and then we pick a copy of $v_i$ from $G^*_h$ if $h$ has color $i$. Using  the structure of $G^*$ and the fact that no two adjacent nodes in $H$ have the same color, one can show these $|H|$ nodes form a copy of $H$. For the other direction, suppose that there is a copy of $H$ inside $G^*$. This copy contains a $t$-clique $\{w_1,\ldots,w_t\}$. Since each $G^*_h$ is an independent set, no two nodes of the $t$-clique are in the same $G^*_h$. Moreover, the edges in $G^*$ mimic the edges in $G$ and this is sufficient to conclude that no two nodes of the $t$-clique are copies of the same node in $G$, and the original nodes in $G$ that the nodes $w_i$ are the copies of, form a $t$-clique. See Figure \ref{fig:reduction1}.

\begin{figure}
    \centering
    \includegraphics[width=\linewidth]{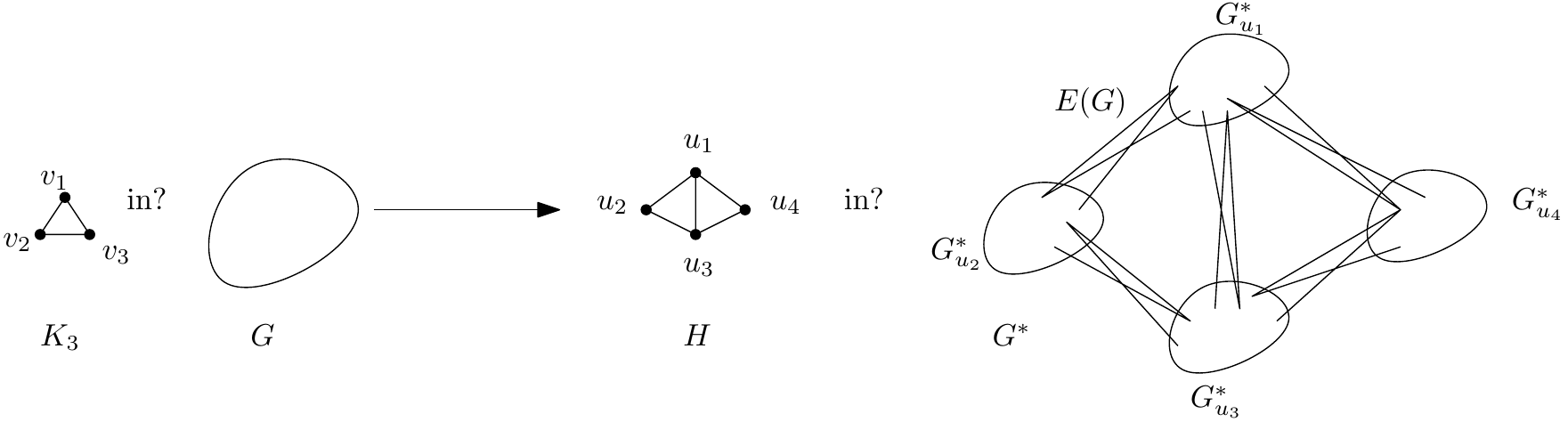}
    \caption{Reduction (1) construction: Reducing $K_3$-detection in $G$ to $H$-detection in $G^*$ for $H$ being the Diamond. The pattern $H$ is $3$ colorable and has a $3$-clique.}
    \label{fig:reduction1}
\end{figure}

Now we show how we modify this reduction to prove our first result, Theorem \ref{thm:psihardness}.
\paragraph{Reduction (2).}
We prove that if the size of the largest clique minor of a pattern $H$ is $\mu(H)$, then detecting a $\mu(H)$-clique in a graph $G$ can be reduced to detecting a colorful copy of $H$ in a graph $G^*$ that is constructed from $G$ and $H$ (Theorem \ref{thm:psihardness}). Reduction (1) above is good at catching cliques that are in the pattern $H$, but $H$ might not have a clique of size $\mu(H)$ in it, so we need a way to encode the clique minor of $H$ in $G^*$ so that it translates to a clique in $G$. To do that, we use a second method to put edges between $G^*_h$ and $G^*_{h'}$ when $hh'$ is an edge. We consider a clique minor of $H$ of size $\mu(H)$. Note that the clique minor partitions the vertices of $H$ into connected subgraphs with at least one edge between every two partition. Now if $hh'$ is an edge in $H$ and $h$ and $h'$ are in the same partition in the clique minor, we want to treat them as one node. So we put a ``matching" between $G^*_h$ and $G^*_{h'}$: for any node $v\in V(G)$, we put an edge between the copy of $v$ in $G^*_h$ and the copy of $v$ in $G^*_{h'}$. This way we show that whenever there is a colorful copy of $H$ in $G^*$, if $h$ and $h'$ are in the same partition of the clique minor of $H$, the vertices that are selected from $G^*_h$ and $G^*_{h'}$ must be copies of the same node in $G$. This means that each clique minor partition of $H$ represents one node in $G$. For $h$ and $h'$ that are not in the same clique minor partition, we put edges between $G^*_h$ and $G^*_{h'}$ the same as Reduction (1) (mimicking $E$). Using the rest of the properties of the construction, we show that the set of nodes that each clique minor partition represents are all distinct, and they form a $\mu(H)$-clique in $G$. 

Note that \cite{stoc-paper} uses the idea in Reduction 2 (a second method to define the edges of $G^*$) in a separate result. However the use of clique minors in \cite{stoc-paper} is indirect; it is coupled with the chromatic number and proper colorings of $H$, and in our results we directly use clique minors without using any other properties, thus avoiding the Hadwiger conjecture. 


Another thing to note about Reduction (2) is that we are reducing a clique detection problem to a ``partitioned" subgraph isomorphism (PSI) problem. The reduction immediately fails if one removes the partitioned constraint. The reason is that we no longer can assume that if the reduction graph $G^*$ has a copy of $H$, then the nodes are in different vertex subsets $G^*_h$. If $G^*$ has a copy of $H$ and two nodes $v,u$ of this copy are in one vertex subset $G^*_h$, then we don't know if $v$ and $u$ are adjacent in $G$ or not. This can get in the way of finding a clique of the needed size in $G$. So if we want to get any result stronger than Reduction (1) for SI (and not PSI), we need to add new ideas. We introduce some of these new ideas below.

\paragraph{Reduction (3): paths and cycles.}
In Theorem \ref{thm:hardnesscyclepath} we show that if $H$ is the complement of a cycle or a path, then we can reduce detecting a $(\mu(H)-2)$-clique in a graph $G$ to detecting a copy of $H$ in a graph $G^*$ constructed from $G$ and $H$. 

As mentioned above, removing the partitioned constraint from reduction (2) doesn't directly work. However, when the graph is a core, it does work, and that is because PSI and SI are equivalent for cores \cite{marx2007can}. 
When $H$ is a core, there is only one homomorphism from $H$ to itself, which means that there is only one type of ``embedding" of $H$ in the reduction graph $G^*$, and it is the embedding with exactly one vertex in each vertex subset $G^*_h$ of $G^*$. However, when $H$ is not a core, there can be multiple embeddings of $H$ in $G^*$, and these embeddings do not necessarily result in finding a copy of a $t$-clique in $H$, for $t\approx \mu(H)$.

In order to solve this issue of multiple embeddings, we ``shrink" some of the vertex subsets ($G^*_h$s) of the reduction graph $G^*$. More formally, we replace some of these subsets in $G^*$ by a single vertex. We do it in such a way that the only embedding of $H$ in $G^*$ is the one with exactly one vertex in each subset. This way, the rest of the argument of Reduction (2) goes through. There is a cost to shrinking these subsets: shrinking more subsets results in reducing the size of the clique that we reduce from. So the harder part of this idea is to carefully decide which partitions to shrink, so that we only lose a small constant in the size of the clique detection problem that we are reducing from. 

Recall that in Reduction (2) we consider a $\mu(H)$ clique minor of $H$ which partitions the vertex set of $H$ into $\mu(H)$ connected subgraphs. Here we observe that for $H$ that is the complement of a path or a cycle, we can select two particular partitions of the clique minor, and shrink vertex subsets $G^*_h$ for vertices $h$ that belong to one of these two partitions. This way we eliminate all the unwanted embeddings of the pattern $H$ in $G^*$, and reduce $(\mu(H)-2)$-clique detection in $G$ to $H$ detection in $G^*$. We note that the techniques in Reduction (3)  are of independent interest and can be potentially used for other graph classes.


We now move on to our next reduction.
\paragraph{Reduction (4).}
Our next main result is Theorem \ref{thm:general-hardness}, which states that if $C$ is the core of the pattern $H$, then detecting $C$ in a graph $G=(V,H)$ can be reduced to detecting $H$ in a graph $G^*$ which is constructed from $G$ and $H$.

First note that Reduction (1) doesn't directly work here. This is because if $G$ has a copy of $C$, we have no immediate way of finding a copy of $H$ in $G^*$. Recall that in Reduction (1) we used a coloring property of $H$ to do this.

As a first attempt to such a reduction, one might use the following idea of Floderus et al. \cite{FloderusKLL15}. They showed that any pattern that has a $t$-clique that is disjoint from all the other $t$-cliques in the pattern is at least as hard as $t$-clique to detect. 
Here we explain their idea in the context of reducing $core(H)$-detection to $H$ detection. Let $C'$ be a copy of $C$ in $H$. The idea is to build the reduction graph of Reduction (1) using $C'$ as the pattern, and to add the rest of the pattern $H$ to it.
More formally, 
for any node $h$ in $C'$, let $G^*_h$ be a copy of $V$, the set of vertices of $G$. Put edges between $G^*_h$ and $G^*_{h'}$ same as before if $hh'$ is an edge. Call this graph $G^*_{C'}$. To complete the construction of $G^*$, add a copy of the subgraph $H\setminus C'$ to $G^*_{C'}$, and connect a vertex $h'$ in this copy to all the nodes in $G^*_h$ for $h\in C'$ if $hh'$ is an edge in $H$. 

The reason we construct the Reduction (1) graph on $C'$ is that if $G$ has a copy of $C$, then we can find a copy of $C$ in $G^*_{C'}$ using the arguments in Reduction (1). This copy of $C$ and all the vertices in $G^*\setminus G^*_{C'}=H\setminus C'$ form a copy of $H$. For the other direction, suppose that there is a copy $H^*$ of $H$ in $G^*$. We hope that this copy contains a subgraph $C''$ that is completely inside $G^*_{C'}$, so that then this leads us to a copy of $C$ in $G$ using the properties of $G^*_{C'}$ and the fact that $C$ is a core. However, such a construction cannot guarantee this, and in fact there might be no copies of $C$ in $H^*$ that are completely in $G^*_{C'}$.  

So we need to find a subgraph $H'$ of $H$, so that if we build the reduction graph $G^*_{H'}$ of Reduction (1) on it, it has the property that if $G $ has a copy of $C$, then we can find a copy of $H'$ in $G^*_{H'}$. 

To do this, we simplify and use an idea of \cite{stoc-paper}. In particular, \cite{stoc-paper} introduces the notion of $(K_t,F)$-minor colorability of a pattern $B$, which is a coloring of $B$ with $t$ colors such that the coloring imposes a $t$-clique minor on any copy of $F$ in $B$. Then using this definition, one finds a minimal covering of the graph $H$ with $(K_t,F)$-minor colorable subsets and one argues that one can take one of these subsets as $H'$.

We notice that the properties that \cite{stoc-paper} uses relating the chromatic number and the clique minor of a pattern in this construction can be summarized into the core of patterns. We introduce the notion of $F$-coloring, which simply says that if $B$ is $F$-colorable then there is a coloring such that any copy of $F$ in $B$ is a colorful copy under this coloring. Then we cover $H$ with minimal number of $C$-colorable subsets. We show that we can take one of these subsets as $H'$.

Finally, we generalize Theorem \ref{thm:general-hardness} to the problem of detecting a pattern from a set $S$ of patterns in Theorem \ref{thm:set-hardness}. We show that if $S$ is a set of patterns, there is a pattern $H\in S$, such that detecting the core of $H$, $C$, in a graph $G$ can be reduced to detecting any pattern from $S$ in a graph $G^*$ constructed from $G$ and $S$. In fact, $G^*$ is the reduction graph of Reduction (4) on $H$ as the pattern. The main part of Theorem \ref{thm:set-hardness} is to find the appropriate $H$ in $S$. In order to find this pattern $H$, we look at homomorphisms between the patterns in $S$. In particular, we form a graph with nodes representing patterns in $S$ and directed edges representing homomorphisms. We look at a strongly connected component of this graph that has no edges from other components to it, so there is no homomorphism from any pattern outside this component to any pattern inside the component. We show that all the patterns in this component have the same core and we show that the pattern $H$ can be any of the patterns in this component.






\section{Lower bounds}
\label{sec:lower-bounds}
\subsection{Hardness of PSI}
\label{sec:hardnessofpsi}
In this section we prove Theorem \ref{thm:psihardness}, which reduces a $\eta(H)$-clique detection to $H$-detection, where $\eta(H)$ is the size of the largest clique minor of $H$. 

We can represent a clique minor of $H$ of size $t$ by a function in the following definition. 

\begin{definition}
\label{def:kt-minor-function}
Let $f:V(H)\rightarrow\{1,\ldots,t\}$ be a function such that for any $i\in \{1,\ldots,t\}$, the preimage of $i$, $f^{-1}(i)$, induces a connected subgraph of $H$ and for every $i,j\in \{1,\ldots,t\}$, there is at least one edge between the preimages $f^{-1}(i)$ and $f^{-1}(j)$. We call such $f$ a {\em $K_t$-minor function} of $H$. 
\end{definition}

\begin{figure}
    \centering
    \includegraphics{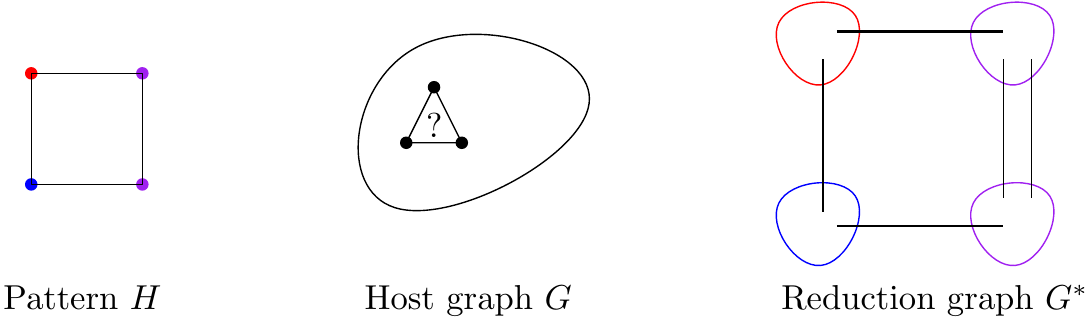}
    \caption{Example of the Reduction of Theorem \ref{thm:psihardness} for pattern $C_4$ with a $K_3$-minor. From left to right: The pattern $H$ with with a $K_3$ minor function shown as a coloring, the host graph $G$ in which we want to find a triangle and the reduction graph $G^*$ built from $G$ and $H$. The bold edges represent the edges in $G$, whereas the double edge represents a perfect matching. Each of the four colored parts in $G^*$ are a copy of $G$.}
    \label{fig:ktminor}
\end{figure}

One can think of $f$ as a coloring on vertices of $H$ that imposes a clique minor on $H$. Figure \ref{fig:ktminor} shows an example of a $K_3$-minor function of $C_4$ as a coloring. In the reduction we are going to consider a $K_t$-minor function $f$ for $t=\eta(H)$. We can find a maximum clique minor of $H$ and its associated function in $O_k(1)$\footnote{Any function that has dependency on $k$ and no other parameter is of $O_k(1)$} as follows: 
Check for all functions $f:V(H)\rightarrow\{1,\ldots,k\}$ if $f$ is a $K_t$-minor function for some $t$, and then take the $f$ that creates a maximum $K_t$-minor. 


\psihardness*

\begin{proof}
Let the size of the maximum clique minor of $H$ be $t$, i.e. $\eta(H)=t$ and let $f:V(H)\rightarrow \{1,\ldots,t\}$ be a $K_t$-minor function of the pattern $H$. Using the function $f$ and the graph $G$, we construct the reduction graph $G^*$ as follows: 

The vertex set of $G^*$ consists of partitions $G_v^*$ for each $v\in V(H)$, where the partition $G_v^*$ is a copy of the vertices of $G$ as an independent set for all $v\in V(H)$ 

The edge set of $G^*$ is defined as follows. For every two vertices $v$ and $u$ in the pattern $H$ where $vu$ is an edge and $f(v)\neq f(u)$, we add the following edges between $G_v^*$ and $G_u^*$:
for each $w_1$ and $w_2$ in $G$, add an edge between the copy of $w_1$ in $G_v$ and the copy of $w_2$ in $G_u$ if and only if $w_1w_2$ is an edge in $G$. In other words, we put the same edges as $E(G)$ between $G_u^*$ and $G_v^*$ in this case.
For any two vertices $v$ and $u$ in $H$ where $vu$ is an edge and $f(v)= f(u)$, add the following edges between $G_v^*$ and $G_u^*$:
for any $w\in G$, connect the two copies of $w$ in $G_v^*$ and $G_u^*$. In other words, we put a complete matching between $G_v^*$ and $G_u^*$ in this case. 
This completes the definition of $G^*$. See Figure \ref{fig:ktminor} for an example. 

Note that $G^*$ is an $H$-partite graph with $nk$ vertices and since for each pair of vertices $u,v\in H$ we have at most $m$ edges between $G_u^*$ and $G_v^*$, the construction time is at most $O(k^2m+kn)\le O(k^2n^2)$.

Now to prove the correctness of the reduction, first we show that the reduction graph $G^*$ has a subgraph isomorphic to $H$ if $G$ has a $t$-clique. Suppose that the vertices $w_1,\ldots,w_t\in V(G)$ form a $t$-clique. Let $H^*$ be the subgraph induced on the following vertices in the reduction graph $G^*$: For each $v\in H$, pick $w_{f(v)}$ from $G_v^*$. We need to show that if $vu\in E(H)$, then there is an edge between the vertices picked from $G_v^*$ and $G_u^*$. This is because if $f(v)=f(u)=i$, then we picked $w_i$ from both $G_v^*$ and $G_u^*$ and hence they are connected. If $f(v)\neq f(u)$, then since $w_{f(v)}$ is connected to $w_{f(u)}$ in $G$, we have that their copies in $G_v^*$ and $G_u^*$ are connected as well. So $H^*$ is isomorphic to $H$. 

Now we show that $G$ has a $t$-clique if $G^*$ has a colorful subgraph $H^*$ isomorphic to $H$. 
Let $v^*\in V(H^*)$ be the vertex picked from $G_v^*$, for $v\in V(H)$. Since there is no edge between $G_v^*$ and $G_u^*$ if $uv$ is not an edge in $H$, we have that there {\em must} be an edge between $v^*$ and $u^*$ if $uv$ is an edge in $H$, so that the number of edges of $H^*$ matches that of $H$. So if $uv\in E(H)$ and $f(u)=f(v)$, then $u^*$ and $v^*$ must be the copies of the same vertex in $G$. Since the vertices with the same value of $f$ are connected, the vertices of $H^*$ are the copies of exactly $t$ vertices in $G$, say $\{w_1,\ldots, w_t\}$, where $v^*$ is the copy of $w_i$ if $f(v)=i$. For each $i,j\in \{1,\ldots,t\}$, there are two vertices $u,v\in V(H)$ such that $f(u)=i$, $f(v)=j$ and $uv\in E(H)$. So $u^*v^*\in E(H^*)$, and hence $w_iw_j\in E(G)$. So the set $\{w_1,\ldots, w_t\}$ induces a $t$-clique in $G$.
\end{proof}
Recall that Corollary \ref{thm:core-hardness} gives a hardness result for cores in SI. This Corollary comes from the result of Marx \cite{marx2007can} that PSI and SI are equivalent when the pattern is a core.

\corehardnessinsi*

We are going to use this result later for proving tighter hardness results for paths and cycles. Now we prove Corollary \ref{cor:coreLB} that gives a lower bound for induced SI when the pattern is a core.

\corehardnessgeneral*


\begin{proof}
To get a lower bound for induced SI when the pattern is a core, we use two results on the connection of the maximum independent set $\alpha(H)$, maximum clique size $w(H)$ and the size of the maximum clique minor $\eta(H)$ of a pattern $H$. Kawarabayashi \cite{kawarabayashi2005improvements} showed $(2\alpha(H)-1)\cdot \eta(H)\ge |V(H)|+w(H)$, and Balogh and Kostochka \cite{balogh2011large} showed that $\alpha(H)\eta(H)\ge |V(H)|/(2-c)$ for a constant $c>1/19.5$. Since $\eta(H)\ge w(H)$, these results imply that $\alpha(H)\eta(H)\ge \max\{(|V(H)|+2w(H))/2, |V(H)|/1.95\}$. Since $\eta(\bar{H})\ge w(\bar{H})=\alpha(H)$ and all these numbers are integers, we get Corollary~\ref{cor:coreLB} from Corollary~\ref{thm:core-hardness}.
\end{proof}

\subsection{Patterns are at least as hard to detect as their core}
\label{sec:corehardness}

In this section we prove that detecting a pattern is at least as hard as detecting its core. In order to do so we define the notions of $C$-coloring and $C$-covering for a core subgraph $C$.
\begin{definition}
Let $F$ be a graph and let $C$ be a $c$-node subgraph of it. We say that the function $f:V(F)\rightarrow \{1,\ldots,c\}$ is a {\em $C$-coloring} of $F$ if for any copy of $C$ in $F$, the vertices of this copy receive distinct colors. We say that a graph is $C$-colorable if it has a $C$-coloring.
\end{definition}

Note that a $C$-coloring of $F$ partitions $F$ into $c$ sections such that any copy of $C$ in $F$ is a colorful copy, i.e. it has exactly one vertex in each partitions\footnote{Note that this is different than $F$ being a $C$-partite graph. The colors are not assigned to any node of $C$, and there is no constraints on the edges of $F$ with respect to the partitions.}. See figure \ref{fig:C-covering} for an example of $C$-coloring for $C$ being the $5$-cycle. 

\begin{figure}
    \centering
    \includegraphics{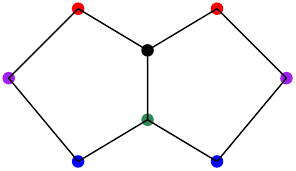}
    \caption{$C$-coloring with $C$ being the $5$-cycle. The pattern has $2$ copies of $C_5$, and each copy is colorful with respect to the coloring given.}
    \label{fig:C-covering}
\end{figure}

\begin{definition}
 Let $H$ be a graph and $C$ be a core of $H$. We say that a collection $\mathcal{C}=\{C_1,\ldots,C_r\}$, $C_i\subseteq V(H)$ is a {\em $C$-covering} for $H$ of size $r$, if the following hold.
 \begin{enumerate}
     \item For every copy of $C$ in $H$ there is an $i$ such that this copy is in the subgraph induced by $C_i$.
     \item For every $i$ the subgraph induced by $C_i$ is $C$-colorable. 
 \end{enumerate}
\end{definition}
For any pattern $H$ with core $C$ there is a simple $C$-covering: Let the sets in the collection be the copies of $C$ in $H$. However, we are interested in the ``smallest" $C$-covering.

\begin{definition}
Define the $C$-covering number of $H$ as the minimim integer $r$ such that there is a $C$-covering for $H$ of size $r$.
\end{definition}
One can find a $C$-covering of minimum size in $O_{k}(1)$ by first enumerating all copies of $C$ in $H$, and then considering all ways of partitioning the copies into sets, and testing if these sets are $C$-colorable. Before proving Theorem \ref{thm:general-hardness}, we prove the following simple but useful lemma.
\begin{lemma}
\label{lem:hom-H-partite}
Let $G$ be an $H$-partite graph where for each $v\in V(H)$, $G_v\subseteq G$ is the partition of $G$ associated to $v$.
Let $F$ be a subgraph in $G$. Then there is a homomorphism $g$ from $F$ to $H$, defined as $g:V(F)\rightarrow V(H)$ where $g(u)=v$ if $u\in G_v$, for every $u\in V(F)$.
\end{lemma}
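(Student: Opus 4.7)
The statement is essentially an unwinding of the definition of an $H$-partite graph, so the plan is short: verify that the proposed assignment $g$ is (i) well-defined as a function on $V(F)$ and (ii) respects edges.

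First I would check well-definedness. Since $G$ is $H$-partite, its vertex set is partitioned into the blocks $\{G_v : v \in V(H)\}$; hence every vertex $u \in V(G)$ lies in exactly one block $G_v$, and in particular this is true for every $u \in V(F) \subseteq V(G)$. Therefore $g(u) := v$ (where $v$ is the unique element of $V(H)$ with $u \in G_v$) is a well-defined map $V(F) \to V(H)$.

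Next I would verify the homomorphism property. Let $xy \in E(F)$; then $xy \in E(G)$ as well, since $F$ is a subgraph of $G$. Write $g(x) = u$ and $g(y) = v$, i.e.\ $x \in G_u$ and $y \in G_v$. Recall the definition of $H$-partiteness from the preliminaries: each $G_w$ is an independent set in $G$, and there are no edges between $G_u$ and $G_v$ whenever $uv \notin E(H)$. Since $xy$ is an edge of $G$ and each block $G_w$ is an independent set, we cannot have $u = v$; and since $xy$ is an edge between $G_u$ and $G_v$, we must have $uv \in E(H)$. Thus $g(x)g(y) = uv \in E(H)$, which is exactly the homomorphism condition.

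This argument is routine; there is no real obstacle, as the lemma is essentially a restatement of the definition of $H$-partiteness packaged as a homomorphism. The only thing to be careful about is the case $u = v$, which is ruled out precisely because every partition class $G_w$ is required to be an independent set in the definition of an $H$-partite graph; without this, the edge $xy$ could lie inside a single block and $g$ would fail to send it to an edge of $H$ (there may be no loop at $u$). With these two checks, $g$ is a graph homomorphism from $F$ to $H$, completing the proof.
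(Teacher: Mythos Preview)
Your proof is correct and follows essentially the same approach as the paper's own proof, which simply notes that an edge of $F$ lies between $G_{g(v_1)}$ and $G_{g(v_2)}$ and appeals to the definition of $H$-partiteness. You are in fact slightly more careful than the paper, explicitly checking well-definedness and ruling out the $u=v$ case via the independence of each block.
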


\begin{proof}
To prove that $g$ is a homomorphism, we need to show that if $v_1v_2\in E(F)$, then $g(v_1)g(v_2)\in E(H)$. This is true because the edge $v_1v_2$ is between $G_{g(v_1)}$ and $G_{g(v_2)}$, and from the definition of $H$-partite graphs this means that $g(v_1)g(v_2)\in E(H)$.
\end{proof}

\generalhardness*

\begin{proof}
We use the color-coding trick of Alon, Yuster and Zwick \cite{alon1995color}: Consider a random assignment of colors $\{1,\ldots,c\}$ to the vertices of the host graph $G$, and a random assignment of numbers $\{1,\ldots,c\}$ to the vertices of $C$. We can assume that if $G$ has a copy of $C$, then the copy of vertex $i$ has color $i$ with high probability (we can repeat this reduction to produce $O(\log{n})$ instances to achieve this high probability). Let the partition $G^{(i)}$ be the vertices with color $i$.

Let the $C$-covering number of the pattern $H$ be $r$, and let $\mathcal{C}=\{C_1,\ldots,C_r\}$ be a $C$-covering of size $r$. Note that as explained before, we can find $r$ and $\mathcal{C}$ in $O_k(1)$ time. 
Let $f:C_1\rightarrow \{1,\ldots,c\}$ be a $C$-coloring of $C_1$, where $c=|V(C)|$ is the size of the core $C$.

We define the vertex set of the $H$-partite  reduction graph $G^*$ by adding a subset of vertices of $G$ for each vertex $v\in C_1$ as the partition associated to $v$, and then simply adding a copy of the rest of the vertices of $H$ to $G^*$. More formally,
for each vertex $v\in C_1$, let $G_v^*$ be a copy of the partition $G^{(f(v))}$ as an independent set. For each vertex $v\in V(H)\setminus C_1$, let $G^*_v=\{v^*\}$ include a copy of $v$ in $G^*$. This finishes the vertex set definition. 

We define the edge set of the reduction graph $G^*$ as follows: For each pair of vertices $u,v\in C_1$, if $uv$ is an edge and $f(u)= f(v)$, then we add a perfect matching between $G_u^*$ and $G_v^*$ as follows: For each $w\in G$, we add an edge between the copy of $w$ in $G_u^*$ and the copy of $w$ in $G_v^*$. If $uv$ is an edge and $f(u)\neq f(v)$, then we add all the edges in $G^{(f(u))} \times G^{(f(v))}$ to $G^*_u\times G^*_v$ as follows: for each $w_1$ and $w_2$ in $G$, we add an edge between the copy of $w_1$ in $G_u^*$ and the copy of $w_2$ in $G_v^*$ if and only if $w_1w_2$ is an edge in $G$. For each pair of vertices $u\in C_1$ and $v\in V(H)\setminus C_1$ such that $uv$ is an edge in $H$, we add an edge between $v^*\in G^*_v$ and all vertices in $G_u^*$. For each pair of vertices $u,v \in V(H)\setminus C_1$ such that $uv$ is an edge in $H$, we add an edge between $u^*\in G_u^*$ and $v^*\in G_v^*$.

Note that the number of edges of $G^*$ is at most $O(mk^2)$ where $m$ is the number of edges of $G$. This is because for every $v\in V(H)\setminus C_1$, the number of edges attached to $G_v^*=\{v^*\}$ is at most $O(nk)$, and for every $u,v\in C_1$, there are at most $m$ edges between $G_u^*$ and $G_v^*$. So the construction time is $O(mk^2)\le O(n^2)$. 

Before proceeding to the proof of the reduction, note that if $uv\notin E(H)$, there is no edge between $G^*_u$ and $G^*_v$. So we have the following observation.
\begin{observation}
\label{obs:G*-H-partite}
$G^*$ is $H$-partite.
\end{observation}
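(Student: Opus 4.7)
The plan is to verify the two defining conditions of $H$-partiteness directly from the construction of $G^*$ just laid out, with no additional machinery needed. Recall that $G^*$ is $H$-partite iff (i) its vertex set decomposes into independent sets $\{G^*_v\}_{v\in V(H)}$, and (ii) for every non-edge $uv\notin E(H)$ of $H$ there is no edge between $G^*_u$ and $G^*_v$ in $G^*$.

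First I would check (i). The construction explicitly names one part $G^*_v$ for each $v\in V(H)$, so I only need to confirm that no edge has both endpoints inside a single part. For $v\in C_1$, the set $G^*_v$ is introduced as ``a copy of $G^{(f(v))}$ as an independent set,'' so no internal edges are placed when the vertices are created. For $v\in V(H)\setminus C_1$, the part $G^*_v=\{v^*\}$ is a singleton and is trivially an independent set. I would then scan the four edge-insertion rules of the construction (the perfect matching, the $E(G)$-mimicking biclique, the complete biclique from $v^*$ to $G^*_u$, and the single edge $u^*v^*$) and observe that none of them ever places an edge with both endpoints in the same $G^*_v$; every rule operates on an ordered pair of distinct parts.

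For (ii), I would walk through those same four rules once more and note that each rule only fires when the pair $(u,v)$ satisfies $uv\in E(H)$: the first two rules are explicitly restricted to $u,v\in C_1$ with $uv\in E(H)$, the third to $u\in C_1$ and $v\in V(H)\setminus C_1$ with $uv\in E(H)$, and the fourth to $u,v\in V(H)\setminus C_1$ with $uv\in E(H)$. Hence whenever $uv\notin E(H)$, no edge is ever placed between $G^*_u$ and $G^*_v$, which is exactly the sentence already flagged in the paragraph just before the observation.

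There is essentially no obstacle here: the observation is an immediate bookkeeping consequence of how $G^*$ was defined. It is worth recording because the remainder of the proof of Theorem~\ref{thm:general-hardness} will repeatedly invoke $H$-partiteness, in particular to apply Lemma~\ref{lem:hom-H-partite} to any copy of $H$ (or of its core $C$) found inside $G^*$.
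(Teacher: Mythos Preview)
Your proposal is correct and matches the paper's approach: the paper treats this observation as immediate from the construction, explicitly noting only that ``if $uv\notin E(H)$, there is no edge between $G^*_u$ and $G^*_v$'' in the sentence preceding the observation, with the independence of each $G^*_v$ left implicit in how the parts were defined. Your write-up simply spells out both defining conditions more carefully, which is fine.
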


Now we prove that the reduction works. 
First suppose that $G$ has a colorful copy $C'=\{v_1,\ldots,v_c\}$ of $C$, such that $v_i$ has color $i$. We are going to pick $k$ vertices in the reduction graph $G^*$, one from each partition, and prove that they induce a copy of $H$ in $G^*$. For every $v\in C_1$, we pick the copy of $v_i$ in the partition $G_v^*$, where $i$ is the color of $v$ in the $C$-coloring $f$ of $C_1$, i.e. $f(v)=i$. For $u\in V(H)\setminus C_1$ we pick the only vertex in $G_u^*=\{u^*\}$. 

To prove that these $k$ nodes induce a copy of $H$, consider $u,w\in V(H)$ where $uw\in E(H)$. We show that the vertices picked from $G^*_u$ and $G^*_w$ are connected. If one of $u$ and $w$ is not in $C_1$, then all nodes in $G^*_u$ is connected to all nodes in $G^*_w$. If both $u,w$ are in $C_1$, we have two cases. If $u$ and $w$ have the same color, i.e. $f(u)=f(w)=i$, then we have picked copies of $v_i$ from both $G^*_u$ and $G^*_w$, and from the definition of $G^*$ they are connected. If $u$ and $w$ don't have the same color, i.e. $f(u)\neq f(w)$, then we have picked $v_{f(u)}$ from $G^*_u$ and $v_{f(w)}$ from $G^*_w$. Since $v_{f(u)}$ and $v_{f(w)}$ are connected in $G$, from the definition of $G^*$ they are also connected in $G^*$. So the vertices we picked from $G^*$ induce a copy of $H$.

Now we are going to show that if there is a copy of $H$ in the reduction graph $G^*$, then there is a copy of $C$ in $G$.  
For $i\in \{1,\ldots,r\}$, let $S_i=\cup_{v\in C_i} G_v^*$. 
Suppose that $G^*$ has a subgraph $H^*$ isomorphic to $H$. To show that $G$ has a copy of $C$, we prove that $H^*$ has a copy of $C$ with all its vertices in $S_1$, and then we show that this subgraph leads us to a copy of $C$ in $G$. 

 First, consider a copy $C^*$ of $C$ in $H^*$. By observation \ref{obs:G*-H-partite}, we can consider the homomorphism that Lemma \ref{lem:hom-H-partite} defines from $C^*$ to $H$: $u\in V(C^*)\rightarrow v\in H$ if $u\in G^*_v$. Since $C$ is the core, the image of $C^*$ defined by the homomorphism must be isomorphic to $C$. So this copy of $C$ in $H^*$ is mapped to a copy of $C$ in $H$. 

Thus each copy of $C$ in $H^*$ maps to a copy of $C$ in $H$. Note that this copy is in $C_i$ if and only if the copy of $C$ in $H^*$ is in $S_i$.
Now suppose that there is no copy of $C$ in $H^*\cap S_1$. Then each copy of $C$ in $H^*$ is mapped to a copy of $C$ in $H$ that is not in $C_1$, and thus it is in $C_i$ for $i\ge 2$. So the copies of $C$ in $H^*$ are covered by $\mathcal{S}=\{S_2\cap H^*,\ldots, S_r\cap H^*\}$. If we show that for all $i$, $S_i\cap H^*$ is $C$-colorable, then $\mathcal{S}$ is a $C$-covering of size $r-1$ for $H^*$ and since $H^*$ is a copy of $H$, this is a contradiction to the $C$-covering number of $H$. 

To see that $S_i\cap H^*$ is $C$-colorable, let $f_i:C_i\rightarrow \{1,\ldots,c\}$ be the $C$-coloring of $C_i$, for $i=2,\ldots,r$. We color each node $v\in H^*$ as follows. There is $u\in V(H)$ such that $v\in G^*_u$. We color $v$ the same as $u$, with $f_i(u)$. Now we show that each copy $C^*$ of $C$ in $S_i\cap H^*$ has distinct colors. Consider the mapping of Lemma \ref{lem:hom-H-partite} from $C^*$ in the $H$-partite graph $G^*$ to $H$: for $v\in V(C^*)$, we let $g(v)=u$ if $v\in G^*_u$. Note that if $C^*\subseteq H^*\cap S_i$, the map $g$ preserves colors. Since the image of $C^*$ in $H$ is also a copy of $C$ (because $C$ is a core) and $f_i$ is a $C$-coloring, this image is a colorful copy of $C$. So $C^*$ is also a colorful copy of $C$ with the coloring defined. Thus $S_i\cap H^*$ is $C$-colorable.   


So from above we conclude that $H^*$ must have a copy $C^*=\{w_1,\ldots,w_c\}$ of $C$ in $S_1$, such that $w_i\in G_{v_i}$ for some $v_i\in C_1$ and $v_i\neq v_j$ for each $i\neq j$. Moreover, the mapping $w_i\rightarrow v_i$ is a homomorphism from $C^*$ to $H$ and since $C$ is a core, we have that $v_1,\ldots,v_c$ form a a copy of $C$ in $H$. Now since $f$ is a $C$-coloring, $f(v_i)\neq f(v_j)$ for all $i\neq j$. This means that $w_1,\ldots,w_c$ are copies of distinct vertices in $G$, and hence they are attached in $G^*$ if and only if they are attached in $G$. So they form a subgraph isomorphic to $C$ in $G$.
\end{proof}

Now we prove Corollary \ref{cor:induced-si-hardness} and \ref{cor:inducedhardnessone} on induced subgraph isomorphism of \textit{all} patterns. 
\inducedsihardnessone*
\begin{proof}
Denote the chromatic number of a graph $F$ by $X(F)$.
We know that for a $k$ node pattern $H$, the chromatic number of either $H$ or its complement is at least $\sqrt{k}$. WLOG assume that $X(H)\ge \sqrt{k}$. Lemma \ref{lem:color-critical-core} proven below states that  a color critical graph is a core. Since the core of $H$ is its largest subgraph that is a core, we have that $X(core(H))\ge \sqrt{k}$, and so in particular the size of the core of $H$ is at least $\ceil{\sqrt{k}}$. By Theorem \ref{thm:general-hardness} we have that detecting $H$ is at least as hard as detecting $core(H)$, and by Corollary \ref{cor:coreLB} we have that detecting $core(H)$ is at least $\ceil{\sqrt{\sqrt{k}/1.95}}$ hard. This gives the result that we want. 
\end{proof}

\begin{restatable}{corollary}{inducedsihardness} 
\label{cor:induced-si-hardness}
 \textbf{(Hardness of Induced-SI)}
 For any $k$-node pattern $H$, the problem of detecting an induced copy of $H$ in an $n$-node graph requires $n^{\Omega{(\sqrt{k}/\log{k})}}$ time under ETH. 
 \end{restatable}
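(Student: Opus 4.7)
\medskip

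\noindent\textbf{Proof plan.} The plan is to chain three tools: Marx's ETH-based lower bound for Partitioned Subgraph Isomorphism (PSI) in terms of treewidth, Marx's equivalence between PSI and SI on cores, and Theorem~\ref{thm:general-hardness} above, which transfers hardness of $core(H)$ to hardness of $H$ with only $O(n)$ vertex blowup.

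First I would reduce to the case that $H$ itself has chromatic number at least $\sqrt{k}$. Since $\chi(H)\cdot\chi(\bar H)\ge k$ (because $\chi\ge k/\alpha$ and $\alpha(H)=\omega(\bar H)\le\chi(\bar H)$), one of $\chi(H),\chi(\bar H)$ is at least $\sqrt{k}$. Induced detection of $\bar H$ in a graph $G$ is exactly induced detection of $H$ in $\bar G$, so WLOG $\chi(H)\ge\sqrt{k}$. Since there is a homomorphism $H\to core(H)$ and $core(H)$ is an induced subgraph of $H$, we have $\chi(core(H))=\chi(H)\ge\sqrt{k}$; alternatively one can apply the paper's Lemma~\ref{lem:color-critical-core} to a color-critical subgraph of $H$ of chromatic number $\chi(H)$ to reach an analogous conclusion about the size and chromatic number of $core(H)$.

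Next I would invoke Marx's ETH-based theorem: PSI for a pattern $F$ requires $n^{\Omega(tw(F)/\log tw(F))}$ time under ETH. Using the standard inequality $tw(F)\ge\chi(F)-1$, this gives $tw(core(H))\ge\sqrt{k}-1$, so PSI for $core(H)$ needs $n^{\Omega(\sqrt{k}/\log k)}$ time under ETH. Since $core(H)$ is a core, Marx's PSI-to-SI equivalence on cores (also used in Corollary~\ref{thm:core-hardness}) pushes the same ETH lower bound to (non-induced) SI of $core(H)$ on an $O(n)$-vertex host.

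Finally I would apply Theorem~\ref{thm:general-hardness}: in $O(n^2)$ time it produces an $O(n)$-vertex instance of (non-induced) SI for $H$ that is with high probability equivalent to (non-induced) SI for $core(H)$ in the original graph. This preserves the exponent, so (non-induced) SI for $H$ needs $n^{\Omega(\sqrt{k}/\log k)}$ time under ETH. Since induced SI is at least as hard as non-induced SI (one solves non-induced SI by running induced SI on each of the finitely many $k$-vertex edge-supergraphs of $H$), the same ETH lower bound transfers to induced SI for $H$. The main thing to verify is bookkeeping: each reduction in the chain inflates the host to $O(n)$ vertices with multiplicative constants depending only on $k$, so the $\sqrt{k}/\log k$ scaling in the exponent survives every composition.
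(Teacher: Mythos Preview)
Your proposal is correct and follows essentially the same route as the paper: pass to the complement if needed so that $\chi(H)\ge\sqrt{k}$, deduce $\chi(core(H))\ge\sqrt{k}$ and hence $tw(core(H))\ge\sqrt{k}-1$, apply Marx's ETH lower bound for PSI together with the PSI$\leftrightarrow$SI equivalence on cores, and then use Theorem~\ref{thm:general-hardness} to lift hardness from $core(H)$ to $H$. The only cosmetic difference is that the paper justifies $\chi(core(H))\ge\sqrt{k}$ via Lemma~\ref{lem:color-critical-core}, while you observe directly that $\chi(core(H))=\chi(H)$; both are fine.
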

\begin{proof}
Similar to the proof of Corollary \ref{cor:inducedhardnessone}, we have that $X(core(H))\ge \sqrt{k}$. Now since by inductive coloring we have that for any graph $F$, $tw(F)+1\ge X(F)$, then $tw(core(H))\ge \sqrt{k}-1$. Recall that Marx \cite{marx2007can} shows that under ETH, for any pattern $F$ partitioned subgraph isomorphism of $F$ in an $n$ node graph requires $n^{\Omega(\frac{tw(F)}{\log tw(F)})}$ time. Since for cores PSI and SI are equivalent \cite{marx2007can}, we get Corollary \ref{cor:induced-si-hardness}.
\end{proof}


\begin{lemma}
\label{lem:color-critical-core}
Color critical graphs are cores. 
\end{lemma}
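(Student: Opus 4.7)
\textbf{Proof plan for Lemma~\ref{lem:color-critical-core}.}

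The plan is to argue by contradiction: assume $H$ is color critical with $\chi(H)=k$, and suppose that $H$ admits a homomorphism $f:H\to H'$ to some proper subgraph $H'\subsetneq H$. I want to derive a contradiction by looking at the image $f(H)\subseteq H'\subsetneq H$ and splitting on whether $f$ is surjective on vertices. Throughout I will use the standard fact that if there is a homomorphism $A\to B$ then $\chi(A)\le\chi(B)$, since composing a proper $c$-coloring of $B$ with the homomorphism yields a proper $c$-coloring of $A$. In particular, applying this to $f:H\to f(H)$ gives $\chi(f(H))\ge\chi(H)=k$.

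First I would handle the case where $f$ is not surjective on $V(H)$. Pick any $v\in V(H)\setminus f(V(H))$; then $f(H)$ is a subgraph of the vertex-deletion $H-v$, so $\chi(f(H))\le\chi(H-v)$. By color criticality, $\chi(H-v)<k$, contradicting $\chi(f(H))\ge k$.

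Next I would handle the remaining case, where $f$ is surjective on $V(H)$. Since $V(H)$ is finite, $f$ is then a bijection of the vertex set. Because $f$ is a graph homomorphism and a vertex-bijection, distinct edges $uv\ne u'v'$ of $H$ map to distinct edges $f(u)f(v)\ne f(u')f(v')$ of $f(H)$, so $|E(f(H))|\ge|E(H)|$. On the other hand, $V(f(H))=V(H)$ together with $f(H)\subseteq H'\subsetneq H$ forces $E(f(H))\subsetneq E(H)$ (the proper containment must live at the level of edges), giving $|E(f(H))|<|E(H)|$ and hence the desired contradiction.

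The main subtlety, and really the only place where care is needed, is exactly this split into the two cases: the color-criticality hypothesis only provides leverage against homomorphisms that drop a vertex, so the vertex-bijective case has to be killed by the independent edge-counting argument above. Once both cases are ruled out, no homomorphism from $H$ to a proper subgraph can exist, which is precisely the statement that $H$ is a core.
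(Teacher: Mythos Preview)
Your proof is correct and rests on the same core observation as the paper's: a homomorphism $f:H\to H'$ lets you pull back a proper $\chi(H')$-coloring of $H'$ to a proper coloring of $H$, forcing $\chi(H)\le\chi(H')$. The paper then simply asserts $\chi(H')<\chi(H)$ ``because $H$ is color critical,'' which is immediate only when $H'$ misses a vertex of $H$; your explicit case split on whether $f$ is vertex-surjective, with the edge-counting argument in the bijective case, is a genuine improvement in rigor that closes this small gap left by the paper's vertex-only definition of color criticality.
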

\begin{proof}
Let $H$ be a color critical graph, and suppose that there is a homomorphism $f$ from $H$ to $H'$ where $H'$ is a proper subgraph of $H$. Let $c_{H'}:V(H')\rightarrow \{1,2,\ldots,X(H')\}$ be a coloring of $H'$. Then let $c_H$ be the following coloring for $H$. For each $v\in H'$, color all vertices of $f^{-1}(v)$ the same as $v$. This means that for any $u\in V(H)$, $c_H(u)=c_{H'}(f(u))$. Since $f^{-1}(v)$ is an independent set and $c_{H'}$ is a proper coloring, $c_H^{-1}(i)=f^{-1}(c^{-1}_{H'}(i))$ is an independent set for any color $i$. So $c_H$ is a proper coloring for $H$ of size $X(H')$. This is a contradiction because $H$ is color critical and we have that $X(H')<X(H)$. 
\end{proof}

\subsection{Hardness of Paths and Cycles}\label{sec:paths-and-cycles}
In this section, we prove a stronger lower bound for induced path and cycle detection than what the previous results give us. More precisely, we show that a cycle or path of length $k$ is at least as hard to detect as an induced subgraph as a clique of size roughly $3k/4$. This number comes from the largest clique minor of the complement of paths and cycles. This is formalized in the next lemma which is proved in the appendix.

\begin{lemma}
\label{lem:maxcliqueminor}
Let $H$ be a $k$-node pattern that is the complement of a path or a cycle. Then $\eta(H)=\floor{\frac{k+\omega(H)}{2}}$, where $\omega(H)$ is the size of the maximum clique of $H$. Table \ref{table:maxcliqueminor} shows the value of $\eta(H)$. 

\begin{table}[h!]
\centering
 \begin{tabular}{|c | c| c|} 
 \hline
 number of vertices ($k$)  & $\eta(\bar{C_k})$ & $\eta(\bar{P_k})$ \\ [0.5ex] 
 \hline
 $4t$ & $3t$*&  $3t$*\\ \hline
 $4t+1$ &  $3t$&  $3t+1$*\\ \hline
 $4t+2$ &  $3t+1$ & $3t+1$\\ \hline
 $4t+3$ &  $3t+2$&  $3t+2$\\ \hline
\end{tabular}
\caption{Maximum clique minors. (*) For $t=1$, the maximum minor of $\bar{P_4},\bar{C_4},\bar{P_5}$ is 2,2,3 respectively.}
\label{table:maxcliqueminor}
\end{table}
\end{lemma}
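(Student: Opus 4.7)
}

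The plan is to prove the two directions separately: an upper bound $\eta(H)\le \lfloor (k+\omega(H))/2\rfloor$ by a short counting argument, and a matching lower bound by exhibiting an explicit clique minor in each of the four residue classes $k\bmod 4$ (with special handling of the small exceptional cases flagged in the table).

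First I would compute $\omega(H)$ directly: a clique in $\overline{P_k}$ (resp.\ $\overline{C_k}$) is exactly an independent set of $P_k$ (resp.\ $C_k$), so $\omega(\overline{P_k})=\lceil k/2\rceil$ and $\omega(\overline{C_k})=\lfloor k/2\rfloor$. One checks by cases that these values, plugged into $\lfloor (k+\omega(H))/2\rfloor$, reproduce every entry of the table.

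Next, the upper bound. Let $B_1,\dots,B_\eta$ be the branch sets of a maximum clique minor of $H$, and write $a$ for the number of singleton branch sets and $b=\eta-a$ for the rest. Since any two singleton branch sets must be joined by an edge of $H$, the singletons form a clique in $H$, so $a\le \omega(H)$. The non-singleton branch sets contain at least $2b$ vertices in total, so $a+2b\le k$, giving
\[
\eta \;=\; a+b \;\le\; a + \tfrac{k-a}{2} \;=\; \tfrac{k+a}{2} \;\le\; \tfrac{k+\omega(H)}{2}.
\]
Since $\eta$ is an integer this gives $\eta(H)\le\lfloor (k+\omega(H))/2\rfloor$.

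For the lower bound I would give an explicit construction that saturates the singleton count $a=\omega(H)$ and uses pairs (plus at most one triple) to cover the remaining vertices. For $\overline{P_k}$, label vertices $v_1,\dots,v_k$; the singletons are chosen from a maximum independent set of $P_k$ (e.g.\ $\{v_1,v_3,v_5,\dots\}$), so they trivially form a clique in $\overline{P_k}$. The remaining vertices are grouped into pairs $\{v_i,v_j\}$ with $|i-j|\ge 2$ (ensuring the pair is an edge of $\overline{P_k}$, hence connected) and avoiding the single ``bad'' configuration where the pair is $\{v_{2l},v_{2l+2}\}$ straddling an odd singleton $v_{2l+1}$: a direct check shows this is the unique way a pair can fail to be adjacent to some singleton. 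Finding such a pairing reduces to finding a perfect matching of the remaining even-indexed vertices in the complement of a path, which exists whenever there are at least four such vertices; depending on $k\bmod 4$ one may need to introduce a single triple to absorb a parity mismatch, and for that one checks that a triple $\{v_i,v_j,v_l\}$ chosen with pairwise gaps $\ge 2$ is connected in $\overline{P_k}$ and adjacent to every remaining singleton. The same scheme, with a slight adjustment of the independent set to respect the cyclic adjacency $v_1v_k$, works for $\overline{C_k}$. Finally, the small cases $\overline{P_4}$, $\overline{C_4}$, $\overline{P_5}$ are handled separately: $\overline{P_4}\cong P_4$ is a tree and so has no $K_3$ minor, $\overline{C_4}\cong 2K_2$ is disconnected, and $\overline{P_5}$ admits a direct exhaustive check showing no branch set beyond its triangle $\{v_1,v_3,v_5\}$ can reach all three of $v_1,v_3,v_5$.

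The main obstacle is the lower bound bookkeeping: one must assemble, for each residue of $k$ mod $4$ and for each of path vs cycle, a valid assignment of singletons, pairs, and (when the parity forces it) a single triple, and then verify two adjacency conditions for each non-singleton branch set (connectivity inside $H$, and at least one edge to every other branch set). The verifications are routine once the pattern ``avoid pairing consecutive even indices'' is isolated, but the case split and the small exceptional cases $k\le 5$ in the table require care to state cleanly.
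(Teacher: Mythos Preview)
Your proposal is correct and follows essentially the same approach as the paper: the upper bound via counting singleton branch sets (which must form a clique, hence at most $\omega(H)$ of them) is identical, and the lower bound by taking a maximum independent set of $P_k$/$C_k$ as singletons and pairing the remaining vertices while avoiding the configuration $\{v_i,v_{i+2}\}$ is exactly the construction the paper sketches. If anything, your write-up is more explicit than the paper's, which dispatches the lower bound in two sentences (``pairing vertices with this restriction can be easily done except for $\overline{P_4},\overline{C_4},\overline{P_5}$'') without spelling out the pair--pair adjacency check or the matching argument.
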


Recall the main result of this section below.
\hardnesscyclepath*

First, we show the easier case of odd cycles which was also mentioned in Section \ref{sec:hardnessofpsi}.
With a simple argument we can show that the complement of an odd cycle is a color critical graph. We prove this in the appendix for completeness.

\begin{lemma}
\label{lem:odd}
The complement of an odd cycle is color-critical. 
\end{lemma}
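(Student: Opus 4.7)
The plan is to compute $\chi(\bar{C_k})$ exactly for odd $k$ and then show that deleting any vertex produces a graph whose chromatic number is strictly smaller. Since $\bar{C_k}$ is vertex-transitive (as $C_k$ is), it suffices to delete one arbitrary vertex.

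First, I would compute $\chi(\bar{C_k})$ via the identity $\chi(\bar{G}) = \theta(G)$, where $\theta(G)$ denotes the minimum clique cover number of $G$, i.e.\ the minimum number of cliques needed to partition $V(G)$. For $k\geq 5$ the cycle $C_k$ is triangle-free, so its cliques are exactly vertices and edges; a clique partition of $C_k$ is therefore a matching together with singletons covering the uncovered vertices. A maximum matching in the odd cycle $C_k$ has size $(k-1)/2$, leaving one uncovered vertex, so $\theta(C_k)=(k-1)/2+1=(k+1)/2$. Hence $\chi(\bar{C_k})=(k+1)/2$.

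Next, for any vertex $v$, the graph $C_k - v$ is the path $P_{k-1}$ on an even number of vertices, which admits a perfect matching. This gives a clique partition of $C_k-v$ of size $(k-1)/2$, so $\chi(\bar{C_k}-v)=\chi(\overline{P_{k-1}})\leq (k-1)/2 < (k+1)/2 = \chi(\bar{C_k})$. Since this holds for every choice of $v$, the graph $\bar{C_k}$ is color-critical.

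The only subtlety is the very small case $k=3$ (where $\bar{C_3}$ is the empty graph on three vertices and the lemma would fail); however, this case does not arise in our application because the $\bar{C_k}$-hardness arguments use odd $k\geq 5$, for which the triangle-freeness of $C_k$ used above is valid. I do not expect any genuine obstacle: the argument is essentially a direct manipulation of matchings and the $\chi(\bar G)=\theta(G)$ identity, and no deeper structural fact is needed.
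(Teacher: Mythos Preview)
Your proof is correct and is essentially the same argument as the paper's, just phrased in the dual language of clique covers: the paper observes directly that color classes of $\bar{C_k}$ are cliques of $C_k$ (hence of size at most two since $C_k$ is triangle-free for $k\ge 5$), exhibits the pairing $\{v_{2i},v_{2i+1}\}$ with $v_1$ left over to get $\chi(\bar{C_k})=(k+1)/2$, and then notes that deleting $v_1$ lets the remaining $k-1$ vertices be paired perfectly. Your use of vertex-transitivity to reduce to a single deletion, and your explicit handling of the degenerate case $k=3$, are both slightly more careful than the paper's write-up, but the underlying idea is identical.
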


Lemma \ref{lem:odd} together with Lemma \ref{lem:color-critical-core} show that the complement of an odd cycle is a core. Using Corollary \ref{thm:core-hardness} and Lemma \ref{lem:maxcliqueminor}, we have that detecting a $\bar{C_k}$ for odd $k$ is at least as hard as detecting a $\floor{3k/4}$-clique. Since induced detection of a pattern $H$ is at least as hard as not-necessarily-induced detection of $H$, we have the following Theorem.


\begin{theorem}
\label{cor:odd}
For odd $k$, Induced-$C_k$ detection is at least as hard as $\floor{3k/4}$-clique detection.
\end{theorem}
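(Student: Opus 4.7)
The plan is to chain together the structural facts already established in this subsection. First I would apply Lemma~\ref{lem:odd} to obtain that $\bar{C_k}$ is color-critical for odd $k$, and then Lemma~\ref{lem:color-critical-core} to upgrade this to the statement that $\bar{C_k}$ is a core. This is the essential input, since Corollary~\ref{thm:core-hardness} only yields its clique-minor-based lower bound for patterns that are themselves cores.

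Next I would read off from Lemma~\ref{lem:maxcliqueminor} the value $\eta(\bar{C_k}) = \floor{3k/4}$ for odd $k$, which is a short case check against the two odd rows of the table: for $k = 4t+1$ we get $\eta(\bar{C_k}) = 3t = \floor{3k/4}$, and for $k = 4t+3$ we get $\eta(\bar{C_k}) = 3t+2 = \floor{3k/4}$. Plugging $H = \bar{C_k}$ into Corollary~\ref{thm:core-hardness} then produces an $O(m+n)$-time reduction from $\floor{3k/4}$-clique detection in an $n$-vertex graph to non-induced $\bar{C_k}$-detection in a host graph on $O(n)$ vertices.

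The last step is to bridge from non-induced $\bar{C_k}$-detection to induced $C_k$-detection by composing two standard observations. First, induced $H$-detection is at least as hard as non-induced $H$-detection for any fixed pattern $H$, a folklore fact recalled in the introduction; so non-induced $\bar{C_k}$-detection reduces to induced $\bar{C_k}$-detection. Second, induced $\bar{C_k}$-detection in a graph $G^*$ is literally the same computational problem as induced $C_k$-detection in its complement $\overline{G^*}$, and complementing the host graph is a trivial $O(|V(G^*)|^2)$-time preprocessing step that preserves the vertex count. Composing these with the reduction of the previous paragraph lands on induced $C_k$-detection and yields the theorem.

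I do not foresee any real obstacle, as all the heavy lifting has been carried out in Corollary~\ref{thm:core-hardness} and Lemmas~\ref{lem:odd}, \ref{lem:color-critical-core}, and \ref{lem:maxcliqueminor}. The only care needed is to remember that Corollary~\ref{thm:core-hardness} gives a statement about non-induced detection of the complement pattern $\bar{C_k}$, and therefore we must explicitly route through complementation of the host graph and through the induced-is-at-least-as-hard-as-non-induced direction so that the final conclusion really is about induced $C_k$ rather than about $\bar{C_k}$ in either mode.
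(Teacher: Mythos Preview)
Your proposal is correct and follows essentially the same route as the paper: combine Lemma~\ref{lem:odd} and Lemma~\ref{lem:color-critical-core} to see that $\bar{C_k}$ is a core, invoke Corollary~\ref{thm:core-hardness} together with the value $\eta(\bar{C_k})=\floor{3k/4}$ from Lemma~\ref{lem:maxcliqueminor}, and then pass from non-induced $\bar{C_k}$-detection to induced $C_k$-detection via the induced-is-harder fact and complementation of the host graph. If anything, you are more explicit than the paper about the complementation step, which the paper leaves implicit.
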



Now we move to the harder case of even cycles and odd and even paths. We would like to get a hardness as strong as the one offered by Theorem \ref{thm:psihardness} and Corollary \ref{thm:core-hardness}, but we can't use these results directly since paths and even cycles (and their complements) are not cores.

As mentioned in the section \ref{sec:technical-overview}, we are going to use the construction of Theorem \ref{thm:psihardness} and shrink a few partitions of the reduction graph $G^*$, i.e. replacing each of these partitions with a single vertex. The next lemma helps us characterize automorphisms of paths and cycles, and so it helps us find the appropriate partitions of $G^*$ to shrink.

\begin{lemma}
\label{lem:hom-path-cycle}
Any automorphism of paths or cycles that has a \textbf{proper} subset of vertices as its image has the following properties:
\begin{itemize}
    \item Let $C_k=v_1\ldots v_kv_1$ be a $k$-cycle for even $k$. Then any homomorphism from $\bar{C_k}$ to a proper subgraph of $\bar{C_k}$ has two vertices both being mapped to either $v_1$ or $v_k$.
    \item Let $P_k=v_1\ldots v_k$ be a $k$-path. Then any homomorphism from $\bar{P_k}$ to a proper subgraph of $\bar{P_k}$ has two vertices both being mapped to either $v_1$ or $v_k$.
\end{itemize}
\end{lemma}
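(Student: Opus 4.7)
The plan is to argue by contradiction in both cases simultaneously. Assume the homomorphism $f\colon H\to H$ (where $H=\bar{P_k}$ or $H=\bar{C_k}$ with $k$ even) has a proper subgraph image, yet neither $f^{-1}(v_1)$ nor $f^{-1}(v_k)$ has size at least two. Because a bijective homomorphism $H\to H$ is an automorphism with image equal to $H$, $f$ cannot be injective, and since $H$ is loopless each class $f^{-1}(w)$ is an independent set of $H$. The non-edges of $\bar{P_k}$ (resp.~$\bar{C_k}$) are precisely the (cyclically) consecutive pairs of $P_k$ (resp.~$C_k$), so each class is either a singleton or a pair of consecutive vertices $\{v_a,v_{a+1}\}$. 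Write $w_1,\dots,w_p$ for the images of the $p$ pair-classes and $s=k-2p$ for the number of singleton-classes.

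Next I would locate $v_1$ or $v_k$ in $f(V(H))$ using perfection. The graphs $P_k$ and $C_k$ (for $k$ even) are perfect, hence so are $\bar{P_k}$ and $\bar{C_k}$; since $\chi$ cannot decrease along a homomorphism, the induced subgraph on the image has chromatic number $\omega(H)$ and therefore contains a maximum clique of $H$. A maximum clique of $\bar{P_k}$ is a maximum independent set of $P_k$, which must contain $v_1$ or $v_k$ (otherwise it would lie entirely in $P_{k-2}$, whose maximum independent set is strictly smaller). For $\bar{C_k}$ with $k$ even, the only two maximum cliques are $\{v_1,v_3,\dots,v_{k-1}\}$ and $\{v_2,v_4,\dots,v_k\}$, each containing $v_1$ or $v_k$. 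So without loss of generality $v_1\in f(V(H))$, and the contradiction hypothesis forces $v_1$ to be the image of a singleton class rather than a pair class.

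For the decisive counting step, observe that for each pair-class $(v_{a_i},v_{a_i+1})$ every other vertex $v_j$ is adjacent in $H$ to at least one of $v_{a_i},v_{a_i+1}$, so writing $F(j)$ for the index of $f(v_j)$ the homomorphism property yields $F(j)\notin\{w_i-1,w_i,w_i+1\}$ (mod $k$ in the cycle case). Set $T=\bigcup_i\{w_i-1,w_i,w_i+1\}$. Since the index $1$ is a singleton image and lies outside $T$, we obtain $w_i\ne 2$ for every $i$; combined with $w_i\notin\{1,k\}$ this forces $w_i\in\{3,\dots,k-1\}$, so every length-three interval $\{w_i-1,w_i,w_i+1\}$ stays inside $\{2,\dots,k\}$ without wrap-around. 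The homomorphism property also forces distinct $w_i$'s to be at (cyclic) distance at least two, so consecutive intervals overlap in at most one element, giving $|T|\geq 3+2(p-1)=2p+1$. Meanwhile the $s=k-2p$ singleton-class images are distinct elements of $B=\{1,\dots,k\}\setminus T$, so $s\leq k-|T|$, i.e., $|T|\leq 2p$. Since the image is proper we have $p\geq 1$, and the inequalities $2p+1\leq|T|\leq 2p$ are incompatible, completing the contradiction.

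The main obstacle I foresee is making the counting honest in the cyclic setting for $\bar{C_k}$, where naively the three-element intervals $\{w_i-1,w_i,w_i+1\}$ could wrap around the cycle and overlap so much that $|T|\geq 2p+1$ fails. The point that neutralizes this is precisely the deduction $w_i\in\{3,\dots,k-1\}$ from $v_1$'s singleton preimage, which pins $T$ inside a linear segment; a mirror-image argument (or the reversal symmetry $v_i\leftrightarrow v_{k+1-i}$ for paths, and rotation for cycles) covers the case in which $v_k$, rather than $v_1$, is the maximum-clique vertex that lies in the image.
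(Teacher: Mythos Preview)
Your proof is correct and follows a genuinely different route from the paper's. The paper proceeds case by case (even cycles, odd paths, even paths): in each case it names the one or two maximum cliques of $H$ explicitly, notes that a homomorphism must send a maximum clique onto a maximum clique, and then uses properness of the image to force both cliques to land on the same one, which immediately produces two preimages at $v_1$ or $v_k$. Your argument is instead a single counting argument that works uniformly: you classify the fibres as singletons or consecutive pairs, show that each pair-class image $w_i$ creates a forbidden window $\{w_i-1,w_i,w_i+1\}$ that every other class-image avoids, use the singleton preimage of $v_1$ to pin all windows inside a linear segment, and extract the incompatible bounds $2p+1\le|T|\le 2p$. The paper's approach is shorter and entirely elementary; yours handles all three cases at once and makes explicit why the endpoints are distinguished (they are the only places where the windows could wrap).

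Two small remarks. First, perfection is unnecessary here: a clique maps injectively under any homomorphism of a loopless graph, so $\omega(f(H))\ge\omega(H)$ directly, and the image already contains a maximum clique of $H$ without the detour through $\chi$. Second, your ``rotation for cycles'' comment for the $v_k$ case should really be the reflection $v_i\mapsto v_{k+1-i}$, which is an automorphism of both $P_k$ and $C_k$ swapping $v_1$ and $v_k$; a rotation of $C_k$ does not fix the pair $\{v_1,v_k\}$.
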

\begin{proof}
We first consider even cycles, then odd paths and finally even paths.

First consider the pattern $\bar{C_k}$ with an automorphism to a proper subset of it, for even $k$. This graph has exactly two $k/2$-cliques: $K_1$ with $V(K_1)=\{v_1,v_3,\ldots,v_{k-1}\}$ and $K_2$ with $V(K_2)=\{v_2,v_4,\ldots,v_k\}$ (this can be seen by the fact that no two vertices of a $k/2$ clique in $\bar{C_k}$ can be adjacent in $C_k$). Since the only automorphism of a clique is a clique, $K_1$ and $K_2$ should be mapped to $K_1$ or $K_2$. Since this automorphism of $\bar{C_k}$ is to a proper subset of it, both $K_1$ and $K_2$ are mapped to $K_1$, or both of them are mapped to $K_2$. In either case, two vertices of $\bar{C_k}$ are mapped to either $v_1$ or $v_k$.

Now consider $\bar{P_k}$ where $P_k=v_1\ldots v_k$ is an odd path. The graph $\bar{P_k}$ has exactly one $\frac{k+1}{2}$-clique $K_1$ with $V(K_1)=\{v_1,v_3,\ldots,v_k\}$. So this clique should be mapped to itself. Now the rest of the graph is a $\frac{k-1}{2}$-clique $K_2$ with $V(K_2)=\{v_2,v_4,\ldots,v_{k-1}\}$. There are a lot of $(\frac{k-1}{2})$-cliques in $\bar{P_k}$ that $K_2$ can be mapped to, however all of them  contain either $v_1$ or $v_k$. This is because $\bar{P_k}\setminus\{v_1,v_k\}$ has exactly one $(\frac{k-1}{2})$-clique which is $K_2$, and since this automorphism is to a proper subset of $\bar{P_k}$, $K_2$ cannot be mapped to itself. So it is mapped to a $(\frac{k-1}{2})$-clique that has either $v_1$ or $v_k$ in its vertex set, and since $K_1$ is mapped to itself, there are two vertices that are both mapped to either $v_1$ or $v_k$.
    
Finally, consider the complement of an even path $\bar{P_k}$ as our graph. Consider these two $k/2$-cliques $K_1$ and $K_2$ in this graph: $V(K_1)=\{v_1,v_3,\ldots,v_{k-1}\}$ and $V(K_2)=\{v_2,v_4,\ldots, v_k\}$. First we observe that the only $k/2$-clique in $\bar{P_k}\setminus v_k$ is $K_1$ and the only $k/2$-clique in $\bar{P_k}\setminus v_1$ is $K_2$. Moreover, $\bar{P_k}\setminus \{v_1,v_k\}$ does not have any $k/2$-cliques. So the mappings of $K_1$ and $K_2$ must use at least one of $v_1$ or $v_k$. If none of $v_1$ and $v_k$ have two vertices mapped to them, then it must be that the mapping of $K_1$ is using exactly one vertex in $\{v_1,v_k\}$, and so by the observation above it must be mapped to either $K_1$ or $K_2$. The same goes for $K_2$. But since this automorphism of $\bar{P_k}$ is to a proper subset of the vertices, it must be that both $K_1$ and $K_2$ are mapped to either $K_1$ or $K_2$. So there are two vertices both mapped to either $v_1$ or $v_k$
\end{proof}

Now we are ready to prove Theorem \ref{thm:hardnesscyclepath}. 

\begin{proofof}{Theorem \ref{thm:hardnesscyclepath}}
The idea is to make a small change to the construction of Theorem \ref{thm:psihardness} so that if the pattern $H$ has a copy in $G^*$, then all vertex sets $G_v^*$s have exactly one vertex of the copy. We explain the construction of Theorem \ref{thm:psihardness} here again for completeness. If $\bar{H}$ is a path, let its vertices be in the order $v_1\ldots v_k$, and if it is an even cycle, let the cycle be $v_1\ldots v_kv_1$. We prove a slightly more general statement. We show that for $t'\in \{t-1,t\}$ which depends on the $K_t$ minor function of $H$, detecting $H$ is at least as hard as detecting a $t'$-clique. Since detecting $(t-2)$-clique reduces to detecting $(t-1)$-clique, this proves the theorem.
Given a graph $G$ in which we want to find a $t'$-clique we construct the $H$-partite reduction graph $G^*$ of Theorem \ref{thm:psihardness} as follows.
 
Let $f:V(H)\rightarrow \{1,\ldots,t\}$ be a $K_t$-minor function of the pattern $H$ (see Definition \ref{def:kt-minor-function}). Let $t' = t-1$ if $f(v_1)=f(v_k)$, and let $t'=t-2$ otherwise. We can also assume that $f(v_1)=t'+1$ and $f(v_k)=t$. Construct the reduction graph $G^*$ as follows: For each vertex $v_i\in H$, let $G_{v_i}^*$ be a copy of the vertices of $G$ as an independent set. For every two vertices $v_i$ and $v_j$ in the pattern $H$ where $v_iv_j$ is an edge and $f(v_i)\neq f(v_j)$, add the following edges between partitions $G_{v_i}^*$ and $G_{v_j}^*$:
For each $w_1$ and $w_2$ in $G$, add an edge between the copy of $w_1$ in partition $G_{v_i}^*$ and the copy of $w_2$ in partition $G_{v_j}^*$ if and only if $w_1w_2$ is an edge in $G$. 
For any two vertices $v_i$ and $v_j$ in $H$ where $v_iv_j$ is an edge and $f(v_i)= f(v_j)$, add the following edges between partitions $G_{v_i}^*$ and $G_{v_j}^*$:
For every $w\in G$, connect the copy of $w$ in partition $G_{v_i}^*$ and the copy of $w$ in partition $G_{v_j}^*$. This completes the definition of the reduction graph $G^*$, before we make modifications to it.
Note that $G^*$ is an $H$-partite graph with $nk$ vertices and since for each pair of vertices $v_i,v_j\in H$ we have at most $m=O(n^2)$ edges between $G_{v_i}^*$ and $G_{v_j}^*$, the construction time is at most $O(k^2n^2)$.
 
Now we do the following modifications to this construction: For each $v_i$ with $f(v_i)\in \{f(v_1),f(v_k)\}$, remove the vertices in the sets $G_{v_i}^*$ and add a single vertex instead so that $G_{v_i}^*=\{v_i^*\}$. Note that we also have $G_{v_1}^*=\{v_1^*\}$ and $G_{v_k}^*=\{v_k^*\}$. For $j=1,\ldots,k$, add edges between $v_i^*$ and all vertices of $G_{v_j}^*$ if there is an edge between $v_i$ and $v_j$ in $H$.
Note that after these modifications, $G^*$ stays $H$-partite.

Now we show that the reduction works.  
First suppose that $G$ has a $t'$-clique $\{u_1,\ldots,u_{t'}\}$. We pick $k$ vertices in the reduction graph $G^*$ and show that they form a copy of $H$. For each $v_i\in V(H)$ where $f(v_i)\notin \{f(v_1),f(v_k)\}$, pick the copy of $u_{f(v_i)}$ from partition $G_{v_i}^*$. Recall that $f(v_i)\le t'$ in this case. 
For each $v_i\in V(H)$ where $f(v_i)\in \{f(v_1),f(v_k)\}$, pick $v_i^*$ from $G^*_{v_i}$, so that we have $k$ vertices in total.

We need to show that if $v_iv_j$ is an edge in $H$, then there is an edge between the vertices picked from $G_{v_i}^*$ and $G_{v_j}^*$. If $f(v_i)\in \{f(v_1),f(v_k)\}$, then since $v_i^*$ is attached to all nodes in $G_{v_j}^*$ if $v_iv_j\in E(H)$, then $v_i^*$ is attached to the vertex chosen from $G_{v_j}^*$. So assume that $f(v_i),f(v_j)\notin \{f(v_1),f(v_k)\}$. If $f(v_i)=f(v_j)=\ell$, then we picked copies of $u_{\ell}$ from both partitions $G_{v_i}^*$ and $G_{v_j}^*$ and so they are connected. If $f(v_i)\neq f(v_j)$, then since $u_{f(v_i)}$ is connected to $u_{f(v_j)}$ in $G$, we have that their copies in $G_{v_i}^*$ and $G_{v_j}^*$ are connected in $G^*$ as well. So $H^*$ is isomorphic to $H$.   
 
Now if $G^*$ has a copy $H^*$ of $H$, let $g$ be the following function from $H^*$ to $H$: for every $w\in V(H^*)$, let $g(w)=v$ if $w\in G_v^*$. In fact $g$ is a homomorphism by the way $G^*$ is constructed. The image of $g$ is the set $A=\{v\in V(H)| \exists w\in V(H^*) \text{ such that } w\in G_v^*\}$. If $A$ is a proper subset of $H$, then by Lemma \ref{lem:hom-path-cycle}, $g$ maps two vertices of $H^*$ to either $v_1$ or $v_k$. This means that $H^*$ has two vertices in either $G_{v_1}^*$ or $G_{v_k}^*$. However, these sets only have one vertex. So the subgraph that $A$ induces in $H$ is a copy of $H$ and $H^*$ has exactly one vertex $w_i$ in each $G_{v_i}^*$. We show that the vertices $w_i$ help us find a $t'$-clique in $G$.
 
First note that for each $i,j$ with $v_iv_j\in E(H)$, we have $w_iw_j\in E(G^*)$. This is because there is no edge between $G_{v_i}^*$ and $G_{v_j}^*$ if $v_iv_j$ is not an edge in $H$, and so there {\em must} be an edge between $w_i$ and $w_j$ if $v_iv_j$ is an edge in $H$, so that the number of edges of $H^*$ matches that of $H$. Consider the $K_t$-minor function $f:V(H)\rightarrow \{1,\ldots, t\}$ of $H$ with which the reduction graph $G^*$ is constructed. Recall that $f(v_1),f(v_k)\in \{t'+1,t\}$. Now for each $i,j$ where $f(v_i)=f(v_j)\notin \{f(v_1),f(v_k)\}$ and $v_iv_j\in E(H)$, we have that $w_iw_j\in E(H^*)$ and so $w_i$ and $w_j$ are copies of the same vertex in $G$. So for each $\ell=1,\ldots,t'$, since the preimage $f(\ell)^{-1}$ is connected, we have that there is some $z_{\ell}\in G$ such that for all $v_i\in f(\ell)^{-1}$, $w_{i}$ is a copy of $z_{\ell}$. For each $\ell,\ell'\in \{1,\ldots,t'\}$, there are $v_i\in f(\ell)^{-1}$ and $v_j\in f(\ell')^{-1}$ such that $v_iv_j\in E(H)$. So $w_iw_{j}\in E(G^*)$, and thus $z_{\ell}z_{\ell'}\in E(G)$. So $z_1,\ldots,z_{t'}$ form a $t'$-clique.
\end{proofof}

Using Lemma \ref{lem:maxcliqueminor} and Theorem \ref{thm:hardnesscyclepath} we have the following Corollary.

\begin{corollary}
Detecting $C_k$ or $P_k$ as an induced subgraph is at least as hard as detecting a $(\floor{3k/4}-2)$-clique.
\end{corollary}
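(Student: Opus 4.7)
The plan is to combine Theorem~\ref{thm:hardnesscyclepath} (hardness of detecting the complements $\bar{C_k}$ and $\bar{P_k}$) with the clique-minor bounds of Lemma~\ref{lem:maxcliqueminor}, and then transfer the resulting hardness to induced detection of $C_k$ and $P_k$ via graph complementation. This is essentially bookkeeping, provided one observes that the reduction of Theorem~\ref{thm:hardnesscyclepath} actually outputs induced copies.

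First, I would apply Theorem~\ref{thm:hardnesscyclepath} with $H = \bar{C_k}$ and with $H = \bar{P_k}$, which gives in each case a reduction from $(\eta(H)-2)$-clique to $H$-detection in an $O(n)$-node graph $G^*$. I would then invoke Lemma~\ref{lem:maxcliqueminor} and verify case-by-case that $\eta(\bar{C_k}) \geq \lfloor 3k/4\rfloor$ and $\eta(\bar{P_k}) \geq \lfloor 3k/4\rfloor$: for $k = 4t$ both values are $3t = \lfloor 3k/4\rfloor$; for $k = 4t+1$ one has $\lfloor 3k/4\rfloor = 3t$ while the table gives $3t$ and $3t+1$; for $k = 4t+2$ both are $3t+1 = \lfloor 3k/4\rfloor$; and for $k = 4t+3$ both are $3t+2 = \lfloor 3k/4\rfloor$. (The small exceptional cases marked with $*$ yield only $0$- or $1$-clique hardness, which is vacuously true.) Substituting gives hardness of detecting $\bar{C_k}$ or $\bar{P_k}$ at least as strong as $(\lfloor 3k/4\rfloor - 2)$-clique.

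Next, I would note that the reduction graph $G^*$ produced by Theorem~\ref{thm:hardnesscyclepath} is $H$-partite by construction, so no edges run between partitions $G_u^*$ and $G_v^*$ with $uv \notin E(H)$. Consequently, any subgraph copy of $H$ in $G^*$ must use $k$ distinct partitions (one per vertex of $H$) and, by $H$-partiteness, inherits no extra edges---it is automatically an \emph{induced} copy. So the reduction already yields hardness for induced $\bar{C_k}$ (resp.\ $\bar{P_k}$) detection. Finally, I would pass to the complement: build $\overline{G^*}$ in $O(n^2)$ time and observe that $G^*$ contains an induced $\bar{C_k}$ (resp.\ $\bar{P_k}$) iff $\overline{G^*}$ contains an induced $C_k$ (resp.\ $P_k$). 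Composing this $O(n^2)$-time transformation with the reduction of Theorem~\ref{thm:hardnesscyclepath} gives the desired fine-grained reduction from $(\lfloor 3k/4\rfloor - 2)$-clique detection to induced $C_k$ or induced $P_k$ detection in an $O(n)$-node graph. The only step that is not pure bookkeeping is the $H$-partiteness observation that upgrades ``subgraph'' to ``induced subgraph''; this is the mild obstacle, and it is resolved immediately by the structure of the construction.
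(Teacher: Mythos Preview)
Your proposal is correct and matches the paper's one-line argument: combine Theorem~\ref{thm:hardnesscyclepath} with the clique-minor values of Lemma~\ref{lem:maxcliqueminor}, then pass to induced $C_k$/$P_k$ by complementation (the paper leaves this last step implicit).

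One correction on wording: your inference that ``any subgraph copy of $H$ in $G^*$ must use $k$ distinct partitions'' does \emph{not} follow from $H$-partiteness alone. That implication would fail precisely when $H$ is not a core (e.g.\ $\bar C_k$ for even $k$, or $\bar P_k$), which is the entire reason Theorem~\ref{thm:hardnesscyclepath} requires extra work beyond Corollary~\ref{thm:core-hardness}. Colorfulness in that construction is forced by the shrinking of the partitions $G^*_{v_1}$ and $G^*_{v_k}$ to singletons together with Lemma~\ref{lem:hom-path-cycle}, not by $H$-partiteness; once colorfulness is established (which the theorem's proof already does), your deduction that the copy is automatically induced is correct. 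Alternatively you can bypass this step and simply cite the general fact, used by the paper just before Theorem~\ref{cor:odd}, that induced detection of any pattern is at least as hard as its non-induced detection.
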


\section{Paired Pattern Detection}
\label{sec:paired}
In this section we look at hardness results as well as algorithms for Paired Pattern Detection (and more generally Subset Pattern Detection). Recall that for a set of patterns $S$, by (induced) $S$-detection we mean finding a (induced) copy of one of the patterns in $S$, or indicating that there is no copy of any of the patterns in $S$.
\subsection{Hardness for Subset Pattern Detection}

Suppose that we want to prove hardness for detecting $S=\{H_1,\ldots, H_s\}$ in a host graph $G$. Let $C_i$ be the core of $H_i$ for all $i=1,\ldots,s$. We first prove a series of lemmas about the relations between the cores $C_i$, and then use these lemmas to prove Theorem \ref{thm:set-hardness} below whose corollary was stated in the introduction.


\begin{restatable}{theorem}{sethardness}
\label{thm:set-hardness}
Let $G$ be an $n$-node host graph and let $S$ be a set of patterns. There is a pattern $H\in S$, with core $C$ such that one can construct an $O(n)$-node graph $G^*$ in $O(n^2|V(H)|^2)$ time where $G$ has a copy of $C$ as a subgraph if and only if $G^*$ has a copy of a pattern in $S$ as a subgraph.
\end{restatable}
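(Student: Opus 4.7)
The plan is to reuse the construction of Theorem~\ref{thm:general-hardness}, applied to a carefully chosen representative $H \in S$ with core $C$, and argue that the same graph $G^*$ now contains a copy of \emph{some} pattern in $S$ if and only if $G$ contains a copy of $C$. The forward direction is immediate: Theorem~\ref{thm:general-hardness} already guarantees a copy of $H$ in $G^*$ whenever $G$ contains $C$, and $H \in S$. The construction time matches that of Theorem~\ref{thm:general-hardness}, namely $O(n^2 |V(H)|^2)$.

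\textbf{Choosing $H$.} The plan is to form the ``homomorphism digraph'' $\mathcal{H}$ on vertex set $S$ with an arc $F \to F'$ whenever a homomorphism $F \to F'$ exists, and then pick an SCC $X$ of $\mathcal{H}$ such that no arc enters $X$ from outside (a source in the condensation, which always exists). All patterns in $X$ are homomorphically equivalent, so by the uniqueness of cores they share a common core $C$. Take $H$ to be any pattern in $X$, let $r$ be its $C$-covering number, and form $G^*$ exactly as in the proof of Theorem~\ref{thm:general-hardness}, with its chosen $C$-covering $\mathcal{C} = \{C_1,\dots,C_r\}$ of $H$ and the $C$-coloring $f$ of $C_1$.

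\textbf{Backward direction.} Suppose $G^*$ contains a subgraph $H'^* \cong H'$ for some $H' \in S$. Since $G^*$ is $H$-partite, Lemma~\ref{lem:hom-H-partite} supplies a homomorphism $H'^* \to H$, hence $H' \to H$. Because $X$ has no incoming arcs from outside, $H' \in X$, so $core(H')=C$; in particular $H'^*$ contains at least one copy of $C$. Each copy of $C$ inside $H'^*$ is sent by Lemma~\ref{lem:hom-H-partite} to a homomorphic image of $C$ in $H$ which, since $C$ is a core, is itself a copy of $C$ in $H$ and therefore lies in some $C_i$. If no copy of $C$ in $H'^*$ fell into $S_1 = \bigcup_{v \in C_1} G^*_v$, then pulling back each $C$-coloring $f_i$ of $C_i$ ($i \ge 2$) along the canonical map $H'^* \to H$ produces a $C$-covering of $H'^*$ of size $r-1$. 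This is a contradiction once we know that the $C$-covering number is invariant under homomorphic equivalence, which I would prove as an auxiliary lemma: given $\phi: H_1 \to H_2$ between graphs sharing core $C$, pulling back a minimum $C$-covering of $H_2$ along $\phi$ yields a $C$-covering of $H_1$ of the same size, since $\phi$ is injective on every copy of $C$ in $H_1$ (composing $\phi$ restricted to such a copy with a retraction $H_2 \to C$ must be an automorphism of $C$, because $\mathrm{Hom}(C,C) = \mathrm{Aut}(C)$ for cores). Hence some copy of $C$ in $H'^*$ does live in $S_1$, and the concluding step of Theorem~\ref{thm:general-hardness} turns it into a copy of $C$ in $G$: the $|V(C)|$ vertices of that copy sit in distinct partitions $G^*_v$ with distinct values of $f(v)$, hence in disjoint color classes of $G$, and the cross-partition edges of $G^*$ mirror edges of $G$ exactly.

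\textbf{Main obstacle.} The delicate step is the invariance of the $C$-covering number under homomorphic equivalence; it is what allows $H$ to be chosen arbitrarily inside $X$ rather than having to be, say, a minimizer of this invariant. Verifying the invariance lemma using the core identity $\mathrm{Hom}(C,C) = \mathrm{Aut}(C)$ is the non-routine part; everything else is an almost verbatim re-run of the proof of Theorem~\ref{thm:general-hardness} with ``copy of $H$'' replaced throughout by ``copy of some $H' \in X$''.
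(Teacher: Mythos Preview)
Your proposal is correct and follows essentially the same approach as the paper: form the homomorphism digraph on $S$, pick a source SCC $X$, observe all patterns there share a common core $C$, build $G^*$ from Theorem~\ref{thm:general-hardness} for some $H\in X$, and for the backward direction show any $H'\in S$ appearing in $G^*$ must lie in $X$ and then rerun the $C$-covering contradiction from Theorem~\ref{thm:general-hardness}. The only cosmetic difference is that the paper phrases the choice of $H$ as ``a pattern in $X$ with minimum $C$-covering number'' and packages the backward step as a separate lemma (Lemma~\ref{lem:equal-cores-reduction}), whereas you pick $H$ arbitrarily in $X$ and instead prove directly that the $C$-covering number is invariant under homomorphic equivalence (your auxiliary lemma is exactly the paper's Lemma~\ref{lem:c-covering-num}, and invariance is the paper's Lemma~\ref{lem:equalcores}); since all patterns in $X$ have equal $C$-covering number anyway, the two formulations coincide.
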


Let $G_i^*$ be the reduction graph of Theorem \ref{thm:general-hardness} for pattern $H_i$: detecting $C_i$ in a host graph $G$ reduces to detecting $H_i$ in $G_i^*$. We are going to state a few lemmas that help us prove Theorem \ref{thm:set-hardness}.

\begin{lemma}
\label{lem:not-hom}
Let $H_1$ and $H_2$ be two patterns. If there is no homomorphism from $H_1$ to $H_2$, then there is no copy of $H_1$ in $G_2^*$. 
\end{lemma}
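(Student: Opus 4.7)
The plan is to derive a contradiction from the existence of a copy of $H_1$ inside $G_2^*$ by producing a homomorphism from $H_1$ to $H_2$, which is forbidden by hypothesis. The key structural ingredient is already available: by Observation \ref{obs:G*-H-partite}, the reduction graph $G_2^*$ produced in the proof of Theorem \ref{thm:general-hardness} is $H_2$-partite, meaning its vertex set decomposes as $\bigcup_{v\in V(H_2)} G^*_v$ with no edges between parts $G^*_v$ and $G^*_u$ whenever $vu\notin E(H_2)$.

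First I would suppose, for contradiction, that $G_2^*$ contains a subgraph $F$ isomorphic to $H_1$. Then $F$ is a subgraph of the $H_2$-partite graph $G_2^*$, so I can directly invoke Lemma \ref{lem:hom-H-partite}: the map $g\colon V(F)\rightarrow V(H_2)$ sending each vertex $w\in V(F)$ to the unique $v\in V(H_2)$ with $w\in G^*_v$ is a homomorphism from $F$ to $H_2$. Composing $g$ with the isomorphism from $H_1$ to $F$ yields a homomorphism $H_1\rightarrow H_2$, contradicting the assumption that no such homomorphism exists. Hence no copy of $H_1$ can appear in $G_2^*$.

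There is essentially no obstacle here: the whole argument is a direct application of the $H_2$-partite structure of the reduction graph together with Lemma \ref{lem:hom-H-partite}. The only thing to double-check is the convention being used for a ``copy'': since a subgraph-copy of $H_1$ (induced or not) still sits inside an $H_2$-partite graph, the same homomorphism argument goes through uniformly, so the lemma is robust to either reading.
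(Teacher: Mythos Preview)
Your proof is correct and follows exactly the same route as the paper: invoke Observation~\ref{obs:G*-H-partite} to note that $G_2^*$ is $H_2$-partite, then apply Lemma~\ref{lem:hom-H-partite} to obtain a homomorphism $H_1\to H_2$, yielding the contradiction.
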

\begin{proof}
Since by Observation \ref{obs:G*-H-partite} $G_2^*$ is a $H_2$-partite graph, if there is a copy of $H_1$ in $G_2^*$, then by Lemma \ref{lem:hom-H-partite} $H_1$ has a homomorphism to $H_2$, which is a contradiction. 
\end{proof}
\begin{corollary}
Let $S=\{H_1,\ldots, H_s\}$ be a set of patterns such that $C_j$ is the core of $H_j$ for all $j=1,\ldots,s$. Then if there is $H_i$ such that for all $j\neq i$ there is no homomorphism from $H_j$ to $H_i$, there is no copy of $H_j$ in $G_i^*$ for all $j\neq i$.
\end{corollary}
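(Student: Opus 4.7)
The plan is to observe that this corollary is nothing more than a uniform application of Lemma \ref{lem:not-hom} across every $j \neq i$. The cores $C_j$ appear in the statement because the surrounding discussion (leading into Theorem \ref{thm:set-hardness}) needs to refer to them, but for the conclusion of this particular corollary the cores play no role; the whole content is a non-homomorphism hypothesis plus an appeal to the preceding lemma.

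Concretely, I would first fix an arbitrary index $j \in \{1,\dots,s\}$ with $j \neq i$. By hypothesis, there is no homomorphism from $H_j$ to $H_i$. I would then invoke Lemma \ref{lem:not-hom} with the substitution $H_1 := H_j$ and $H_2 := H_i$; its conclusion states exactly that $G_i^*$ contains no copy of $H_j$. Since $j$ was arbitrary among the indices different from $i$, this establishes the corollary.

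There is essentially no obstacle here: the lemma does all the work, and the corollary is just a quantifier extension over the remaining patterns in $S$. The only thing worth double-checking is that Lemma \ref{lem:not-hom} applies equally well regardless of whether $H_j$ and $H_i$ share a core or not --- indeed, inspecting its proof, the hypothesis is purely about non-existence of a homomorphism and the $H$-partite structure of $G_i^*$ from Observation \ref{obs:G*-H-partite}, so no assumption on cores is needed for the lemma to fire. Thus the proof will be one sentence plus a reference.
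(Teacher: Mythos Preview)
Your proposal is correct and matches the paper's approach exactly: the paper states this corollary immediately after Lemma~\ref{lem:not-hom} without a separate proof, treating it as an immediate consequence of applying the lemma to each $j\neq i$. Your observation that the cores $C_j$ play no role here is also accurate.
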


\begin{lemma}
\label{lem:c-covering-num}
Let $H_1,H_2$ be two patterns with isomorphic cores $C$ such that there is a homomorphism from $H_1$ to $H_2$. Then the $C$-covering number of $H_1$ is at most as big as the $C$-covering number of $H_2$. 
\end{lemma}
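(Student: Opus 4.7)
The plan is to pull back a minimum $C$-covering of $H_2$ along the homomorphism $\phi: H_1 \to H_2$ to obtain a $C$-covering of $H_1$ of no larger size. Let $\{D_1, \ldots, D_r\}$ be a minimum $C$-covering of $H_2$, where $r$ is the $C$-covering number of $H_2$, and let $f_i : D_i \to \{1,\ldots,c\}$ be an associated $C$-coloring of the subgraph induced by $D_i$. Define $D'_i := \phi^{-1}(D_i) \subseteq V(H_1)$ for each $i$. I claim that $\{D'_1, \ldots, D'_r\}$ is a $C$-covering of $H_1$, which immediately gives the bound.

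The central technical step is: \emph{for every copy $C'$ of $C$ in $H_1$, the restriction $\phi|_{V(C')}$ is injective on vertices, and $\phi(C')$ is a copy of $C$ in $H_2$.} To prove this, I would use that $C$ is the core of $H_2$, so there is a retraction $r: H_2 \to C$. Composing with the homomorphism $\phi|_{V(C')} : C \to H_2$, the map $r \circ \phi|_{V(C')}$ is an endomorphism of $C$, which must be an automorphism by the defining property of cores; hence $\phi|_{V(C')}$ is already injective on vertices. Injectivity on vertices plus the homomorphism property forces $\phi$ to carry the edges of $C'$ to distinct edges of $\phi(C')$, giving $|E(\phi(C'))| \ge |E(C)|$. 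Conversely, $r$ restricted to $\phi(V(C'))$ is a homomorphism to $C$ which is bijective on vertices, so $|E(\phi(C'))| \le |E(C)|$; equality, together with vertex-bijection, forces $\phi(C')$ to be isomorphic to $C$.

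With this key step, covering property (1) is immediate: any copy $C'$ of $C$ in $H_1$ has image $\phi(C')$ that is a copy of $C$ in $H_2$, so $\phi(C')$ is contained in some $D_i$, hence $V(C') \subseteq \phi^{-1}(D_i) = D'_i$. For covering property (2), I define $f'_i(v) := f_i(\phi(v))$ for $v \in D'_i$ and verify it is a $C$-coloring of the subgraph induced by $D'_i$. Given any copy $C'$ of $C$ in this induced subgraph, the key step shows $\phi(C')$ is a copy of $C$ lying inside the subgraph induced by $D_i$; since $f_i$ is a $C$-coloring, the vertices of $\phi(C')$ receive distinct colors under $f_i$, and vertex-injectivity of $\phi|_{V(C')}$ then forces the vertices of $C'$ to receive distinct colors under $f'_i$. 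Hence $D'_i$ is $C$-colorable, and $\{D'_1,\ldots,D'_r\}$ is a $C$-covering of $H_1$ of size at most $r$.

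The main obstacle is the technical step that $\phi$ carries copies of $C$ to copies of $C$ in a vertex-injective way. Without the core hypothesis on $C$ this would fail in general, since a homomorphism can collapse a subgraph; the retraction $r : H_2 \to C$ available from $C$ being the core of $H_2$ is exactly what rules this out and is where the hypothesis of the lemma is used.
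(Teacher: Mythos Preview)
Your proposal is correct and follows essentially the same approach as the paper: both pull back a minimum $C$-covering $\{D_1,\ldots,D_r\}$ of $H_2$ along the homomorphism to $\{\phi^{-1}(D_1),\ldots,\phi^{-1}(D_r)\}$ and verify the two covering properties using the fact that copies of $C$ map to copies of $C$. The only difference is that you spell out in detail the ``key technical step'' (injectivity of $\phi$ on copies of $C$ via the retraction $r$ and the automorphism property of cores), whereas the paper simply asserts ``from the definition of core, $g$ takes any copy of $C$ in $H_1$ to a copy of $C$ in $H_2$'' and leaves this verification implicit.
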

\begin{proof}
Let $g:V(H_1)\rightarrow V(H_2)$ be a homomorphism from $H_1$ to $H_2$. First note that from the definition of core, $g$ takes any copy of $C$ in $H_1$ to a copy of $C$ in $H_2$. Let $\mathcal{C}=\{C_1,\ldots,C_r\}$ be a $C$-covering of $H_2$. Then $\{g^{-1}(C_1),\ldots, g^{-1}(C_r)\}$ is a $C$-covering of $H_1$: First, suppose $C'$ is a copy of $C$ in $H_1$, and suppose that $g$ takes $C'$ to $g(C')$ which is a copy of $C$ in $H_2$ (copies of $C$ in $H_1$ must be take to copies of $C$ in $H_2$ since $C$ is a core). So there is $1\le i\le r$ such that $g(C')\subseteq C_i$. Hence $C'\subseteq g^{-1}(C_i)$.

Now we show that $g^{-1}(C_i)$ is $C$-colorable for all $i$. Note that $g^{-1}(v)$ is an independent set for any $v\in V(H_2)$. So if $f_i:V(H_2)\rightarrow \{1,\ldots,|C|\}$ is a $C$-coloring for $C_i$, then color $u\in g^{-1}(C_i)$ with $f(g(u))$. Now if $C'$ is a copy of $C$ in $H_1$, all vertices of $C'$ have different colors. This is because all vertices of $g(C')$ have different colors by the definition of $f$.

So we found a $C$-covering of $H_1$ with $r$ sets, where $r$ is the $C$-covering number of $H_2$. Since the $C$-covering number is the size of the smallest $C$-covering, we proved the lemma. 
\end{proof}

\begin{lemma}
\label{lem:equalcores}
Let $S=\{H_1,\ldots, H_s\}$ be a set of patterns such that $C_j$ is the core of $H_j$ for all $j=1,\ldots,s$. Suppose that there is a homomorphism from $H_{j}$ to $H_{j+1}$ for all $j=1,\ldots,s$, where $j$ is taken mod $s$. Then all $C_i$s are isomorphic. Moreover, if all $C_i$s are isomorphic to $C$, the $C$-covering number of all of the patterns is the same. 
\end{lemma}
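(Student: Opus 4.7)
The plan is to exploit the cyclic chain of homomorphisms in two stages: first pin down that all cores $C_i$ are isomorphic, then use Lemma \ref{lem:c-covering-num} around the cycle to force equality of $C$-covering numbers. Both stages ultimately rely on the standard rigidity property of cores: every endomorphism of a core is an automorphism.

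For the first stage, I would compose the given homomorphisms $H_j \to H_{j+1}$ around the cycle to obtain, for every pair $i \neq j$, a homomorphism $H_i \to H_j$. Sandwiching this with the inclusion $C_i \hookrightarrow H_i$ and with the retraction $H_j \to C_j$ guaranteed by the definition of the core, I get homomorphisms $\alpha_{ij} : C_i \to C_j$ and, symmetrically, $\alpha_{ji} : C_j \to C_i$. Then $\alpha_{ji} \circ \alpha_{ij}$ is an endomorphism of the core $C_i$, hence an automorphism; likewise $\alpha_{ij} \circ \alpha_{ji}$ is an automorphism of $C_j$. This forces both $\alpha_{ij}$ and $\alpha_{ji}$ to be bijective on vertices. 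To upgrade $\alpha_{ij}$ from a bijective homomorphism to a genuine graph isomorphism I would argue that it cannot send a non-edge to an edge: if $uv$ is a non-edge of $C_i$ and $\alpha_{ij}(u)\alpha_{ij}(v)$ were an edge of $C_j$, then applying $\alpha_{ji}$ would give an edge of $C_i$ between $\alpha_{ji}(\alpha_{ij}(u))$ and $\alpha_{ji}(\alpha_{ij}(v))$, contradicting the fact that $\alpha_{ji} \circ \alpha_{ij}$ is an automorphism of $C_i$ and therefore preserves non-edges. Hence $\alpha_{ij}$ is an isomorphism and $C_i \cong C_j$.

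For the second stage, assuming now that every $C_i$ is isomorphic to a common core $C$, I would apply Lemma \ref{lem:c-covering-num} to each given homomorphism $H_j \to H_{j+1}$ to conclude that the $C$-covering number of $H_j$ is at most the $C$-covering number of $H_{j+1}$. Walking around the cycle gives
\[
r(H_1) \le r(H_2) \le \cdots \le r(H_s) \le r(H_1),
\]
where $r(H_j)$ denotes the $C$-covering number of $H_j$. Every inequality must therefore be an equality, so all $C$-covering numbers coincide.

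The main obstacle I foresee is the edge-preservation half of the first stage: a bijective graph homomorphism is not automatically an isomorphism, since homomorphisms may collapse non-edges into edges. The fix is precisely the double use of the core rigidity on both $\alpha_{ji} \circ \alpha_{ij}$ and $\alpha_{ij} \circ \alpha_{ji}$, which simultaneously yields bijectivity in both directions and rules out any stray edges in the image. Once this point is handled cleanly, the rest of the argument is routine bookkeeping around the cycle.
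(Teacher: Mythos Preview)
Your proposal is correct. The second stage (chaining Lemma~\ref{lem:c-covering-num} around the cycle) is exactly what the paper does. For the first stage the paper takes a slightly different tack: it fixes the $C_1$ with the largest edge count, composes the core-to-core homomorphisms $g_j:C_j\to C_{j+1}$ all the way around so that $g:C_1\to C_1$ must be an automorphism, and then argues step by step that $g_1$ is injective, hits every edge of $C_2$ (by maximality of $|E(C_1)|$), and hits every vertex of $C_2$ (since a core has no isolated vertices), hence is an isomorphism; it then iterates. Your two-sided argument ($\alpha_{ji}\circ\alpha_{ij}$ and $\alpha_{ij}\circ\alpha_{ji}$ are both automorphisms, so $\alpha_{ij}$ is a vertex bijection that preserves non-edges) is the textbook ``hom-equivalent graphs have isomorphic cores'' proof and is cleaner and more symmetric, avoiding the extremal edge-count step and the isolated-vertex observation. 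Both routes rest on the same rigidity of cores; yours just uses it twice instead of once plus counting.
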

\begin{proof}
Note that since $H_j$ has a homomorphism to $H_{j+1}$, $C_j$ is a subgraph of $H_j$ and $H_{j+1}$ has a homomorphism to $C_{j+1}$, we have that $C_j$ has a homomorphism to $C_{j+1}$. 
WLOG suppose that $C_1$ has the highest number of edges among  $C_j$s for $j=1,\ldots,s$. Let $g_j:V(C_j)\rightarrow V(C_{j+1})$ be a homomorphism from $C_j$ to $C_{j+1}$, for all $j=1,\ldots, s$. So $g=g_1 \circ \ldots \circ g_s$ is a homomorphism from $C_1$ to $C_1$. Since $C_1$ is a core, the image $g(C_1)$ must be isomorphic to $C_1$. This in particular means that the size of the image $g_1(C_1)$ is the same as $C_1$ and $g_1$ in an injection. So no two edges are mapped to one edge, and so $|E(C_2)|\ge |E(C_1)|$. So the number of edges of $C_1$ and $C_2$ is the same and all the edges of $C_2$ are in the image of $g_1$. Moreover, since $C_2$ is a core, it has no single vertex. So all nodes of $C_2$ are in the image of $g_1$. Hence, $C_2$ and $C_1$ are isomorphic. Similarly we can reason about $g_2$ next, and we can say that all cores are isomorphic. 

Now by Lemma \ref{lem:c-covering-num} the $C$-covering number of all $H_i$s are equal. 
\end{proof}

\begin{lemma}
\label{lem:equal-cores-reduction}
Let $S=\{H_1,\ldots, H_s\}$ be a set of patterns and let $C$ be a core graph and suppose that $C$ is the core of all patterns in $S$. Let $G$ be a host graph, and suppose that $H_{1}$ has the minimum $C$-Covering number among all patterns in $S$ (there might be other patterns in $S$ with the same $C$-covering number). Then if there is a copy of $H_i$ in $G_1^*$, $H_i$ must have the same $C$-covering number as $H_1$, and there is a copy of $C$ in $G$ with high probability.
\end{lemma}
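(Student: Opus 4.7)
The plan is to adapt the argument of Theorem~\ref{thm:general-hardness} to this multi-pattern setting, exploiting the minimality of $H_1$'s $C$-covering number to constrain every alternative pattern $H_i$ that might appear in $G_1^*$.

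First I would argue that any copy of $H_i$ inside $G_1^*$ forces $H_i$ to have the \emph{same} $C$-covering number as $H_1$. By Observation~\ref{obs:G*-H-partite} the graph $G_1^*$ is $H_1$-partite, so by Lemma~\ref{lem:hom-H-partite} any copy of $H_i$ in $G_1^*$ induces a homomorphism $H_i \to H_1$. Since $H_i$ and $H_1$ share the core $C$, Lemma~\ref{lem:c-covering-num} gives that the $C$-covering number of $H_i$ is at most that of $H_1$; the minimality hypothesis on $H_1$ then forces equality. I will call this common value $r$.

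Next I would locate a copy of $C$ entirely inside the ``expanded'' region of $G_1^*$. Recall that $G_1^*$ is constructed from a minimum $C$-covering $\{C_{1,1},\ldots,C_{1,r}\}$ of $H_1$, where only the vertices of $C_{1,1}$ are expanded into color classes $G^*_v \cong G^{(f(v))}$ via a $C$-coloring $f$ of $C_{1,1}$, while $G^*_v=\{v^*\}$ for $v\notin C_{1,1}$. Setting $S_j=\bigcup_{v\in C_{1,j}}G^*_v$ and writing $H_i^* \subseteq G_1^*$ for the given copy of $H_i$, I would show by contradiction that some copy of $C$ in $H_i^*$ sits entirely inside $S_1$: otherwise the family $\{S_2\cap H_i^*,\ldots,S_r\cap H_i^*\}$ would constitute a $C$-covering of $H_i^* \cong H_i$ of size $r-1$, contradicting the first step.

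The main obstacle here is verifying that each $S_j\cap H_i^*$ is actually $C$-colorable, since the $C$-coloring of $H_1$ is not directly attached to the vertices of the distinct pattern $H_i$. I plan to pull back the $C$-coloring of $C_{1,j}$ through the $H_1$-partite structure: given a $C$-coloring $f_j\colon C_{1,j}\to\{1,\dots,c\}$, I assign to a vertex $w\in S_j\cap H_i^*$ the color $f_j(v)$, where $v\in C_{1,j}$ is the unique vertex with $w\in G^*_v$. For any copy $C^\star$ of $C$ in $S_j\cap H_i^*$, Lemma~\ref{lem:hom-H-partite} produces a homomorphism $g\colon C^\star\to H_1$ whose image lies in $C_{1,j}$; since $C$ is a core, the composition with the retraction $H_1\to C$ must be an automorphism of $C$, which forces $g$ to be injective on $C^\star$ and its image to be a copy of $C$ inside $C_{1,j}$. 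Because $f_j$ is a $C$-coloring, the vertices of that image, and hence those of $C^\star$, receive pairwise distinct colors.

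Finally, I would recover the desired copy of $C$ in $G$ from the copy $C^\star=\{w_1,\dots,w_c\}\subseteq S_1$ exactly as in the last step of the proof of Theorem~\ref{thm:general-hardness}: the homomorphism $g$ sends $C^\star$ onto a copy $\{v_1,\dots,v_c\}$ of $C$ in $C_{1,1}$, the $C$-coloring $f$ assigns distinct labels to the $v_i$'s, so the $w_i$'s are copies of $G$-vertices drawn from distinct color classes $G^{(f(v_i))}$. By the construction of the edges between distinct color classes in $G_1^*$, the underlying $G$-vertices form a copy of $C$, with the ``with high probability'' coming from the color-coding step.
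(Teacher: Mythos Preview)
Your proposal is correct and follows essentially the same route as the paper's proof: both use the $H_1$-partite structure of $G_1^*$ together with Lemma~\ref{lem:c-covering-num} to force equality of the $C$-covering numbers, then run the $S_1$-avoidance contradiction by pulling back the $C$-colorings $f_j$ along the partition map, and finally extract a copy of $C$ in $G$ exactly as in Theorem~\ref{thm:general-hardness}. Your injectivity argument for $g$ via composition with the retraction $H_1\to C$ is slightly more explicit than the paper's phrasing, but it is the same underlying fact.
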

\begin{proof}
Since we are going to use the reduction graph $G_1^*$, for the sake of completeness we are going to explain the construction of this graph. We denote $G_1^*$ by $G^*$ for simplicity. Let $\mathcal{C}=\{C_1,\ldots,C_{r}\}$ be a minimum $C$-covering of $H_1$, where $n_1=r$ is the $C$-covering number of $H_1$. We color-code $G$ with $|C|=c$ colors $\{1,\ldots,c\}$ and assume that if $G$ has a copy of $C$ in it, it is a colorful copy with high probability. 

The reduction graph $G^*$ is a $H_1$-partite graph, with partitions $G^*_v$ for $v\in H_1$, such that if $v\notin C_1$ then the partition $G_v^*$ has only a copy of $v$ and if $v\in C_1$, the partition $G_v^*$ has a copy of a subset of $G$. Particularly if $f_1:C_1\rightarrow \{1,\ldots,c\}$ is a $C$-coloring of the subgraph $C_1\subseteq H$, then the partition $G_v^*$ is the set of nodes in $G$ with color $f_1(v)$. To define the edges of $G^*$, for any $uv\in E(H)$, if one of $u$ and $v$ is not in $C_1$, all the nodes in $G_u^*$ is attached to all the nodes in $G_v^*$. If both $u$ and $v$ are in $C_1$, then if they are of the same color ($f_1(v)=f_1(u)$) we put a complete matching between $G_u^*$ and $G_v^*$: for any $w\in V(G)$ with color $f_1(u)=f_1(v)$, we put an edge between the copies of $w$ in partitions $G_u^*$ and $G_v^*$. If $u$ and $v$ are of different colors, we connect a node in $G_u^*$ to a node in $G_v^*$ if and only if they are connected in $G$. 

Let the $C$-covering number of $H_i$ be $n_i$. By Lemma \ref{lem:hom-H-partite}, there is a homomorphism from $H_i$ to $H_1$, and by Lemma \ref{lem:c-covering-num}, $n_1\ge n_i$. By minimality of $n_1$, we have that $n_1=n_i=r$. 

Now we prove that there is a copy of $C$ in $G$. The proof is similar to Theorem \ref{thm:general-hardness} (but due to small technicalities we can't use Theorem \ref{thm:general-hardness} directly). Let the copy of $H_i$ in $G^*$ be $H_i^*$. By Lemma \ref{lem:hom-H-partite} and the fact that $C$ is a core, any copy of $C$ in $H_i^*$ maps to a copy of $C$ in $H_1$. Let $S_j=\cup_{v\in C_j}G_v^*$ for $j=1,\ldots,r$. We also know that if there is a copy of $C$ in $H_i^*$ that is in $S_j$, then this copy is mapped to a copy of $C$ in $C_j$. 

Now suppose that there is no copy of $C$ in $H^*_i\cap S_1$. Then each copy of $C$ in $H_i^*$ is mapped to a copy of $C$ in $H_1$ that is not in $C_1$. So the copies of $C$ in $H^*_i$ are covered by $\mathcal{S}=\{S_2\cap H^*_i,\ldots, S_r\cap H^*_i\}$. If we show that $S_j\cap H^*_i$ is $C$-colorable for all $j$, then $\mathcal{S}$ is a $C$-covering of size $r-1$ for $H^*_i$ and since $H^*_i$ is a copy of $H_i$, this is a contradiction to the $C$-covering number of $H_i$. 

To see that $S_j\cap H^*_i$ is $C$-colorable, let $f_j:C_j\rightarrow \{1,\ldots,c\}$ be a $C$-coloring of $C_j$. For $v\in S_j\cap H_i^*$, color $v$ the same as $f_j(w)$ if $v\in G_w^*$. Now we see that each copy of $C$ in $S_j$ has distinct colors because it is mapped to a copy of $C$ in $C_j$ with the coloring preserved by the mapping. Since $f_j$ is a $C$-coloring, this copy has distinct colors.

So $H^*_i$ has a copy $C^*=\{w_1,\ldots,w_c\}$ of $C$ in $S_1$, such that $w_i\in G_{v_i}$ for some $v_i\in C_1$ and $v_i\neq v_j$ for each $i\neq j$. Moreover, by Lemma \ref{lem:hom-H-partite}, we have that $v_1,\ldots,v_c$ form a subgraph isomorphic to $C$ in $H_1$. So these nodes must have different colors with respect to the coloring $f_1$, so $f_1(v_i)\neq f_1(v_j)$ for all $i\neq j$. This means that $w_1,\ldots,w_c$ are copies of distinct vertices in $G$, and hence they are attached in $G^*$ if and only if they are attached in $G$. So they form a subgraph isomorphic to $C$ in $G$.
\end{proof}

\begin{proofof}{Theorem \ref{thm:set-hardness}}
To help us find the pattern $H$, we create a directed graph $F$ as follows. The vertices of $F$ are patterns in $S$, and we add an edge from the vertex assigned to $H\in S$ to $H'\in S$ if there is a homomorphism from $H$ to $H'$. Consider the strongly connected components of $F$. These components form a DAG. Consider the strongly connected component $F_1$ that doesn't have any incoming edge from other components to it. Note that by Lemma \ref{lem:equalcores} all patterns in $F_1$ have isomorphic cores, since any two patterns in $F_1$ are in a cycle. Let this shared core be $C$, and let $H\in F_1$ be a pattern with the minimum $C$-covering number among all patterns in $F_1$. 

We show that detecting $C$ in $G$ reduces to detecting $S$ in $G^*$, where $G^*$ is the graph created in Theorem \ref{thm:general-hardness} for pattern $H$, core $C$ and host graph $G$. First, suppose that there is a copy of $C$ in $G$. Then by the proof of Theorem \ref{thm:general-hardness}, there is a copy of $H$ in $G^*$. 

Now suppose that there is a copy of $H'\in S$ in $G^*$. If $H'\notin F_1$, there is no homomorphism from $H'$ to $H$.  By Lemma \ref{lem:not-hom}, this is a contradiction. So $H'\in F_1$. By Lemma \ref{lem:equal-cores-reduction}, there is a copy of $C$ in $G$. 
\end{proofof}

Assume all patterns in $S$ have a $k$-clique. Since there is a homomorphism from a pattern to its core, the core of any pattern must have a $k$-clique. So detecting the core of any pattern in $S$ is at least as hard to detect as a $k$-clique. Thus we obtain Theorem \ref{thm:set-detection-cor} from Theorem \ref{thm:set-hardness}. 

\setdetectioncore*

\subsection{Algorithms}
In this section we focus on algorithms for induced Pair Pattern Detection. Since we are only working on induced detection, we might refer to induced $S$-detection as $S$-detection for any pair $S$. First we give algorithms for detecting sets of $3$-node patterns which are proved in the Appendix. We show that we can detect any pair of $3$-node patterns in $O(n^2)$ time. Next we show that we can detect any pair of $4$-node patterns in $O(n^{\omega})$ time. Afterwards we focus on specific pairs, in particular the case where one of the patterns is $C_4$, and try to decrease this running time. 


\begin{theorem}
\label{thm:3-node-pairs}
Let $H_1$ and $H_2$ be two $3$-node patterns. If $\{H_1,H_2\}\neq \{K_3,P_2\cup I_1\}$ and $\{H_1,H_2\}\neq \{I_3,P_3\}$, then there is an algorithm for induced $\{H_1,H_2\}$ detection that runs in $O(m+n)$ time in an $m$-edge $n$-node host graph. For the cases $\{K_3,P_2\cup I_1\}$ and $\{I_3,P_3\}$, there is an algorithm running in $O(n^2)$ time.
\end{theorem}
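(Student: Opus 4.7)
The plan is to enumerate the $\binom{4}{2}=6$ unordered pairs of distinct $3$-node patterns, chosen from $K_3$, $P_3$, $P_2 \cup I_1$, and $I_3$. For each pair I would characterize exactly which graphs avoid both patterns of the pair as induced subgraphs, and the algorithm then follows directly: test whether the input graph has the forbidden-free structure, and if not, read off a witnessing three-vertex induced copy of one of the two patterns.

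For the four pairs outside $\{K_3,P_2\cup I_1\}$ and $\{P_3,I_3\}$, the characterizations are tight and yield linear-time checks. A graph is $\{K_3,P_3\}$-free iff it has maximum degree at most $1$, since any vertex of degree $\geq 2$ has two neighbors that are adjacent (giving $K_3$) or non-adjacent (giving $P_3$); this is checkable in $O(m+n)$ by scanning degrees. A graph is $\{K_3,I_3\}$-free only when $n\leq 5$ by Ramsey's $R(3,3)=6$, and when $n\geq 6$, any fixed vertex $v$ has either its neighborhood or its non-neighborhood of size $\geq 3$ by pigeonhole, so testing adjacency among three vertices of the larger side locates a $K_3$ or $I_3$ in $O(m+n)$ time. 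A graph is $\{P_3, P_2\cup I_1\}$-free iff it is either $K_n$ or $\overline{K_n}$: $P_3$-freeness forces each component to be a clique, and combining with $P_2\cup I_1$-freeness rules out having an edge-containing clique together with any other vertex; testable in $O(m+n)$. Finally, $\{P_2\cup I_1, I_3\}$-freeness is the complement of $\{P_3,K_3\}$-freeness, so such a graph is the complement of a matching, testable in $O(m+n)$ from the degree sequence alone without constructing the complement.

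For the two hard pairs I would prove the following structural characterizations and then check them in $O(n^2)$. A graph is $\{P_3,I_3\}$-free iff it is a disjoint union of at most two cliques: $P_3$-freeness forces each connected component to be a clique, and $I_3$-freeness rules out three or more components. The algorithm computes connected components, verifies each is a clique by comparing $\sum_i \binom{|C_i|}{2}$ with $m$, and on failure extracts a $P_3$ inside a non-clique component or an $I_3$ by picking one vertex from each of three components. Dually, a graph is $\{K_3, P_2\cup I_1\}$-free iff it is complete bipartite: picking any edge $uv$, triangle-freeness shows $N(u)\setminus\{v\}$ and $N(v)\setminus\{u\}$ are disjoint independent sets, $P_2\cup I_1$-freeness shows their union covers $V\setminus\{u,v\}$, and applying $P_2\cup I_1$-freeness once more to the edge $ua$ (for $a\in N(u)\setminus\{v\}$) together with a candidate non-neighbor $b\in N(v)\setminus\{u\}$ forces every such pair $(a,b)$ to be adjacent, so $\{u\}\cup (N(v)\setminus\{u\})$ and $\{v\}\cup(N(u)\setminus\{v\})$ form the two sides of a complete bipartite graph. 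The algorithm runs a BFS to test bipartiteness and recover the candidate bipartition, then verifies completeness across it; any failed check yields an induced $K_3$ or $P_2 \cup I_1$. Verifying completeness across the bipartition is what forces the $O(n^2)$ bound, since up to $\Theta(n^2)$ pairs may need to be queried.

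The main obstacle is the complete-bipartite characterization for $\{K_3,P_2\cup I_1\}$-free graphs, since it requires combining two local constraints (triangle-freeness and the property that every vertex off any edge is adjacent to one endpoint) into a global structural statement; the remaining cases either fall out of easy local degree arguments or reduce by complementation to already-handled pairs.
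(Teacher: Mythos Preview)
Your proposal is correct. The structural characterizations you give are all right: in particular, the claim that $\{K_3,P_2\cup I_1\}$-free graphs are exactly the complete bipartite graphs (allowing an empty side) is correct, and your derivation of it from triangle-freeness and the ``every third vertex sees exactly one endpoint'' property is sound. The witness-extraction steps in the linear cases all go through with adjacency-list queries, so the $O(m+n)$ bounds hold.

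Your route is genuinely different from the paper's, though. The paper does not characterize the $\{H_1,H_2\}$-free graphs at all. Instead it first handles the three ``adjacent'' pairs, i.e.\ those with $H_2=H_1\setminus e$, by a single uniform argument: depending on how the third vertex of $H_1$ is attached to $e$, one looks for a vertex of degree at least $2$, or degree in $[1,n-2]$, or degree at most $n-3$, and any such vertex together with two appropriately chosen (non)neighbors yields $H_1$ or $H_2$. This covers $\{K_3,P_3\}$, $\{P_3,P_2\cup I_1\}$, $\{P_2\cup I_1,I_3\}$ in one stroke. The paper then treats $\{K_3,I_3\}$ by the same Ramsey argument you use, and finally handles $\{K_3,P_2\cup I_1\}$ by a recursive peeling argument: run $\{K_3,I_3\}$-detection; if it returns an $I_3$, grow it to a maximal independent set $S$, observe that any vertex with both a neighbor and a non-neighbor in $S$ gives a $P_2\cup I_1$, and otherwise remove $S$ and recurse on $V\setminus S$. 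Your approach trades this recursion for a one-shot structural test (complete bipartiteness), which is cleaner and yields an explicit structure theorem, while the paper's edge-removal trick is more uniform across the three easy pairs and hints at a technique that might extend to larger patterns.
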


\begin{theorem}
Let $H_1$ and $H_2$ be $4$-node patterns. There is a (randomized) algorithm that detects induced $\{H_1,H_2\}$ in an $n$-node graph in $\tilde{O}(n^{\omega})$ time.
\end{theorem}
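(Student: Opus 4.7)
The plan is a case analysis on $\{H_1, H_2\}$ driven by which 4-node patterns are currently ``hard'' to detect individually. From the runtimes in Figure~\ref{fig:smallgraphs}, every 4-node induced pattern except $K_4$ and its complement $I_4$ admits a $\tilde O(n^\omega)$ detection algorithm \cite{williams2014finding, EG04}, while $K_4$ and $I_4$ each currently require $\tilde O(n^{\omega(1,2,1)})$. So the only genuinely nontrivial pairs are those containing one of $K_4, I_4$.

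When neither $H_1$ nor $H_2$ is in $\{K_4, I_4\}$, I would simply run the known $\tilde O(n^\omega)$ algorithm for each in turn. When $\{H_1, H_2\} = \{K_4, I_4\}$, the Ramsey bound $R(4,4) \le 18$ makes the decision trivial: for $n \ge 18$ some induced $K_4$ or $I_4$ is forced, and for $n < 18$ we brute-force in $O(1)$. The remaining case is when exactly one of $H_1, H_2$ is in $\{K_4, I_4\}$; by passing to the complement graph I may assume $H_1 = K_4$ and $H_2 \notin \{K_4, I_4\}$, leaving nine subcases.

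The unifying tool for these subcases is the codegree matrix $M = A \cdot A$, computed once in $O(n^\omega)$, where $M[u,v] = |N(u)\cap N(v)|$. For any pair $u,v$ together with two common neighbors $a, b$, the induced graph on $\{u,v,a,b\}$ lies in $\{K_4, \text{diamond}, C_4\}$, classified by whether $uv$ and $ab$ are edges. This alone settles $H_2 \in \{\text{diamond}, C_4\}$: the diamond case is witnessed by any edge $uv$ with $M[u,v] \ge 2$; the $C_4$ case uses one extra $O(n^\omega)$ Boolean product that marks edges inside common neighborhoods. For $H_2$ containing a triangle but distinct from diamond (paw and $K_3 \cup K_1$) I locate a triangle via $M$ in $O(n^\omega)$, and then an additional matrix product tests whether some vertex is adjacent to exactly the number of triangle-vertices needed to produce $K_4$ (three) or the specific $H_2$ (one or zero). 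For the five triangle-free choices of $H_2$ (namely $P_4$, $K_{1,3}$, $P_3 \cup K_1$, $2 K_2$, $K_2 \cup 2 K_1$) I first run the known $\tilde O(n^\omega)$ individual $H_2$-detector; if no $H_2$ is found, the $H_2$-free structure (cographs in the $P_4$-free case, claw-free graphs in the $K_{1,3}$-free case, and so on) supports $K_4$ detection in $O(n^\omega)$ via class-specific structural observations combined with one additional matrix product.

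The main obstacle is this last step: showing that for each triangle-free $H_2$, induced-$H_2$-freeness cuts $K_4$ detection from $\tilde O(n^{\omega(1,2,1)})$ down to $\tilde O(n^\omega)$. These arguments are case-specific, but each reduces to a short structural lemma about the $H_2$-free class plus one extra matrix-product test on $A$ and $M$. Randomization enters through color-coding tricks in a couple of these subcases to obtain colorful versions of $H_2$-detection that mesh cleanly with the codegree structure.
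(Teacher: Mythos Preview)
Your high-level case split (neither pattern in $\{K_4,I_4\}$; both in $\{K_4,I_4\}$; exactly one in $\{K_4,I_4\}$) is exactly the paper's, and your handling of the first two cases matches. The divergence is entirely in the third case, and there your route is genuinely different from the paper's.

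The paper does \emph{not} do a nine-way case analysis. Instead it uses a single counting identity that handles all nine $H_2$'s at once. Writing $g(F)=n_F\cdot aut(F)$, a result from \cite{stoc-paper} says that for any $4$-node $F$ and any edge $e$ of $F$, the quantity $g(F)+g(F-e)$ is computable in $O(n^\omega)$ time. Chaining $K_4=F_0,F_1,\dots,F_t=H_2$ where consecutive $F_i$'s differ in one edge, and alternately adding and subtracting these sums, one obtains $Q:=g(K_4)+(-1)^r g(H_2)$ in $O(n^\omega)$. Now run the known $\tilde O(n^\omega)$ detector for $H_2$; if it returns an $H_2$ you are done, and otherwise $n_{H_2}=0$, so $Q=aut(K_4)\cdot n_{K_4}$ decides $K_4$. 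No structural facts about $H_2$-free graphs are used at all.

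By contrast, your plan leans on such structural facts in five subcases and leaves them essentially unproved (``each reduces to a short structural lemma \dots plus one extra matrix-product test''). Some of these do go through with work (e.g.\ claw-free via Ramsey on neighborhoods, $P_4$-free via cograph clique algorithms), but others are not as short as you suggest, and your sketch for $\{K_4,C_4\}$ --- ``one extra Boolean product that marks edges inside common neighborhoods'' --- does not obviously give what you need: the quantity you would like (an edge inside $N(u)\cap N(v)$) is a four-index sum, and the paper in fact devotes a separate, fairly intricate combinatorial argument (Theorem~\ref{thm:clique}) to this pair. So as written, your proposal has real gaps in several subcases, whereas the paper's counting trick sidesteps the entire case analysis.
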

\begin{proof}
If $n\le 31$, then we check all subgraphs of size $4$ in the host graph to see if they are isomorphic to $H_1$ or $H_2$. So suppose that $n>31$.

First we show how to detect whether a host graph $G$ has an induced copy of $H_1$ or $H_2$. Then using a standard self-reduction technique, if $H_1$ or $H_2$ exist in $G$, we can find a copy of them. An informal description of this approach is the following: We divide the graph into $5$ sections of size roughly $n/5$, and run the detection algorithm on the union of every $4$ sections. If one of these runs outputs YES (that there is a $H_1$ or $H_2$), we recurse on this subgraph of size roughly $4n/5$. Note that if $G$ has a $H_1$ or $H_2$, one of these subgraphs must contain $H_1$ or $H_2$. Our recursion depth is $O(\log{n})$ and so we can find a pattern in the same running time as the detection algorithm with a $O(\log{n})$ overhead\footnote{see \cite{stoc-paper} Section $3$ for a more formal explanation.}. 

Now we give the detection algorithm. By \cite{williams2014finding}, every $4$-node pattern that is not $K_4$ or $I_4$ can be detected in $O(n^{\omega})$ time. So if $\{H_1,H_2\}\cap \{I_4,K_4\}=\emptyset$, then we run the detection algorithms for $H_1$ and $H_2$. So assume that one of the patterns is $I_4$ or $K_4$. 

If $\{H_1,H_2\}=\{K_4,I_4\}$, then since $n>31$, in Lemma \ref{lem:ramsey} we show that we can detect induced $\{H_1,H_2\}$ in linear time. So WLOG suppose $H_1\in \{K_4,I_4\}$ and $H_2\notin \{K_4,I_4\}$.

Let $F\neq I_4$ be a $4$-node pattern and let $g(F)=n_F\cdot aut(F)$ where $n_F$ is the number of occurrences of $F$ in the host graph $G$, and $aut(F)$ is the number of automorphisms of $F$. From \cite{stoc-paper} we know that for any edge $e$ in F, we can compute $g(F) + g(F-e)$ in $O(n^{\omega})$ time.

Now if $H_1$ is an arbitrary graph (with at least 2 edges) and $H_2=H_1-e_1-e_2$ where $e_1$ and $e_2$ are two edges in $H_1$, we can compute $g(H_1)+g(H_1-e1)$ and $g(H_1-e1) - g(H_1-e1-e2)$, and by subtracting these values we get $g(H_1)-g(H_2)$.

In general, for any two arbitrary graphs $H_1$ and $H_2$, we can compute $g(H_1)+ (-1)^r g(H_2)$ where $r$ is a function of the number of edges of $H_1$ and $H_2$. We can do this by considering a set of graphs $H_1=F_0, F_1,\ldots, F_t=H_2$, where for each $i$, $F_i$ and $F_{i+1}$ only differ in one edge, and hence we can compute $g(F_i)+g(F_{i+1})$. Then by combining these values by adding or subtracting each one, we can compute $g(H_1)+ (-1)^r g(H_2)$. 

This means that for any two $4$-node patterns $H_1$ and $H_2$, we can compute the quantity $Q_{H_1,H_2}:=aut(H_1)n_{H_1}+(-1)^{r}aut(H_2)n_{H_2}$ in $O(n^{\omega})$ time for some $r$ that is dependent on $H_1$ and $H_2$. Now since $H_2\notin \{K_4,I_4\}$ we first run the $H_2$ detection algorithm. If it outputs YES we are done. If it outputs NO, then we know that $n_{H_2}=0$, so $Q_{H_1,H_2}= aut(H_1)n_{H_1}$, and so it is non-zero if and only if $G$ has a copy of $H_1$.
\end{proof}

We prove the following two theorems in the next subsections. The first Theorem is also proven in \cite{EschenHSS11}, but we include our proof for completeness.
\begin{theorem}
\label{thm:diamond}
Let $G$ be an $n$-node host graph. Let $H$ be the diamond. Then there is an algorithm for induced detection of $\{C_4,H\}$ in $G$ that runs in $O(n^{7/3})$ time. 
\end{theorem}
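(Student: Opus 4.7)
The plan starts with a clean combinatorial characterization: $G$ contains an induced $C_4$ or an induced diamond if and only if there exist two non-adjacent vertices $a, b$ with $|N(a)\cap N(b)| \geq 2$. Indeed, given such a witness with two distinct common neighbors $c, d$, the induced subgraph on $\{a,b,c,d\}$ has the four forced edges $ac, ad, bc, bd$, the forced non-edge $ab$, and only the pair $cd$ is free; this yields a diamond when $cd \in E(G)$ and an induced $C_4$ when $cd \notin E(G)$. Conversely, in an induced $C_4$ the two pairs of opposite vertices realize the witness, and in an induced diamond the two degree-$2$ vertices are non-adjacent with the two degree-$3$ vertices as common neighbors.

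With the problem reduced to finding a non-adjacent pair of vertices with at least two common neighbors, the plan is to pick a degree threshold $\Delta \approx n^{1/3}$ and run two cheap phases. Phase A handles witnesses having two low-degree common neighbors: for each vertex $w$ with $\deg(w) \leq \Delta$, iterate every pair $(a,b) \in \binom{N(w)}{2}$ and increment a hashmap counter keyed by the unordered pair; report the pair as soon as any non-adjacent pair's counter reaches $2$. This costs $\sum_w \deg(w)^2 = O(n\Delta^2) = O(n^{5/3})$. Phase B handles witnesses with at least one low-degree endpoint: for each vertex $a$ with $\deg(a) \leq \Delta$ and each non-neighbor $b$ of $a$, compute $|N(a)\cap N(b)|$ by scanning $N(a)$, which costs $O(\Delta)$ per pair and $O(n^2\Delta) = O(n^{7/3})$ overall.

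After Phases A and B, the only remaining case is a non-adjacent witness pair $(a,b)$ in which both endpoints lie in the high-degree set $H = \{v : \deg(v) > \Delta\}$. The plan here is to iterate over $\binom{H}{2}$ pairs and compute $|N(a)\cap N(b)|$ via bitset intersection in $O(n/\log n)$ time per pair; the bound $|H| \leq 2m/\Delta$ controls the number of such pairs. The main obstacle is ensuring this residual phase fits in $O(n^{7/3})$ even when $m$ is large: I would address this by first running a standard $O(n^2)$-time non-induced $C_4$ scan, so that if any $K_{2,2}$-subgraph exists whose four vertices induce $C_4$ or diamond we return it immediately, and otherwise every $K_{2,2}$ of $G$ induces a $K_4$. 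The latter is very restrictive (it forces $N(u)\cap N(v)$ to be a clique for every edge $uv$, so $G$ is already diamond-free), and the remaining induced-$C_4$ search can be reduced to a scan across the resulting clique-block decomposition of $G$ within the $O(n^{7/3})$ budget.
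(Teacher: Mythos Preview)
Your opening characterization is correct and clean: $G$ contains an induced $C_4$ or an induced diamond if and only if some non-adjacent pair $a,b$ has $|N(a)\cap N(b)|\ge 2$. Phases~A and~B are also fine and fit in $O(n^{7/3})$.

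The gap is Phase~C. Your fix for the dense case --- ``run a standard $O(n^2)$ non-induced $C_4$ scan, and if any $K_{2,2}$ induces $C_4$ or diamond return it, otherwise every $K_{2,2}$ induces a $K_4$'' --- is the original problem in disguise. By your own characterization, ``some $K_{2,2}$ fails to induce $K_4$'' is \emph{equivalent} to ``$G$ has an induced $C_4$ or diamond'', so an $O(n^2)$ routine that decides this dichotomy would solve the whole problem in $O(n^2)$, rendering Phases~A and~B (and the $n^{7/3}$ bound) pointless. The standard $O(n^2)$ non-induced $C_4$ detector does \emph{not} do this: it finds \emph{one} $K_{2,2}$, and if that one happens to be a $K_4$ you learn nothing about the others. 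You cannot afford to keep enumerating $K_{2,2}$'s until you find a non-$K_4$ one or exhaust them all; there can be $\Theta(n^4)$ of them. Your final sentence compounds the issue: if every $K_{2,2}$ induces a $K_4$, then by the characterization $G$ is already $\{C_4,\text{diamond}\}$-free, so there is no ``remaining induced-$C_4$ search'' to perform --- the plan is internally inconsistent.

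Without this fix, Phase~C costs $\Theta(|H|^2 \cdot n/\log n)$ with $|H|\le 2m/\Delta$, which for $m=\Theta(n^2)$ and $\Delta=n^{1/3}$ is $\Theta(n^{13/3}/\log n)$, far above budget. The paper's proof takes a quite different route: it repeatedly extracts a maximal independent set $I$ (threshold $|I|$ against $T=n^{2/3}$), uses that each $N(v_i)$ must be a disjoint union of cliques (else a diamond), that distinct $N(v_i)$'s share at most one vertex (else a $C_4$/diamond), and that any two of the resulting cliques have at most one edge between them; this structure lets the residual search be carried out in $O(n^3/T + nT^2)=O(n^{7/3})$. You would need a genuinely new idea to make the high-degree/high-degree case work in your framework.
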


\begin{theorem}
\label{thm:clique}
Let $G$ be an $n$-node host graph. Then there is an algorithm for induced detection of $\{C_4,K_4\}$ in $G$ that runs in $O(n^{7/3})$ time. 
\end{theorem}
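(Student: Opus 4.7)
My plan is to adapt the $O(n^{7/3})$-time algorithm from Theorem~\ref{thm:diamond} (which itself follows the Eschen et al.\ approach for $\{C_4,\text{diamond}\}$-free recognition). The starting point is a structural characterization: a graph $G$ is $\{C_4, K_4\}$-free iff (i) for every edge $uv$, the common neighborhood $N(u) \cap N(v)$ is an independent set (otherwise two adjacent common neighbors together with $u,v$ form a $K_4$), and (ii) for every non-edge $uv$, $|N(u) \cap N(v)| \leq 2$ and any two common neighbors are adjacent (otherwise we obtain an induced $C_4$, or a $K_4$ by taking three mutually adjacent common neighbors together with $u$). Any detected violation of (i) or (ii) immediately exhibits an induced $C_4$ or $K_4$ on four explicit vertices, so it suffices to search for a violation.

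The algorithm uses a degree-threshold decomposition with $t = n^{1/3}$: let $L = \{v : \deg(v) < t\}$ and $H = V \setminus L$, with $|H| \leq 2m/t$. For each $v \in L$, I examine $N(v)$ (of size at most $t$) in two ways. First, I search $G[N(v)]$ for a triangle in $O(t^{\omega})$ time; if found, $v$ together with the triangle is a $K_4$. Second, for every non-adjacent pair $x, y \in N(v)$ I look for a common neighbor $z \notin N[v]$; if found, $\{v, x, z, y\}$ is an induced $C_4$. When $x, y$ are also low-degree, scanning the smaller of $N(x), N(y)$ costs $O(t)$ per pair, so the low-degree sweep totals $O(n \cdot t^2 \cdot t) = O(n^2)$.

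The main obstacle is pairs $(x, y)$ in which at least one endpoint lies in $H$, since then a naive common-neighbor scan costs up to $O(n)$. I would handle this by charging to the small size of $H$: condition (ii) guarantees that every non-edge has at most two common neighbors, so the total number of cherries with non-edge endpoints is $O(n^2)$ and cherry collisions can be processed by the same pigeonhole argument used for $\{C_4,\text{diamond}\}$-free recognition in Theorem~\ref{thm:diamond}; condition (i) reduces the $K_4$-search inside the high-degree portion to triangle detection inside the small sets $N(u) \cap N(v) \cap H$. Balancing the two phases at $t = n^{1/3}$ yields $O(n^{7/3})$. The most delicate point, which distinguishes this case from Theorem~\ref{thm:diamond}, is that diamonds \emph{are} allowed in $\{C_4, K_4\}$-free graphs and must not trigger a YES; every discovered cherry collision therefore requires a chord case analysis (on the $xy$ and $v_1 v_2$ edges of the 4-cycle it produces) to decide whether we have witnessed an induced $C_4$, a $K_4$, or an allowable diamond, and only in the latter case do we continue searching.
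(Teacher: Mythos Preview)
Your approach is genuinely different from the paper's, and as written it has a real gap.

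The paper does \emph{not} use a degree-threshold decomposition. Its engine is a maximal independent set $I$: in $O(n^2)$ time it rules out any induced $C_4$ or $K_4$ touching $I$, and if $|I|\ge T=n^{2/3}$ it deletes $I$ and recurses (at most $n/T$ rounds, giving $O(n^3/T)$). When $|I|<T$, every vertex has at most $2|I|\le 2T$ ``external'' neighbors (neighbors of $v$ lying outside some $N(v_i)$ with $v\in N(v_i)$); the paper then classifies every remaining $C_4$ or $K_4$ into four structural types (Lemma~\ref{lem:categorize}) and checks each type in $O(nT^2)$ or $O(n^{1.5}T)$ time, balancing at $T=n^{2/3}$.

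Your degree-threshold plan with $t=n^{1/3}$ breaks down on dense inputs. Nothing bounds $|H|$: a $K_4$-free graph can have $\Theta(n^2)$ edges, so $|H|$ can be $\Theta(n)$, and ``charging to the small size of $H$'' buys nothing. Your appeal to ``the same pigeonhole argument used in Theorem~\ref{thm:diamond}'' is misplaced, since that proof also runs on a maximal-independent-set decomposition, not a degree threshold. More concretely, your plan to enumerate cherries with non-edge endpoints is blocked by the enumeration itself: you correctly observe that in a $\{C_4,K_4\}$-free graph there are $O(n^2)$ such cherries, but to \emph{list} them you iterate over pairs in each $N(v)$, and cherries with edge endpoints (triangles) can number $\Theta(n^3)$ in $K_4$-free graphs (think of the complete tripartite graph, which your algorithm must still process until it finds the $C_4$). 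Likewise, the $K_4$-search ``inside $N(u)\cap N(v)\cap H$'' over all high-degree edges $uv$ is $\Theta(n^3)$ in the worst case. The delicate chord analysis you describe (distinguishing diamonds from $C_4$/$K_4$) is correct and necessary, but it sits downstream of an enumeration you have not shown how to do in $O(n^{7/3})$ time.
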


Patterns of size $4$ that contain a triangle are the clique, diamond, the paw and co-claw. We show in the appendix that there is a $O(n^2)$ algorithms for induced detection of $\{C_4,co-claw\}$ and $\{C_4,paw\}$. This together with Theorem \ref{thm:diamond} and \ref{thm:clique} prove Theorem \ref{thm:c4-general}.

\cfourgeneral*


\subsubsection{Proof of Theorem \ref{thm:diamond}: Detecting $\{diamond,C_4\}$ in $O(n^{7/3})$ time.}
We first prove the following useful lemmas.
\begin{lemma}
\label{lem:p3}
If a graph doesn't have an induced $P_3$, then it is a disjoint union of cliques. If it doesn't have an induced $\bar{P_3}$, then it is a complete $t$-partite graph for some $t$. Additionally, we can detect a $P_3$ ($\bar{P_3}$) or determine that the graph doesn't have a $P_3$ ($\bar{P_3}$) in $O(n^2)$ time. 
\end{lemma}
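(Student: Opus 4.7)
My plan is to prove the structural claims first, then derive the algorithmic claim.

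For the first structural claim, I would argue by contrapositive. Suppose $G$ is $P_3$-free, and take any connected component $C$. If $C$ is not a clique, pick two non-adjacent $u,v \in C$ with $d(u,v)$ minimized; by connectivity this distance is $2$, so there exists $w$ with $uw, wv \in E(G)$, and since $uv \notin E(G)$ the three vertices $\{u,w,v\}$ induce a $P_3$, contradiction. Hence every component is a clique, so $G$ is a disjoint union of cliques.

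For the second structural claim, I would take complements. A graph $G$ is $\bar{P_3}$-free if and only if $\bar G$ is $P_3$-free, so by the first part $\bar G$ is a disjoint union of cliques. Translating back, the vertex set of $G$ is partitioned into independent sets (the parts) with all edges present between any two different parts; that is, $G$ is a complete multipartite graph.

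For the algorithm for detecting an induced $P_3$, I would use the following recursive procedure on the adjacency-matrix representation. Pick any vertex $v$, let $A = N(v)$ and $B = V \setminus (\{v\} \cup A)$. I would check two things: (i) is $A$ a clique? (ii) is there any edge between $A$ and $B$? If (i) fails, any non-edge $u_1u_2$ inside $A$ yields the induced $P_3$ $u_1vu_2$; if (ii) fails, any crossing edge $uw$ with $u \in A$, $w \in B$ yields the induced $P_3$ $vuw$ since $vw \notin E(G)$. If both checks pass, then $\{v\} \cup A$ is a clique component, and I recurse on $B$. If no $P_3$ is ever found, the graph is a disjoint union of cliques. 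For the running time, checking (i) and (ii) for the component $C \ni v$ costs $O(|C|^2 + |C|\cdot|B|)$ using the adjacency matrix, so the total work over the recursion is bounded by $\sum_i |C_i|^2 + \sum_i |C_i|\cdot|V \setminus (C_1 \cup \cdots \cup C_i)| = O(n^2)$. The $\bar{P_3}$ case is identical after replacing ``edge'' by ``non-edge'' everywhere, which again costs only $O(n^2)$ on the adjacency matrix.

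The only subtle point I expect is the time accounting, namely ensuring that the $O(|C|\cdot |B|)$ cost across recursive calls telescopes to $O(n^2)$ rather than $O(n^3)$; this works because the block $A \times B$ of the adjacency matrix inspected at each step is disjoint from the blocks inspected at previous steps (once $C$ is peeled off, its rows and columns are never touched again), so in total each entry of the $n \times n$ adjacency matrix is read $O(1)$ times.
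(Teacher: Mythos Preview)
Your proof is correct and follows essentially the same peel-off-a-clique-component strategy as the paper. The one minor difference is that the paper picks $v$ to be a \emph{maximum-degree} vertex, which guarantees that once $N[v]$ is verified to be a clique it is automatically a full component (no vertex in $N[v]$ can have an outside neighbor without exceeding $\deg(v)$), thereby eliminating your check (ii) and yielding the cleaner recurrence $|N[v]|^2+(n-|N[v]|)^2\le n^2$; your explicit $A\times B$ scan achieves the same $O(n^2)$ bound by a slightly different accounting.
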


\begin{proof}
We prove the lemma for $P_3$. The proof for $\bar{P_3}$ is similar: Take the complement of the host graph and search for $P_3$. 

To prove the lemma for $P_3$, take a vertex with maximum degree, say $v$, and consider the set of its neighbors $N(v)$, and let $N[v]=N(v)\cup \{v\}$. Scan all pairs in $N(v)$, if there are two nodes without an edge between them, we have a $P_3$. Otherwise, $N[v]$ is a complete graph. Since $v$ has the maximum degree, no vertex in $N[v]$ is attached to a vertex outside $N[v]$. So this clique is disconnected from the rest of the graph, and we have spent $O(|N[v]|^2)$ time. We do the same procedure for the rest of the graph. By induction, we spend $|N[v]|^2+(n-|N[v]|)^2\le n^2$. If we don't find a $P_3$, $G\setminus N[v]$ is a collection of disjoint cliques by induction, and so is $G$. 
\end{proof}
\begin{lemma}
\label{lem:triangle-and-c4}
There is an algorithm for induced detection of $\{C_4,K_3\}$ that runs in $O(n^2)$ time. Moreover, a $\{C_4,K_3\}$-free graph has at most $O(n^{1.5})$ edges. 
\end{lemma}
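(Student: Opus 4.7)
The plan is to prove the extremal bound from the classical Kővári--Sós--Turán (Reiman) theorem and then give a cherry-counting algorithm whose runtime rides on that same bound. For the extremal bound, I would observe that a $\{K_3, C_4\}$-free graph is triangle-free, and in a triangle-free graph every $4$-cycle subgraph is automatically induced; thus the graph contains no $C_4$ even as a non-induced subgraph, and Reiman's theorem gives $|E| \leq \tfrac{1}{2}(1+\sqrt{4n-3}) \cdot n/2 = O(n^{3/2})$.

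For the algorithm, I would first build the adjacency matrix in $O(n^2)$ time so that edge queries cost $O(1)$, and allocate an $n \times n$ array $\tau$ initialized to $\bot$ to store at most one ``witness'' per unordered vertex pair. I then process vertices $v$ in an arbitrary order; for each $v$ I enumerate ordered pairs $(u, w)$ of distinct neighbors of $v$ (the ``cherries'' centered at $v$) and apply the following three rules in order: (i) if $u \sim w$, output the triangle $\{u,v,w\}$ and halt; (ii) else if $\tau[u,w] = v' \neq \bot$, output the $4$-cycle $u\text{-}v\text{-}w\text{-}v'\text{-}u$ and halt; (iii) else set $\tau[u,w] \leftarrow v$. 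If the scan ends without halting, I report that no induced $K_3$ or $C_4$ exists.

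For correctness, when rule (ii) fires I have $u \not\sim w$ (otherwise rule (i) would have fired at the current cherry), and also $v \not\sim v'$: otherwise, when $v'$ was processed earlier, the cherry $(u, v)$ centered at $v'$ would have been enumerated with $u \sim v$ already, so rule (i) would have fired and the algorithm would have already halted. Thus the reported $4$-cycle is chordless, hence induced. Completeness is immediate because any induced $K_3$ or induced $C_4$ in $G$ exposes a cherry at one of its centers whose enumeration must trigger one of rules (i)--(ii) before the scan ends. For the runtime, the key observation is that at every step prior to termination, the partial state is consistent with $\{K_3, C_4\}$-freeness: no triangle has been reported and each pair has at most one recorded witness. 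Hence the number of distinct pairs ever written into $\tau$ is at most $\binom{n}{2}$, and by the $C_4$-free cherry bound this equals (up to one extra terminating cherry) the number of cherries processed, so the enumeration costs $O(n^2)$ with $O(1)$ adjacency lookups. The main subtlety is precisely this cherry-count bound---it is the algorithmic dual of Reiman's inequality, and without invoking it on the partial state the naive bound $\sum_v \deg(v)^2$ could be $\Theta(n^3)$.
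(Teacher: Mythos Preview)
Your approach is correct in spirit and genuinely different from the paper's. The paper first calls a black-box non-induced $C_4$ detector (Richards--Liestman); if it returns a $4$-cycle, that $4$-cycle is either induced or contains a triangle, and if it returns nothing, the paper argues directly (no Reiman) that few vertices can have high degree, yielding the $O(n^{1.5})$ edge bound, and then finishes triangle detection by a high/low-degree split. Your route is more self-contained: you invoke Reiman for the extremal bound and do a single cherry-enumeration pass that catches both patterns at once, with the clean amortization that every non-terminating cherry writes a fresh cell of $\tau$. This avoids the black-box call and the degree case analysis; the paper's version, on the other hand, makes the $O(n^{1.5})$ edge bound explicit without citing Reiman, which it later reuses.

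Two small points. First, there is a genuine slip in your description: you enumerate \emph{ordered} pairs of neighbors but index $\tau$ by \emph{unordered} pairs. With that combination, the cherry $(u,w)$ at $v$ writes $\tau[\{u,w\}]=v$ and the very next cherry $(w,u)$ at the same $v$ fires rule~(ii) with $v'=v$, outputting a degenerate ``$4$-cycle'' $u\text{-}v\text{-}w\text{-}v$. Your argument that $v\not\sim v'$ explicitly assumes ``$v'$ was processed earlier,'' which fails when $v'=v$. The fix is trivial (enumerate unordered pairs, or add the check $v'\neq v$ to rule~(ii)), but as written it is a bug. Second, your runtime paragraph invokes ``the $C_4$-free cherry bound,'' but you do not actually need it: the amortization is simply that each rule-(iii) step writes to a previously-$\bot$ cell of an $n\times n$ array, so at most $n^2+1$ cherries are ever examined. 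Streamlining that sentence would make the argument cleaner.
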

\begin{proof}
Run the non-induced $4$-cycle detection algorithm of Richards and Liestman \cite{non-induced-4cycle} which takes $O(n^2)$ time. If it outputs Yes, then we either have an induced $4$-cycle, or a non-induced diamond, in either case we have a triangle or an induced $4$-cycle. 

So suppose that it outputs No. This means that the number of nodes with degree at least $3\sqrt{n}$ is less than $\sqrt{n}$: For the sake of contradiction, let $v_1,\ldots, v_{\sqrt{n}}$ be some of the nodes with degree at least $3\sqrt{n}$. We know that the graph doesn't have a $4$-cycle, so any two nodes have at most $1$ neighbor in common. So each $v_i$ has at least $2\sqrt{n}$ neighbors that are not attached to any $v_j$ for $j\neq i$. This means that the graph has at least $2n$ nodes, a contradiction. So the number of these high degree nodes is at most $\sqrt{n}$, and the graph has $O(n^{1.5})$ edges.  

We need to check if the graph has a triangle, and we do it as follows: for each edge, check in $O(\sqrt{n})$ time if both of its endpoints are attached to any of the high degree nodes. Then for each low degree node, go through every pair of its neighbors and check if they are connected. Since we have $O(n)$ pair of neighbors, this takes $O(n^2)$ time in total.
\end{proof}
Now we prove Theorem \ref{thm:diamond}.
Starting from two nodes that are not attached, we can find a maximal independent set in $O(nt)$ time where $t$ is the size of this set: $I=\{v_1,\ldots,v_t\}$. 
\paragraph{step 1} Let $N(v_j)=N_j$, and note that $\cup_{j=1}^t N_j=V(G)\setminus I$ because of the maximality of $I$. For each vertex $v_i\in I$, we first check if $v_i$ has at most one common neighbor with any of $v_1,\ldots,v_{i-1}$. We can do this in $O(\sum_i deg(v_i))=O(n^2)$ in total. Suppose some $v_i$ violates this: so there is $j< i,$ such that $N_i\cap N_j$ has size at least 2. So $v_i,v_j$ and two of the nodes in $N_i\cap N_j$ form an induced $4$-cycle or diamond. So $|N_i\cap N_j|\le 1$ for all $1\le i<j\le t$. Now using lemma \ref{lem:p3} we check if the subgraph induced on $N_i$ has an induced $P_3$ in $O(|N(v_i)|^2)$ time. If some $N_i$ has an induced $P_3$, then this $P_3$ with $v_i$ forms a diamond. If $N_i$ doesn't have an induced $P_3$, then the subgraph on $N_i$ should be a collection of disconnected cliques. Now because $N_i\cap N_j$ has at most one vertex, every edge that we we encounter in $N(v_i)$ subgraphs is visited only once, so the this step takes $O(n^2)$ time in total.


\paragraph{step 2} For each $v\in N_i$, we check if it has at most one edge to each $N_j$ where $v\notin N_j$ and $j\neq i$. This concluded step $2$.

If some $v\in N_i$ violates this for some $N_j$, $v$ together with $v_j$ and its two neighbors in $N_j$ form a $4$-cycle or diamond. This part takes $O(t+\sum_v deg(v))= O(n^2)$ time as we visit each edge at most twice. 

By the end of these two steps, we know that none of the $v_i$s is in a $4$-cycle or diamond. 
If $t>T$ for some $T$ that we set later, then we remove this independent set and recurs. We can do this at most $n/T$ times, and in that case we spend $O(n^3/T)$ time. 

Suppose that $t<T$. Note that $V(G)\setminus I$ can be written as a union of cliques where every two cliques have at most one node in common. We are going to explain the reasoning behind step $3$ here, before we go into details of the algorithm. We are going to check if any two cliques contain a $4$-cycle or diamond. Consider two cliques $C$ and $C'$. If they share a node, there must be $i\neq j$ where $C\subseteq N_i$ and $C'\subseteq N_j$ and $C\cap C'=\{u\}$, where $u$ is the common neighbor of $v_i$ and $v_j$. If there is an edge between $v\in C$ and $w\in C'$, $v,w\neq u$, then $v_ivwu$ forms a diamond. So unless there is a diamond in the graph, there are no edges between $C\setminus u$ and $C'\setminus u$. Now suppose that $C$ and $C'$ don't have any nodes in common. Then as mentioned before in step 2, each node $v\in C$ has at most one neighbor in $C'$. So if there are two edges between $C$ and $C'$, their endpoints are different, and they form a $4$-cycle. So if there are no $4$-cycles or diamonds in the graph, there is at most one edge between $C$ and $C'$. This follows up to step 3 below. 

\paragraph{step 3} We check if there is more than one edge between any two cliques. We can do this by having a table $T$ with rows and columns indexed by the cliques, and we scan edges one by one and mark the entry corresponding to cliques $C$ and $C'$ if this edge is between $C$ and $C'$. If we find two edges between two cliques, we have a diamond or a $4$-cycle. 
This step takes $O(n^2)$ time, as we visit each edge at most once. In addition, we know the edge between any two cliques (if it exists). Note that by the end of this step we know that there is no diamond or $4$-cycle in the union of exactly $2$ cliques. This concludes step $3$.

As mentioned before, for any two non-intersecting cliques we can have at most one edge between them. Call these edges non-clique edges. The rest of the edges are in cliques. Also since each node has at most $1$ common neighbor with each $v_i$, each node has at most $t-1$ non-clique edges attached to it. 

\paragraph{step 4} First for each node $u$ that is a common neighbor of $v_i$ and $v_j$ for some $i,j$ do the following: Let $C\in N_i$ and $C'\in N_j$ be the cliques that have $u$. For each $C''\notin \{C,C'\}$, check in constant time if there is $w\in C''$ which has an edge to $C\setminus u$ and $C'\setminus u$, using the table $T$. If such $w$ exists, then we have a $4$-cycle  using those edges and $u$. This takes linear time for each $u$, and hence $O(n^2)$ in total. Note that by the end of this step, we know that there is no $4$-cycle or diamond is any $3$ cliques with at least two of them having an intersection. This is because if there is a pattern in $C\cup C'\cup C''$ where $C\cap C'=u$, then there must be a node $v$ in $C''$ that has a neighbor in $C$ and a neighbor in $C'$, and so $u,v$ and the two neighbors form a $4$-cycle or diamond, and that's what we detect in this step. 

\paragraph{step 5} 
We are going to detect $4$-cycle or diamonds that are in exactly $3$ non intersecting cliques.
Note that diamond has a (non-induced) $4$-cycle as its subgraph, and since there is at most one edge between any two cliques, we must have exactly one clique edge in the not-necessarily-induced $4$-cycle contained in the pattern ($4$-cycle or diamond). Now it is easy to see that we can't have a diamond in the union of three non-intersecting cliques if each two has at most one edge between them. For each non-clique edge $e=uv$, do the following: In $O(t)$ time, find all cliques that $u$ and $v$ both have neighbors in it and check if their neighbors are different in that clique. If they are, we have a $4$-cycle. This takes $O(nt^2)$ time. 

\paragraph{step 6}Now the only possibility for a $4$-cycle or diamond is that each of its vertices are in a different clique. So we can delete all the clique edges, and we end-up with a graph with nodes of degree at most $t$. Moreover, since there is at most one edge between any two cliques, in this graph we have at most $t^2 $ edges. First, for every pair of nodes $u,w$ such that $u$ and $w$ belong to different cliques, we define $L(u,w)$ be the list of their common neighbors through non-clique edges. We can compute all the $L(u,w)$s as follows: for every node $z$ and for every two neighbors $u,w$ of $z$ through non-clique edges, put $z$ in $L(u,w)$. This takes $O(nt^2)$ time. Now for every non-clique edge $e=uw$, and for every node $z$, see if $z$ is adjacent to exactly one of $u$ and $w$ through non-clique edges. Suppose it is attached to $u$. Then see if $|L(z,w)|>1$. If so, take a node $u\neq z'\in L(z,w)$, and $zwuz'$ form a $C_4$ or diamond since $zw$ is not an edge. This takes $O(nt^2)$ time. 
Since in this case $t\le T$, we spend $O(n^3/T+nT^2)$ time in total, and if we set $T=n^{2/3}$, we get $O(n^{7/3})$ running time.





\subsubsection{Proof of Theorem \ref{thm:clique}: Detecting $\{C_4,K_4\}$ in $O(n^{7/3})$ time.}
Suppose we want to find an induced $C_4$ or $K_4$ in the host graph $G$.
Let $I=\{v_1,\ldots,v_t\}$ be a maximal independent set. Note that we can find such $I$ in $O(n^2)$ time. For each $i$, Let $N_i=N(v_i)$. Note that $I\cup (\cup_{i=1}^t N_i)=V(G)$ because of the maximality of $I$. For each $u\notin I$, recall that $N_I(u)=\{v_i| u\in N_i\}$ is the set of nodes in $I$ adjacent to $u$.

For each vertex $u$ we define a set $N_{ext}(u)$ as follows: $v\in N_{ext}(u)$ if $uv$ is an edge and there exists $i$, such that $u\in N_i$ and $v\notin N_i$. We call the edge $uv$ an \textit{external} edge (since $v$ is an \textit{external} neighbor of $u$ with respect to the set $N_i$). If $v\in N_{ext}(u)$ and $u\in N_{ext}(v)$, we call the edge $uv$ a \textit{fully external} edge. We have the following simple observation from these definitions.


\begin{observation}
\label{obs:not-f-ext}
If $uv$ is not a fully external edge, then there is $i$ such that $u,v\in N_i$.
\end{observation}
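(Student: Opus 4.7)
The plan is to unwind the definitions and use the maximality of $I$. First I would note that the observation is implicitly about edges $uv$ with $u, v \notin I$, since any vertex in $I$ fails to lie in any $N_i$ (the $N_i$'s are neighborhoods of vertices of the independent set $I$, so containing $v_j \in I$ in $N_i$ would mean $v_iv_j \in E(G)$, contradicting independence of $I$). Thus by maximality of $I$, every endpoint of such an edge lies in some $N_i$.

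Next I would translate ``$uv$ is not fully external'' via the definition: at least one of $v \notin N_{ext}(u)$ or $u \notin N_{ext}(v)$ holds. By symmetry of the statement to be proved, it suffices to handle the first case. Negating the definition of $N_{ext}(u)$, the condition $v \notin N_{ext}(u)$ means there is no index $i$ with $u \in N_i$ and $v \notin N_i$; equivalently, for every $i$ with $u \in N_i$, we also have $v \in N_i$.

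To finish, I would invoke maximality of $I$ once more: since $u \notin I$, there exists some $i_0$ with $u \in N_{i_0}$. Applying the previous sentence to this $i_0$ gives $v \in N_{i_0}$ as well, so $u,v \in N_{i_0}$, which is the desired conclusion. The symmetric case $u \notin N_{ext}(v)$ is identical with the roles of $u$ and $v$ swapped. There is no real obstacle here; the only subtlety is noticing that the universal quantifier arising from the negation of $N_{ext}$ must be applied to a specific index whose existence is guaranteed by the maximality of $I$, which is precisely why the non-$I$ assumption on the endpoints is needed.
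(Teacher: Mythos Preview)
Your proof is correct and is exactly the intended argument: the paper does not give an explicit proof but calls this a ``simple observation from these definitions,'' and what you wrote is precisely the unwinding of those definitions together with the maximality of $I$. Your remark that the statement is implicitly about edges with both endpoints outside $I$ is accurate and necessary, since otherwise the claim fails (an edge $v_i u$ with $v_i\in I$ is never fully external yet $v_i$ lies in no $N_j$).
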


Before proceeding to the algorithm, we prove the following lemma which helps us categorize the patterns ($K_4$ or $C_4$) in $G$.

\begin{lemma}
\label{lem:categorize}
If $H$ is an induced $C_4$ or $K_4$ that does not have any vertex in $I$, then it is of one of the following types. 
\begin{enumerate}
    \item There are two indices $i$ and $j$ such that $V(H)\subseteq N_i\cup N_j$.
    \item $H$ is isomorphic to $K_4$, and there is a node $a\in V(H)$, such that $V(H)\setminus \{a\} \subseteq N_{ext}(a)$.
    \item $V(H)=\{a,b,c,d\}$ and $H$ is isomorphic to $C_4$, with $ab,bc,cd,da\in E(H)$ such that $d,b\in N_{ext}(a)$ and $a,c\in N_{ext}(b)$.
    \item $V(H)=\{a,b,c,d\}$ and $H$ is isomorphic to $C_4$, with $ab,bc,cd,da\in E(H)$ such that $a,c\in N_{ext}(b)\cap N_{ext}(d)$.
\end{enumerate}
\end{lemma}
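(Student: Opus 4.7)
The plan is to prove the lemma by case analysis on $H$ (either $C_4$ or $K_4$) and on the pattern of fully external edges in $E(H)$. The starting point is the useful equivalence: for an edge $uv \in E(H)$ with $u,v \notin I$, $uv$ is \emph{not} fully external iff $N_I(u) \subseteq N_I(v)$ or $N_I(v) \subseteq N_I(u)$ (i.e., $N_I(u), N_I(v)$ are comparable under inclusion), which, since maximality of $I$ forces $N_I(u), N_I(v) \neq \emptyset$, is what underlies Observation \ref{obs:not-f-ext}. Dually, $uv$ is fully external iff $N_I(u)$ and $N_I(v)$ are $\subseteq$-incomparable.

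For the $K_4$ case, if some vertex $a$ satisfies $V(H)\setminus\{a\}\subseteq N_{ext}(a)$, Type 2 holds immediately. Otherwise every vertex has at least one non-external $K_4$-neighbor; I would then take a $\subseteq$-minimal element $a_0$ of $\{N_I(v):v\in V(H)\}$ and use the comparability structure of the remaining three sets to exhibit one or two indices whose $N$-sets cover $V(H)$, giving Type 1 (and handle the residual subcase, where two vertices are incomparable to $a_0$, by locating another vertex that plays the role of $a$ in Type 2). For the $C_4$ case, let $F\subseteq E(H)$ be the set of fully external edges, with cycle $a$-$b$-$c$-$d$-$a$. When $|F|\le 1$, or when $|F|=2$ with the two edges opposite on the cycle, the two non-$F$ opposite edges each lie inside some $N_i$ and $N_j$, and together these indices cover $V(H)$, yielding Type 1. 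When $|F|=4$ (every edge fully external), both Type 3 and Type 4 follow mechanically from the definition of $N_{ext}$ on each edge. The remaining cases ($|F|=2$ with adjacent edges in $F$, and $|F|=3$) are the substantive ones, and after an appropriate cyclic relabeling of the cycle they match the Type 3 or Type 4 templates.

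The main obstacle I expect is the subcase $F=\{ab,bc\}$: here the non-$F$ edges $cd$ and $ad$ force $N_I(c), N_I(d)$ and $N_I(a), N_I(d)$ to be $\subseteq$-comparable, and the proof must branch on the directions of those inclusions. Concretely, if $N_I(a), N_I(c)\subsetneq N_I(d)$, Type 4 emerges with $d$ playing the role of a ``central'' vertex (one checks $a,c\in N_{ext}(b)\cap N_{ext}(d)$ directly from $ab,bc\in F$ and the strict inclusions); whereas if $N_I(d)\subseteq N_I(a)$ (or symmetrically $N_I(d)\subseteq N_I(c)$), then any element of $N_I(d)$ already lies in the larger of the two, so combined with an index from $N_I(b)$ we obtain Type 1, unless $N_I(a),N_I(c)$ are themselves incomparable to $N_I(b)$'s hitting set, in which case a cyclic relabeling of the cycle repositions $F$ to match Type 3 with a different ``anchor'' vertex. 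Equality cases such as $N_I(a)=N_I(d)$ are the most delicate and will need to be handled by explicitly using the dihedral symmetry of the $C_4$ to relabel before comparing to the statement of Type 3 or Type 4; verifying that one of the four valid cyclic labelings always works is the main bookkeeping step.
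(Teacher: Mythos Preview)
Your approach is correct and is essentially the same as the paper's: both proofs do a case analysis on which edges of $H$ are fully external, repeatedly exploiting that a non-fully-external edge $uv$ forces $N_I(u),N_I(v)$ to be $\subseteq$-comparable (hence both lie in some common $N_i$). The paper, however, organizes the case split differently and more economically. Rather than enumerating by $|F|$, it first observes that if some pair of \emph{opposite} edges of the $4$-cycle $ab,bc,cd,da$ are both non-fully-external then Type~1 is immediate; hence one may WLOG assume two \emph{adjacent} edges, say $ab$ and $ad$, are fully external. From that single normalization the remaining analysis is short: for $K_4$ it branches on whether $c\in N_{ext}(a)$ and then on whether $bd$ is fully external; for $C_4$ it branches on which of $bc,cd$ are fully external, and in the ``neither'' subcase (your $|F|=2$ adjacent case) it disposes of the equality $N_I(c)=N_I(d)$ (and symmetrically $N_I(b)=N_I(c)$) as Type~1 first, after which the strict-inclusion directions on $cd$ and $bc$ directly yield Type~3 or Type~4 without further branching.

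Your $\subseteq$-minimal argument for $K_4$ is a genuine, slightly more structural alternative to the paper's direct case analysis, and it does go through, but as written it is underspecified: the ``residual subcase, where two vertices are incomparable to $a_0$'' requires you to show that $b$ (the vertex with $N_I(a_0)\subseteq N_I(b)$) must then be a Type~2 center, which takes a couple more lines than your summary suggests. Likewise, in your $F=\{ab,bc\}$ discussion, the clause ``unless $N_I(a),N_I(c)$ are themselves incomparable to $N_I(b)$'s hitting set'' is not quite the right condition; the clean way (as in the paper) is to first rule out $N_I(d)=N_I(a)$ and $N_I(d)=N_I(c)$ as Type~1, after which the strict inclusions immediately give the $N_{ext}$ memberships needed for Type~3 or Type~4 without any ``hitting set'' reasoning. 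None of this is a real gap---your plan would close once written out---but adopting the paper's early WLOG to two adjacent fully external edges will shorten your proof considerably.
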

\begin{proof}
Suppose $V(H)=\{a,b,c,d\}$ where $ab,bc,cd,da\in E(H)$. So we have that either $ac,bd\in E(H)$ or $ac,bd\notin E(H)$. If none of $ab,cd$ are fully external, by Observation \ref{obs:not-f-ext} there are $i$ and $j$ such that $a,b\in N_i$, $c,d\in N_j$ and so $H$ is of type $1$. Similarly, if none of $bc, da$ are fully external, $H$ is of type $1$. 

So WLOG we can assume that $ab,ad$ are fully external. First suppose that $H$ is isomorphic to $K_4$. If $c\in N_{ext}(a)$, then $H$ is of type $2$. Otherwise $N_I(a)\subseteq N_I(c)$. Moreover, similar to above, if both of $ac,bd$ are fully external, $H$ is of type $1$. Otherwise, since $ac$ is not fully external, $bd$ is. Now if $c\in N_I(b)$ or $c\in N_I(d)$, $H$ is of type $2$. Otherwise $N_I(b)\subseteq N_I(c)$ and $N_I(d)\subseteq N_I(c)$. If two of the sets $N_I(a), N_I(b),N_I(d)$ intersect, then $H$ is of type $1$. Otherwise, these sets are disjoint, so $N_I(c)$ is not equal to any of them. So $a,b,d\in N_{ext}(c)$, and $H$ is of type $2$.

Now suppose that $H$ is an induced $C_4$, and recall that $ab,ad$ are fully external.
If $dc$ and $cb$ are both fully external, then $H$ is of type $4$ by the definition of fully external edges. If $bc$ is fully external and $dc$ is not fully external, then $H$ is of type $3$, since by fully externality of $ab$ and $ad$ we have that $b,d\in N_{ext}(a)$, and by fully externality of $ab,bc$ we have that $a,c\in N_{ext}(b)$. Similarly if $dc$ is fully external and $bc$ is not fully external, then $H$ is of type $3$.

So assume that none of $bc$ and $dc$ are fully external. 
If $N_I(d)=N_I(c)$, then since $bc$ is not fully external, $N_I(c)\cap N_I(b) \neq \emptyset$, so there is $i$ such that $b,c,d\in N_i$ and so $H$ is of type $1$. So we can assume that $N_I(d)\neq N_I(c)$ and $N_I(c)\neq N_I(b)$. 

Now if $c\in N_{ext}(d)$, $H$ is of type $3$: $c,a\in N_{ext}(d)$ and $b,d\in N_{ext}(a)$. Otherwise we must have $d\in N_{ext}(c)$ since $N_I(c)\neq N_I(d)$. Similarly, if $c \in N_{ext}(b)$, $H$ is of type $3$. So we can assume that $b\in N_{ext}(c)$ since $N_I(c)\neq N_I(b)$. So $H$ is of type $4$.
\end{proof}

Now we describe the algorithm. We first look for a $C_4$ or $K_4$ that has a vertex in $I$, and then give an algorithm for detecting the patterns of Lemma \ref{lem:categorize} which are patterns with no vertex from $I$.

\paragraph{Step $1$.} Let $L(u,w)$ be a table initialized empty, which is going to stay empty for some pairs $u,w$, or contain the common neighbors of $u,w$. At this step, we are going to fill $L(v_i,v_j)$ for all $i,j\in [t]$. To do so, we first make a pruned adjacency list for every node to only contain the vertices in $I$. 
Then for each $x\notin I$, we put $x$ in $L(v_i,v_j)$ if $x$ is a common neighbor of $v_i$ and $v_j$. If by doing that $|L(v_i,v_j)|=2$, we check if $x$ is adjacent to the other member $y$ of $L(v_i,v_j)$, because otherwise $v_ixv_jy$ forms an induced $C_4$. Moreover, if at some point for some $i,j$,  $L(v_i,v_j)\ge 3$, we have an induced $C_4$ or $K_4$: if there is a non-edge in $L(v_i,v_j)$ like $xy$, then $v_ixv_jy$ forms an induced $C_4$. Otherwise there is a triangle in $L(v_i,v_j)$, and so this triangle with $v_i$ forms a $K_4$. So either we find a $K_4$ or $C_4$, or for all $i,j$ we have $|L(v_i,v_j)|\le 2$, and if $|L(v_i,v_j)|= 2$, the two common neighbors of $v_i$ and $v_j$ are adjacent. 
Note that the total running time is $\sum_x N_I(x)^2=\sum_{i,j} |L(v_i,v_j)|\le O(n^2)$. 

\paragraph{Step $2$.} For every $i\in [t]$, we run the algorithm of Lemma \ref{lem:triangle-and-c4} on the subgraph induced by $N_i$ for detecting a triangle or induced $C_4$. Note that if the algorithm returns one of these patterns, we are done. Otherwise, we have that $|E(G[N_i])|\le |N_i|^{1.5}$. This step takes $\sum_i O(|N_i|^2)$ time.

\begin{claim}
\label{claim:sum-of-ni-2}
$\sum_i |N_i|^2\le O(n^2)$.
\end{claim}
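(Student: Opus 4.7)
The plan is to exploit the constraint established in Step 1, namely $|N_i \cap N_j| \le 2$ for every $i \ne j$, via a double-counting argument in the bipartite graph $B$ between $I$ and $V \setminus I$. First I would rewrite
\[ \sum_i |N_i|^2 \;=\; \sum_i |N_i| \;+\; 2 \sum_i \binom{|N_i|}{2}. \]
The linear term $\sum_i |N_i|$ just counts edges in $B$ and is trivially at most $t(n-t) = O(n^2)$, so the task reduces to showing that $S := \sum_i \binom{|N_i|}{2}$ is $O(n^2)$.

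For $x, y \in V \setminus I$, let $c_{xy} := |N_I(x) \cap N_I(y)|$ be the number of $v_i \in I$ adjacent to both $x$ and $y$. Double-counting length-$2$ paths $x{-}v_i{-}y$ in $B$ (grouping by middle vertex vs.\ by endpoints) gives $S = \sum_{\{x,y\}} c_{xy}$. A second double count, over pairs of the form $(\{x,y\}, \{v_i, v_j\})$ with $\{x, y\} \subseteq N_i \cap N_j$, yields
\[ \sum_{\{x,y\}} \binom{c_{xy}}{2} \;=\; \sum_{\{v_i, v_j\}} \binom{|N_i \cap N_j|}{2} \;\le\; \binom{t}{2}, \]
where the inequality uses that $|N_i \cap N_j| \le 2$ makes each summand on the right at most $1$.

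The key step combines the two identities: writing $c^2 = 2\binom{c}{2} + c$ and summing gives $\sum_{\{x,y\}} c_{xy}^2 \le t^2 + S$, and then Cauchy--Schwarz over the $N := \binom{n-t}{2} = O(n^2)$ unordered pairs $\{x,y\}$ yields $S^2 \le N \sum_{\{x,y\}} c_{xy}^2 \le N(t^2 + S)$. Solving this quadratic inequality in $S$ gives $S \le N + t\sqrt{N} = O(n^2 + tn) = O(n^2)$ since $t \le n$, and combining with the trivial bound on $\sum_i |N_i|$ proves the claim.

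The main subtle point is to identify the correct quantity to attack. A naive approach like $\sum_i |N_i|^2 \le (\max_i |N_i|)(\sum_i |N_i|)$ combined with a Zarankiewicz-style edge bound $\sum_i |N_i| = O(t\sqrt{n})$ only yields $O(n^{1.5} t)$, which exceeds $n^2$ whenever $t \gg \sqrt{n}$; similarly the worst-case bound $|N_i| \le n-t$ gives $\sum_i |N_i|^2 \le t(n-t)^2$, which is too large for intermediate $t$. The content of the argument is that translating the ``$K_{2,3}$-freeness'' on the $I$-side into a second-moment bound on the $c_{xy}$'s and then extracting a first-moment bound via Cauchy--Schwarz works uniformly in $t$.
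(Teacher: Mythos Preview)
Your proof is correct and sets up exactly the same double count as the paper: the paper's $l(u,w)$ is your $c_{xy}$, and the paper also derives the key constraint $\sum_{\{x,y\}}\binom{c_{xy}}{2}\le \binom{t}{2}$ from $|N_i\cap N_j|\le 2$. The only difference is the final extraction step. You go through Cauchy--Schwarz to turn the second-moment bound into a first-moment bound; the paper instead uses the pointwise inequality $c\le c(c-1)$ valid for integers $c\ge 2$, which gives directly
\[
\sum_{\{x,y\}:\,c_{xy}\ge 2} c_{xy}\;\le\;\sum_{\{x,y\}} c_{xy}(c_{xy}-1)\;=\;2\sum_{\{x,y\}}\binom{c_{xy}}{2}\;\le\; t^2,
\]
while the pairs with $c_{xy}\le 1$ contribute at most $\binom{n}{2}$ in total, so $S\le \binom{n}{2}+t^2\le 2n^2$ without invoking Cauchy--Schwarz. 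Both routes arrive at $O(n^2)$, but the paper's split avoids the quadratic-in-$S$ inequality entirely.
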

\begin{proof}
Let $l(u,w)=|N_I(u)\cap N_I(w)|$ be the number of $i\in [t]$ such that $u,w\in N_i$. Then by a double counting argument we have that $\sum_i |N_i|^2=\sum_{u,w} l(u,w)$. Now by Step $1$, for any pair $i,j\in [t]$, we have $|L(v_i,v_j)|\le 2$, so there is at most one pair $(u,w)$ such that $u,w\in N_i\cap N_j$. So $l(u,w)^2$ pairs in $I\times I$ are assigned to $(u,w)$ this way, and are not assigned to any other pair. So $\sum_{l(u,w)>1} l(u,w)\le \sum_{l(u,w)>1} l(u,w)^2 \le t^2\le n^2$. So $\sum_{u,w} l(u,w) = \sum_{l(u,w)=1} l(u,w)+\sum_{l(u,w)>1} l(u,w) \le 2n^2$. 
\end{proof}
\paragraph{Step $3$.} In this step, we fill $L(v_i,w)$ for $i\in [t]$ and $w\notin N_i$. For each $i\in [t]$, for every $x\in N_i$, we scan all neighbors of $x$, and if $w$ is a neighbor of $x$ outside $N_i$, we add $x$ to $L(v_i,w)$. If at some point $L(v_i,w)$ has exactly two nodes in it, we check if they are connected. If they are not connected, they form an induced $C_4$ with $v_i$ and $w$. If at some point $L(v_i,w)>2$, we stop since we have an induced $C_4$ or $K_4$: if there is a non-edge in $L(v_i,w)$, this non-edge with $v_i$ and $w$ form an induced $C_4$. Otherwise there is a triangle in $L(v_i,w)$, and it forms a $K_4$ with $v_i$.

We claim that the running time of this step is at most $O(tn+\sum_i |N_i|^{1.5})\le O(tn+\sum_i |N_i|^{2})\le  O(n^2)$ where for the second inequality we use Claim \ref{claim:sum-of-ni-2}.
First note that we don't update $L(v_i,w)$ if $w\in N_i\cap N(x)$. So the number of times that we don't update the table in our search is the number of internal edges in $N_i$s for all $i$, which is $O(\sum_i |N_i|^{1.5})$ by step $2$. The rest of the running time is $\sum_{w,i, w\notin N_i} |L(v_i,w)|$, and since $|L(v_i,w)|$ never reaches $4$ (we stop before it happens) this sum is $O(tn)$. 

If at the end of this step we don't find an induced $C_4$ or $K_4$, then for every $w\notin N_i$ we have $|L(v_i,w)|\le 2$. 

Note that after steps $1,2$ and $3$, if we find no induced $C_4$ or $K_4$, it means that there is no induced $C_4$ or $K_4$ that has at least one vertex in $I$. If $|I|\ge T$, where we define $T$ later, then we remove $I$ from the graph and recurse. We can do this at most $n/T$ times. Note that the running time of steps $1,2$ and $3$ is $O(n^2)$, so the running time is at most $O(n^3/T)$ in this case.

Now suppose that $|I|<T$. From step $3$ we know that $|N_{ext}(w)|\le 2t$ for all $w\notin I$. Moreover, in step $3$ we go over all of the neighbors of all nodes outside $I$, so we can assume that we have the set $N_{ext}(w)$ for all $w\notin I$. In the following steps, we are going to detect patterns of Lemma \ref{lem:categorize}.

\paragraph{Step $4-1$: Type $1$.} First we detect $K_4$s of this type. Note that $N_i$ doesn't have a triangle, so if there is a $K_4$ in $N_i\cup N_j$, it must have exactly two nodes in each set. So for each pair $i,j$, for each node $a\in N_i\setminus N_j$, we check if $a$ has two neighbors in $N_j$ by checking if $|L(a,v_j)|= 2$. If so, then let these neighbors be $b$ and $c$. We check if $b,c\notin N_i$, and if so, we check if they have a common neighbor in $N_i$ other than $a$, by checking if $|L(b,v_i)\cap L(c,v_i)|\ge 2$. If they do have a common neighbor $d$, we check if $abcd$ form a $K_4$. 

The process for each $i,j,a$ takes $O(1)$ time, so in total this takes $O(nt^2)$ time. 

Now we look for induced $C_4$s of this type. 
First note that there is no induced $C_4$ with exactly one node in $N_i$. To see this, suppose this is true, and $a$ is the node in the induced $C_4$ in $N_i$ and not in $N_j$, and $b,c$ are its neighbors in $N_j$. Then since $a\notin N_j$ and has two neighbors in $N_j$ ($|L(a,v_j)|=2$), they must be adjacent (we checked this in step $3$) which is a contradiction.
So any induced $C_4$ has exactly two nodes in $N_i$ and two nodes in $N_j$. The edges of the $C_4$ are either all between $N_i$ and $N_j$, or exactly two of them are between $N_i$ and $N_j$. The first case cannot happen, because if $a,d\in N_i$ and $b,c\in N_j$, then since $b,c$ are $a$'s neighbor in $N_j$, they must be adjacent. So the only case is that $ab,bc,cd,ad$ are edges. To detect these types of patterns, we do the following.

For every $i\in [t]$, for every edge $e=ad\in N_i$, and for every $i\neq j\in [t]$, we check if $a$ and $d$ make an induced $C_4$ with any two of the vertices in $L(v_j,a)\cup L(v_j,d)$ which is a set of size at most $4$. This takes $\sum_i t|N_i|^{1.5}\le t\sqrt{(\sum_i |N_i|)(\sum_i |N_i|^2)}\le t\sqrt{(n+t^2)n^2}\le tn^{1.5}+t^2n$. Note that we use the inequality $\sum_i |N_i|\le t^2+n$ which is true since $|N_i\cap N_j|\le 2$ for all $i,j$.

\paragraph{Step $4-2$: Type $2$.} For every node $u$, we run the algorithm of Lemma \ref{lem:triangle-and-c4} on the subgraph induced by $N_{ext}(u)$. 

Since $|N_{ext}(u)|\le 2t$, this takes $O(t^2)$ time for every $v$, and so this step takes $O(nt^2)$ time. Note that if we don't find any patterns in this step, there is no pattern of type $2$ in the graph. 

\paragraph{Step $4-3$: Type $3$.}
For each fully external edge $e=ab$, we do the following: for each $i\in [t]$ that $a,b\notin N_i$, we look at the set of neighbors of $a$ and $b$ in $N_i$ (i.e. $L(v_i,a)\cup L(v_i,b)$), and see if they form an induced $C_4$ with $a$ and $b$. Note that these neighbors are at most $4$ in total. So this step takes $O(nt^2)$ time, because the number of fully external edges is $\sum_{u\in V(G)}|N_{ext}(u)|\le O(tn)$.

\paragraph{Step $4-4$: Type $4$.}
We fill a table similar to $L$. We want $L_{ext}(w,u)$ include all nodes $z$, such that $w,u\in N_{ext}(z)$. To do this, for every node $z\in V(G)$, we put $z$ in $L_{ext}(w,u)$, for all $w,u\in N_{ext}(z)$. This takes $O(t^2)$ per $z$, and so $O(nt^2)$ time in total. 

Now for every pair $(w,u)$ such that $wu\notin E(G)$ and $|L_{ext}(w,u)|\le 4$, we check if there is a non-edge or a triangle in $L_{ext}(w,u)$. If there is a non-edge in $L_{ext}(w,u)$, then this non-edge with $(w,u)$ form an induced $C_4$ of type $4$. If there is a triangle in $L_{ext}(w,u)$, then this triangle with $w$ forms a $K_4$.

Afterwards, if there is a pair $(u,w)$ such that $uw\notin E(G)$ and $|L_{ext}(w,u)|\ge 4$, then $L_{ext}(u,w)$ either contains a $K_4$, or a non-edge, and in either case, we find a $K_4$ or $C_4$. This processing takes $O(n^2)$ time.

\paragraph{Running time.} Note that in step $4$, we have $t\le T$. So the running time of the algorithm is $O(n^3/T+nT^2+n^{1.5}T)$. Setting $T=n^{2/3}$, the running time is equal to $O(n^{7/3}).$




\subsubsection{$\{H,\bar{H}\}$ Detection}
\label{sec:complement}
In this section we prove Theorem \ref{thm:complement}.

\begin{restatable}{theorem}{complement}
\label{thm:complement}
There is an $O(n^2)$ time algorithm that solves Paired Pattern Detection for $\{H,\bar{H}\}$ for any $4$-node $H$.
\end{restatable}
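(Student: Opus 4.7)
There are eleven graphs on four vertices up to isomorphism, grouping into five complement pairs together with the self-complementary $P_4$: namely $\{K_4, I_4\}$, $\{\text{diamond}, K_2 \cup I_2\}$, $\{\text{paw}, P_3 \cup I_1\}$, $\{K_{1,3}, K_3 \cup I_1\}$, $\{C_4, 2K_2\}$, and $\{P_4\}$. The plan is to give an $O(n^2)$ algorithm for each of these six instances of Paired Pattern Detection.

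Two pairs follow from known linear-time routines. For $\{P_4\}$, constructive cograph recognition suffices, since $G$ is $P_4$-free iff $G$ is a cograph; the recognition procedure either certifies $G$ is a cograph or returns an induced $P_4$ in $O(n+m)$ time. For $\{K_4, I_4\}$, Ramsey's theorem gives $R(4,4)=18$, so for $n \geq 18$ the graph contains an induced $K_4$ or $I_4$; the Boppana--Halld\'{o}rsson procedure produces a clique or independent set of size $\Omega(\log n)$ in $O(n+m)$ time, from which a $K_4$ or $I_4$ is extracted. For $n < 18$, brute force runs in constant time.

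For three of the remaining four pairs, the plan is to exploit the asymmetry that one of $H, \bar H$ contains a triangle while the other contains $I_3$. For example, for $\{K_{1,3}, K_3 \cup I_1\}$: a $K_{1,3}$ centered at $v$ is precisely an $I_3$ inside $N(v)$, and a $K_3 \cup I_1$ with $v$ as the isolate is a triangle inside $V(G) \setminus (N(v) \cup \{v\})$. For each vertex $v$ I test both local conditions using Lemma \ref{lem:p3} and careful pair scans inside $N(v)$ and its complement; amortizing over all $v$ gives total time $O(n^2)$. The pairs $\{\text{paw}, P_3 \cup I_1\}$ and $\{\text{diamond}, K_2 \cup I_2\}$ admit analogous treatments anchored on a vertex or an edge: a diamond through an edge $uv$ is witnessed by two adjacent common neighbors of $u$ and $v$, while a $K_2 \cup I_2$ through $uv$ is a non-adjacent pair in the co-neighborhood of $\{u,v\}$, and similarly for the paw/$P_3 \cup I_1$ pair.

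The main obstacle is $\{C_4, 2K_2\}$, since neither pattern contains a triangle or $I_3$ and the previous asymmetry vanishes; moreover this pair lies outside the scope of Theorem \ref{thm:c4-general}, which requires its second pattern to contain a triangle. Here the plan is to use the Maffray--Preissmann characterization that $\{C_4, 2K_2\}$-free graphs are precisely the pseudo-split graphs: those whose vertex set partitions into a clique $K$, an independent set $I$, and a (possibly empty) set $S$ inducing a $C_5$, with $S$ complete to $K$ and anticomplete to $I$. Such a partition can be constructed greedily in $O(n^2)$ time by ordering vertices by non-increasing degree (as in split graph recognition) and, when the split assignment fails, searching for a $C_5$-witness in the residue. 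If the construction succeeds, $G$ is $\{C_4, 2K_2\}$-free; otherwise the local failure explicitly exhibits an induced $C_4$ or $2K_2$, which is reported.
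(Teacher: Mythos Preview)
Your case enumeration and your handling of $\{P_4\}$ and $\{K_4,I_4\}$ match the paper. The pseudo-split route for $\{C_4,2K_2\}$ is a genuinely different and cleaner idea than the paper's: the paper builds a maximal clique $C$, maintains the sets $T_C=\{v:d_C(v)=|C|-1\}$ and $S_C=V\setminus(C\cup T_C)$, and through several rounds of restructuring (growing $C$, checking edges inside $S_C$ and $T_C$) either finds a pattern or reaches a split-like decomposition handled by Lemma~\ref{lem:c4mainlemma}. Your Maffray--Preissmann approach bypasses all of this, though you should spell out how the degree-sequence recognition actually \emph{produces} an induced $C_4$ or $2K_2$ on failure; the standard Hammer--Simeone-type test only returns a yes/no bit, and turning it into a certifying algorithm in $O(n^2)$ requires an argument you have not given.

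The real gap is in the three ``asymmetric'' pairs. Your plan is to run, for every vertex $v$, a local test inside $N(v)$ and inside $V\setminus N[v]$ (via Lemma~\ref{lem:p3} or pair scans) and to ``amortize'' to $O(n^2)$. That amortization does not hold: the cost of scanning pairs in $N(v)$ is $\Theta(d(v)^2)$, and $\sum_v d(v)^2$ is $\Theta(n^3)$ on dense graphs. The same blowup hits the edge-anchored version you sketch for the diamond. Nothing in your outline explains how to avoid this. The paper does \emph{not} iterate over all vertices; instead it first uses Lemma~\ref{lem:ramsey} to find a $K_4$ (or $I_4$), extends it to a single maximal clique $C$ in $O(n^2)$ time, and then argues structurally about how vertices outside $C$ attach to $C$ (for diamond/co-diamond and paw/co-paw), or it first isolates a single vertex of moderate degree $3\le d(v)\le n-4$ and finds a pattern through it in $O(n)$ time (for claw/co-claw). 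In each case only \emph{one} anchor object is analyzed, not all of them, which is what keeps the running time quadratic. Your proposal needs a comparable single-anchor argument; the per-vertex scan as written is cubic.
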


\begin{figure}
    \centering
    \includegraphics{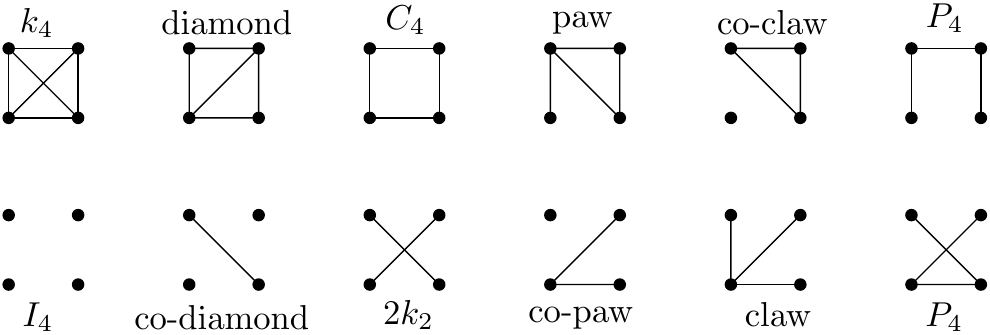}
    \caption{$4$-node patterns paired with their complement.}
    \label{fig:4nodepat}
\end{figure}

There are $6$ cases for the set $\{H,\bar{H}\}$ that are depicted in Figure \ref{fig:4nodepat}. From Ramsey theory results we know that a graph of size $n$ has an independent set or a clique of size $O(\log{n})$. So if a graph is large enough, it has a $4$-clique or a $4$-independent set. For $H=P_4=\bar{H}$, Corneil et al \cite{4path} proved that induced $P_4$ can be detected in linear time. So we are left with $4$ cases, and we prove these $4$ cases separately. 
Even though each case has different techniques, for most of them we start by detecting a clique (or independent set), trying to find an $H$ or $\bar{H}$ that intersects with the clique and then if we find no such pattern, we prove the graph has a specific structure. 
Our algorithms are technically involved similar to that of \cite{4path}. We leave it as an open problem that whether there is a general algorithm that can be used for all $4$-node patterns $H$.

We denote our $n$-node host graph by $G$. Note that finding $\{H,\bar{H}\}$ in $G$ is equivalent to finding $\{H,\bar{H}\}$ in $\bar{G}$. Throughout the rest of this section, we assume that $G$ has at least $31$ nodes. We prove the two cases of $\{diamond, co-diamond\}$ and $\{paw, co-paw\}$ here and prove the other two cases in the appendix. We first state the proof of the following known lemma. 
\begin{lemma}
\label{lem:ramsey}
There is an algorithm that detects a $K_4$ or $I_4$ in an $n\ge 31$ node graph in $O(n)$ time.
\end{lemma}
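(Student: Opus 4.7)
The plan is to give a recursive algorithm parameterized by Ramsey numbers. Define $R(k,l)$ to be the smallest integer so that every graph on at least $R(k,l)$ nodes contains a $K_k$ or an $I_l$. Classical Ramsey theory gives $R(4,4)=18$, and the familiar recurrence $R(k,l)\le R(k-1,l)+R(k,l-1)$ will drive the induction. Since $31\ge 18$, the target pattern always exists, so the only task is to exhibit one in linear time.

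First I would prove, by induction on $k+l$, the stronger statement: for every fixed $k,l$ there is an $O(n)$-time algorithm that, given a graph on at least $R(k,l)$ vertices, returns either a $K_k$ or an $I_l$. The base cases $k=1$ or $l=1$ are trivial (any single vertex works). For the inductive step, pick an arbitrary vertex $v$ and spend $O(n)$ time partitioning the remaining vertices into $A=N(v)$ and $B=V\setminus(N(v)\cup\{v\})$. Since $|A|+|B|=n-1\ge R(k,l)-1\ge R(k-1,l)+R(k,l-1)-1$, at least one of $|A|\ge R(k-1,l)$ or $|B|\ge R(k,l-1)$ must hold. In the first case, recursively find a $K_{k-1}$ or $I_l$ inside $G[A]$: if an $I_l$ is returned we are done, and if a $K_{k-1}$ is returned, appending $v$ yields a $K_k$ because $v$ is adjacent to all of $A$. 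The second case is symmetric: a recursive call on $G[B]$ returns $K_k$ (done) or $I_{l-1}$, and adjoining $v$ produces an $I_l$.

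Next I would analyze the running time. The recursion tree has depth at most $k+l-2$, since each recursive call strictly decreases either $k$ or $l$, and the recursion terminates at the trivial base cases. The only nontrivial work at each level is computing $N(v)$ and $B$, which costs $O(n)$ using adjacency-list access. Hence the total time is $O\!\left((k+l)\cdot n\right)$, which is $O(n)$ for fixed $k$ and $l$. Specializing to $k=l=4$ gives depth at most $6$ and threshold $R(4,4)=18\le 31$, yielding the claimed $O(n)$ algorithm for detecting $K_4$ or $I_4$ whenever $n\ge 31$.

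The main conceptual issue is verifying that the Ramsey inequality $R(k-1,l)+R(k,l-1)\le R(k,l)$ is tight enough at every level of the recursion so the appropriate subgraph is guaranteed to be large enough for the inductive call; this is standard but must be checked at each internal node of the recursion tree (e.g.\ $R(3,4)=R(4,3)=9$ and $R(3,3)=6$). A minor technicality is the cost model for building $G[A]$ and $G[B]$: rather than constructing new adjacency structures, I would pass pointers together with membership bitmasks so that neighbor queries inside the subgraph still take $O(1)$, keeping each level at $O(n)$ work. No other obstacle is expected since the algorithm is essentially the algorithmic content of Erd\H{o}s--Szekeres, as used by Boppana and Halld\'{o}rsson~\cite{BoppanaH92}.
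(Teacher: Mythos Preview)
Your approach is sound in outline and genuinely different from the paper's, but one inequality is written the wrong way round: you use $R(k,l)\ge R(k-1,l)+R(k,l-1)$, whereas the classical bound is $R(k,l)\le R(k-1,l)+R(k,l-1)$. With the exact Ramsey numbers the recursion can actually stall at an intermediate level: e.g.\ if the first step lands you in a $(3,4)$-subproblem on exactly $R(3,4)=9$ vertices and the pivot there has degree $3$, then $|A'|=3<R(2,4)=4$ and $|B'|=5<R(3,3)=6$, so neither branch qualifies. The fix is immediate: run the induction with the Erd\H{o}s--Szekeres bounds $r(k,l)=\binom{k+l-2}{k-1}$, which satisfy $r(k,l)=r(k-1,l)+r(k,l-1)$ exactly by Pascal's rule; then the pigeonhole step works at every level, and since $r(4,4)=20\le 31$ the hypothesis $n\ge 31$ still suffices.

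With that correction, your argument is a clean general proof that differs from the paper's. The paper's proof is ad hoc for $(4,4)$: it picks a vertex of degree at least $(n-1)/2\ge 15$ (passing to the complement if necessary), invokes the already-established $O(n)$ algorithm for $\{K_3,I_3\}$ on its neighborhood, and then does a short case analysis to extend the found $K_3$ or $I_3$ to a $K_4$ or $I_4$. Your recursive Erd\H{o}s--Szekeres route is more uniform and yields the statement for all fixed $(k,l)$ at once; the paper's route reuses the $\{K_3,I_3\}$ subroutine from its own toolkit and avoids any recursion-depth bookkeeping.
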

\begin{proof}
Consider a node $v$ in the graph $G$. If $d_G(v)< (n-1)/2$, we consider $\bar{G}$. So WLOG we assume that $d_G(v)\ge (n-1)/2\ge 15$. So by Theorem \ref{thm:3-node-pairs} in $O(d_G(v))$ time we can find a triangle or $I_3$. If we find a triangle, this triangle with $v$ forms a $K_4$. Otherwise we have an independent set $u,w,z$. Check if a node in $N(v)$ is not attached to any of these three nodes in $O(3d(v))$ time. If such node exists, we have a $I_4$. So assume that for any node $x\in N(v)\setminus\{u,w,z\}$, $x$ is attached to at least one of $u,w,z.$ So WLOG we can assume that $z$ is attached to at least $(d(v)-3)/3\ge 4$ nodes in $N(v)$. If there is an edge among these $4$ nodes, the edge with $v$ and $z$ form a $K_4$. Otherwise, these $4$ nodes form a $I_4$. The runtime is $O(d(v))=O(n)$.
\end{proof}

\paragraph{Diamond and co-diamond}
We want to detect diamond or co-diamond in the graph $G$. We run the algorithm of Lemma \ref{lem:ramsey} on $G$. WLOG, we can assume that the algorithm finds a $K_4$ (If it finds a $I_4$ in $G$, we consider $\bar{G}$ in which we will have a $K_4$). In $O(n^2)$ time, we can turn this $K_4$ into a maximal clique $C$: For every node, we check if it is attached to all the nodes in the clique and if it is, we add it to the clique. 

Now since $C$ is a maximal clique, for every node $u\notin C$ we know that $u$ is not adjacent to at least one node $u'$ in $C$. If $u$ is adjacent to at least two nodes $v,w$ in $C$, then $u,v,u',w$ form a diamond. So we check if $d_C(u)$ is at most $1$ for every $u$, and if for some $u$ it is not, we find a diamond. This takes $O(n^2)$ time. 

Now we show that either we can find a co-diamond or $G\setminus C$ is a clique. Let $u_1,u_2$ be two nodes outside $C$ that are not adjacent. Since they are adjacent to at most one node in $C$ and $|C|\ge 4$, there are two nodes $v,w$ in $C$ such that $u_1,u_2$ are not adjacent to either. So $u_1,u_2,v,w$ form a co-diamond. 

So we can assume that there is no non-edge in $G\setminus C$, hence $G\setminus C$ is a clique. This means that there is no co-diamond in $G$ since the subgraph induced on any $4$ nodes has at least 2 edges. Now if there is a diamond in $G$, it must have 3 nodes in $G\setminus C$ and one node in $C$. This is because if it has two nodes in $C$, one of the two nodes in $G\setminus C$ is adjacent to both of the nodes in $C$, And if it has three nodes in $C$, the node in $G\setminus C$ is adjacent to two of the nodes in $C$. So in any case, there is a node in $G\setminus C$ that has two neighbors in $C$, which contradict our earlier assumption. 

Now to look for diamonds with three nodes in $G\setminus C$, we check the number of neighbors of every node in $C$. If a node in $C$ has at least $2$ neighbors in $G\setminus C$ and it is not adjacent to all the nodes in $G\setminus C$, then we have a diamond. This step takes $O(|C|\cdot|G\setminus C|)=O(n^2)$, and overall the algorithm runs in $O(n^2)$ time.

\paragraph{Paw and co-paw}
First suppose that $G$ is not connected, and suppose $G_1,\ldots,G_t$ are its connected components. Since the paw is a connected pattern, if $G$ has a paw, it is in one of the connected components. To detect any co-paws that might have vertices in different connected components, we look for a $P_3$ in each component using Lemma \ref{lem:p3} in total time of $\sum_i O(|V(G_i)|)^2$. If we find a $P_3$ in $G_i$ for some $i$, then we take a node from $G_j$ for some $j\neq i$, and they form a co-paw. Otherwise, if $G$ has a paw or co-paw, it is completely contained in a connected component. We recurse on each component, and by induction, the total running time will be $\sum_i O(|V(G_i)|)^2=O(n^2)$. 

If $\bar{G}$ is not connected, we can do the same. So we suppose that both $G$ and $\bar{G}$ are connected. 

We run the algorithm of Lemma \ref{lem:ramsey} on $G$. WLOG, we can assume that the algorithm finds a $K_4$. In $O(n^2)$ time, we can turn this $K_4$ into a maximal clique $C=\{v_1,\ldots,v_t\}$ for $t\ge 4$. If there exists $u\notin C$ that has at least one neighbor $v_i\in C$ and that there are at least two nodes $w_1,w_2\in C$ that are not adjacent to $u$, then $u,v_i,w_1,w_2$ form a paw. So we can check the neighbors of each $u\notin C$ in total time of $O(n^2)$, and if we don't find a paw, then any $u\notin C$ is either adjacent to none of the vertices in $C$, or is not adjacent to exactly $1$ node in $C$ (i.e. $d_C(u)$ is either $0$ or $|C|-1$). 

Let $S=\{u\in V(G)\setminus C| d_C(u)=0\}$ and $T=\{u\in V(G)\setminus C| d_C(u)=|C|-1\}$. If $|V(G)|>|C|$, since $G$ is connected we have that $|T|\ge 1$. If $|S|\ge 1$, we show that we find a paw in $S\cup T$: Since $G$ is connected, there is at least one edge between $S$ and $T$. Suppose $s\in S$ and $t\in T$ are adjacent. Let $v_i,v_j\in C$ be two nodes that are adjacent to $t$. Then $s,t,v_i,v_j$ form a paw. 

So assume that $S=\emptyset$. Since vertices in $T$ are adjacent to all but one node in $C$, we can decompose $T$ as follows. Let $T_i$ be the set of nodes in $T$ that are not adjacent to $v_i$. First check in $O(\sum_i |T_i|^2)=O(n^2)$ time if there is an edge in one of $T_i$s. Suppose $T_i$ has an edge $uw$. Then for some $j\neq i$, $u,w,v_i,v_j$ form a paw. 

So we can assume each $T_i$ is an independent set for all $i$. Now check the edges between these independent sets. Suppose that there is an edge between $u\in T_i$ and $w\in T_j$. Let $z\neq i,j$. Then $v_i,v_z,u,w$ form a paw.

So we can assume that for any $i,j$, $T_i\cup T_j$ is a complete bipartite graph. This defines all the edges of the graph. Now note that $G$ is a complete $t$-partite graph: partition $i$ is $T_i\cup v_i$. A complete $t$-partite graph doesn't have induced $P_3$ which is a subgraph of both paw and co-paw, and hence $G$ has no paw or co-paw. The total running time is $O(n^2)$.

\section{Lower bound for induced $C_4$: proof of Theorem \ref{thm:c4hard}}\label{sec:c4lb}
Here we reduce $4$-hyperclique in $3$-uniform hypergraphs to detecting an induced $C_4$ in an $m$-edge graph.

\begin{claim}
Under the $3$-uniform $4$-hyperclique hypothesis, induced $C_4$-detection in $n$-node graphs with $O(n^{3/2})$ edges requires $n^{2-o(1)}$ time on the word-RAM model with $O(\log n)$ bit words.
\end{claim}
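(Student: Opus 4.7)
The plan is to reduce detection of a $4$-hyperclique in a $3$-uniform hypergraph on $N$ vertices to induced $C_4$-detection in a graph $G$ with $n = \Theta(N^2)$ vertices and $m = O(N^3) = O(n^{3/2})$ edges. An $O(n^{2-\eps})$ algorithm (equivalently, since $m = \Theta(n^{3/2})$, an $O(m^{4/3-\eps})$ algorithm) for induced $C_4$ would then solve $4$-hyperclique in $O(N^{4-2\eps})$ time, contradicting the $3$-Uniform $4$-Hyperclique Hypothesis.

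By standard color-coding I may assume the hypergraph vertices are partitioned into color classes $V_1 \sqcup V_2 \sqcup V_3 \sqcup V_4$ and I seek a colored $4$-hyperclique with one vertex of each color. I build a $4$-partite graph $G$ with parts $P_1, P_2, P_3, P_4$, setting $P_i = V_i \times V_{(i \bmod 4)+1}$, so $|V(G)| = \sum_i |P_i| = O(N^2)$. Edges are placed only between cyclically consecutive parts: $(u,v) \in P_i$ is adjacent to $(v',w) \in P_{i+1}$ iff $v = v'$ and $\{u,v,w\}$ is a hyperedge of $H$. Each hyperedge whose colors form three cyclically-consecutive values contributes at most one edge of $G$, so $|E(G)| \le |E(H)| = O(N^3) = O(n^{3/2})$.

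The forward direction is clean: a colored $4$-hyperclique $(x_1,x_2,x_3,x_4)$ produces four pair vertices $p_i = (x_i, x_{i+1}) \in P_i$; the four edges of the cycle $p_1 p_2 p_3 p_4$ come from the four hyperedges $\{x_i, x_{i+1}, x_{i+2}\}$, while the diagonals $p_1 p_3$ and $p_2 p_4$ are automatic non-edges because $G$ has no edges between non-cyclically-adjacent parts. For the ``good'' backward direction, any induced $C_4$ that uses one vertex from each $P_i$ decodes immediately into a colored $4$-hyperclique by reading off the shared coordinates between consecutive pairs.

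The main obstacle will be ruling out spurious induced $C_4$s of the type $p_1, p_3 \in P_i$ and $p_2, p_4 \in P_{i+1}$, which correspond not to a $4$-hyperclique but to a ``book'' of four hyperedges sharing a common vertex $v$ of color $i{+}1$. I would handle this by augmenting $G$ with a carefully chosen collection of intra-part edges---for instance, making the pairs in each $P_i$ that share a given coordinate into a clique---so that every would-be book $C_4$ acquires a chord and is no longer induced. One must then check that (i) the total edge count remains $O(n^{3/2})$, (ii) the genuine $4$-hyperclique $C_4$s survive, since their diagonals still lie between non-adjacent parts and are untouched by intra-part additions, and (iii) the new intra-part edges do not create fresh spurious induced $C_4$s (such as ``rook''-type cycles within a single $P_i$), which may require additional restrictions on which intra-part edges to add or a separate subcubic preprocessing step to detect and discard any induced $C_4$ contained in a single part. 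This final verification is the principal combinatorial step.
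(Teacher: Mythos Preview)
Your plan is exactly the paper's construction: four parts $P_i=V_i\times V_{i+1}$, inter-part edges encoding hyperedges on shared middle vertex, and intra-part edges on a shared coordinate to kill the ``book'' $C_4$s. The paper commits to the \emph{second} coordinate: $(x,y)\sim(x',y')$ in $P_i$ iff $y=y'$. With that choice your concern (iii) largely dissolves, because inside each part the added edges form a disjoint union of cliques (one per value of $y$), which contains no induced $C_4$ and in fact no induced $P_3$; so there is no rook-type cycle and no need for a separate preprocessing pass. The edge count in (i) is fine: each part contributes $|V_{i+1}|\cdot\binom{|V_i|}{2}=O(N^3)$ intra-part edges, still $O(n^{3/2})$.

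What you still have to write out is the two-in-one-part case analysis, which is the ``principal combinatorial step'' you flagged. The key observations that make it go through are: (a) if two $P_i$-nodes have a common neighbour in $P_{i+1}$, they must share their second coordinate and hence are already adjacent in $G$ (this kills the book where the $P_i$-pair are opposite in the $C_4$, turning it into a diamond); and (b) if the two $P_i$-nodes are adjacent (so share $y$) and the $C_4$ continues into $P_{i+1}$, the two $P_{i+1}$-nodes are forced to be $(y,\cdot)$ and $(y,\cdot)$ with the same second coordinate as well, collapsing them to a single vertex. Symmetric arguments handle exits to $P_{i-1}$, and three-in-one-part is immediate from the disjoint-union-of-cliques structure. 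After these checks, every induced $C_4$ is transversal and decodes to a $4$-hyperclique, completing the reduction.
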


\begin{proof}
Without loss of generality, we are given a $4$-partite $3$-uniform hypergraph $G$ with parts $V_0,V_1,V_2,V_3$, and we want to determine if there exist $v_i\in V_i$ for each $i\in \{0,1,2,3\}$ so that every triple $(v_i,v_j,v_k)$ for $i\neq j, i\neq k, j\neq k$ is a hyperedge in $G$.

\begin{figure}[h]\centering
\includegraphics[width=11cm]{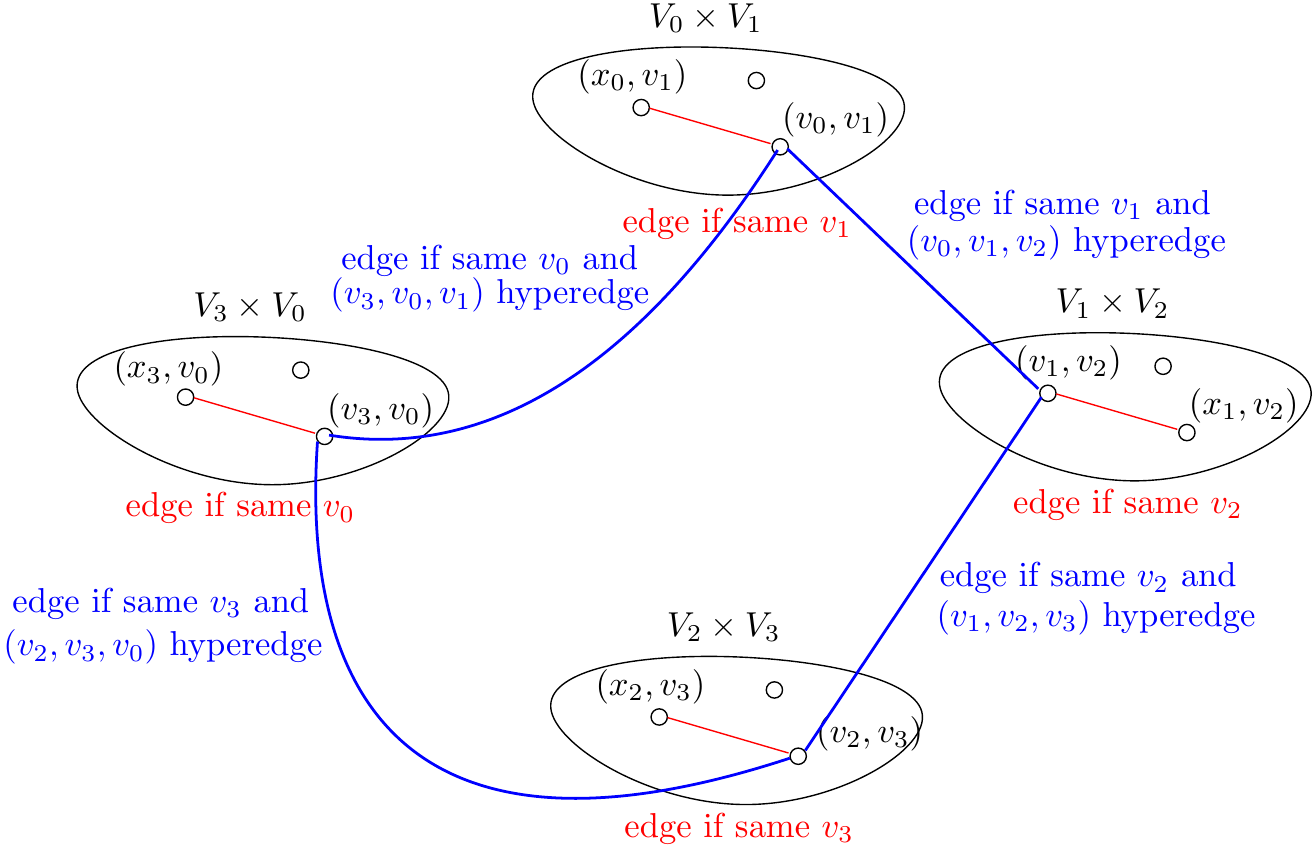}
\caption{A depiction of the lower bound construction for induced $C_4$.}
\label{fig:inducedc4}
\end{figure}

We create an undirected graph $G'$ as follows. See Figure \ref{fig:inducedc4}.
The vertices of $G'$ are all pairs $(x,y)$ such that $x\in V_i$, $y\in V_{i+1}$ for $i\in \{0,1,2,3\}$ where all indices are taken modulo $4$.

The edges are as follows:
\begin{itemize}
\item For every $i\in \{0,1,2,3\}$ and any two nodes $(x,y),(x',y')\in V_i\times V_{i+1}$, there is an edge between them if $y=y'$.
\item For every $i\in \{0,1,2,3\}$ and any two nodes $(x,y)\in V_i\times V_{i+1}$ and $(x',y')\in V_{i+1}\times V_{i+2}$, there is an edge between them if $y=x'$ and $(x,y,y')$ is a hyperedge in $G$.
\end{itemize}

First suppose that $v_0\in V_0, v_1\in V_1,v_2\in V_2, v_3\in V_3$ form a $4$-hyperclique in $G$. Then since $(v_0,v_1,v_2), (v_1,v_2,v_3), (v_2,v_3,v_0), (v_3,v_0,v_1)$ are all hyperedges in $G$, the following is an induced $C_4$ in $G'$: $(v_0,v_1),(v_1,v_2),(v_2,v_3),(v_3,v_0)$.

Now suppose that there is an induced $C_4$, $H$ in $G'$. If for each $i\in \{0,1,2,3\}$, $H$ has a vertex in $ (v_i,v_{i+1})\in V_i\times V_{i+1}$, then for each $i$, $(v_i,v_{i+1},v_{i+2})$ is a hyperedge, so that $H$ corresponds to a $4$-hyperclique.

Suppose now that $H$ doesn't have a node in every single one of the four $V_i\times V_{i+1}$. Then for some $i$, $H$ contains at least two nodes.

If $H$ contains three nodes $(v_i,v_{i+1}),(v'_i,v'_{i+1}),(v''_i,v''_{i+1})$ in $V_i\times V_{i+1}$, then these three nodes have two edges between them (since $H$ is a $C_4$).
W.l.o.g. $(v_i,v_{i+1})$ has edges to $(v'_i,v'_{i+1})$ and $(v''_i,v''_{i+1})$.
However, by the construction of $G'$, this means that $v_{i+1}=v'_{i+1}=v''_{i+1}$, and hence $H$ has a triangle, so it cannot be an induced $C_4$.

Suppose that $H$ contains exactly two nodes $(v_i,v_{i+1}),(v'_i,v'_{i+1})\in V_i\times V_{i+1}$.

If these two nodes are connected by an edge, then $v_{i+1}=v'_{i+1}$. The two nodes $(v_i,v_{i+1}),(v'_i,v_{i+1})$ can only have edges to $V_{i+1}\times V_{i+2}$ and to $V_{i-1}\times V_i$, and those two partitions have no edges between them. Thus the only way $(v_i,v_{i+1})$ and $(v'_i,v_{i+1})$ can be part of a $C_4$ is if they are connected to two nodes in the same part, that are also connected to each other.
If the edges of $H$ out of  $(v_i,v_{i+1})$ and $(v'_i,v_{i+1})$ are to 
$V_{i-1}\times V_i$, then they are to nodes $(v_{i-1},v_i)$ and $(v'_{i-1}, v'_i)$ that are supposed to have an edge between them. However then $v_i=v'_i$, so that $(v_i,v_{i+1})$ and $(v'_i,v_{i+1})$ are the same node, so this cannot happen. On the other hand, if the edges of $H$ out of  $(v_i,v_{i+1})$ and $(v'_i,v_{i+1})$ are to 
$V_{i+1}\times V_{i+2}$, then they are two some nodes $(v_{i+1},v_{i+2})$ and $(v_{i+1},v'_{i+2})$ that supposedly have an edge between them. But then $v_{i+2}=v'_{i+2}$ and these two are the same node, so that can't happen.

Thus it must be that $(v_i,v_{i+1})$ and $(v'_i,v'_{i+1})$ are not connected by an edge. Then they have a common neighbor in $V_{i-1}\times V_i$ or $V_{i+1}\times V_{i+2}$ or both.
If there is a common neighbor $(v_{i+1}'',v_{i+2})$ in $V_{i+1}\times V_{i+2}$, then by construction, it must be that $v_{i+1}''=v'_{i+1}=v_{i+1}$, but then there would be an edge between $(v_i,v_{i+1})$ and $(v'_i,v'_{i+1})$, and that cannot happen.
Thus both common neighbors must be in
$V_{i-1}\times V_i$. But then $v_i=v'_i$, and the common neighbors in
$V_{i-1}\times V_i$ must look like $(v_{i-1},v_i)$ and $(v'_{i-1},v_i)$, and thus must be connected by an edge, and hence $H$ is not an induced $C_4$ but a diamond.

Hence any induced $C_4$ must have a node in each $V_i\times V_{i+1}$ and thus corresponds to a $4$-hyperclique.

If $G$ had $n$ nodes, then $G'$ has $N=O(n^2)$ nodes and $M=O(n^{3})$ edges. Any $O(M^{4/3-\eps})$ time algorithm for $\eps>0$ for induced $C_4$ would imply an $O(n^{4-3\eps})$ time algorithm for $4$-hyperclique and would refute the $3$-uniform $4$-hyperclique hypothesis. The same holds for an $O(N^{2-\eps'})$ time induced $C_4$ algorithm for $\eps'=1.5\eps>0$.
\end{proof}
\section*{Open Problems}
Considering the hardness results based on the $k$-clique hypothesis, there is still a gap between lower and upper bounds for Induced Subgraph Isomorphism, where the best algorithm for detecting a $k$-node pattern $H$ runs in $k$-clique detection time and the best lower bound states that detecting $H$ requires the time needed to detect a $\sqrt{k}/\log{k}$-clique. The result of Manurangsi et al. \cite{ManurangsiRS21} suggests that the true running time should be closer to the current upper bound, and so it would be interesting to obtain a similar lower bound to \cite{ManurangsiRS21}, conditioned on $k$-clique hypothesis. 

For non-induced Subgraph Isomorphism, we show that cores as well as complements of paths and cycles are at least as hard to detect as a $t$-clique where $t$ is roughly the size of the maximum clique minor of the pattern. It is an interesting open problem to extend this result to more pattern classes. 

Finally, the true time complexity of induced $4$-cycle detection remains open. While resolving the gap for (induced) Subgraph Isomorphism might seem hard, Paired Pattern Detection can be a good guide on what properties future reductions must posses. Having a general framework for obtaining algorithms for Paired Pattern Detection would be a good step forward for this goal.

\bibliographystyle{plain}
\bibliography{ref}
\section*{Appendix}
\subsection{Lower Bounds}
\begin{proofof}{Lemma \ref{lem:maxcliqueminor}}
First we show that the maximum clique minor of any $k$-node pattern $H$ is a at most $\floor{\frac{k+\omega(H)}{2}}$, where $\omega(H)$ is the size of the maximum clique subgraph of $H$. Consider any clique minor of $H$, it partitions the pattern into connected groups, such that there is at least one edge between each two groups. Let the number of groups with exactly one node be $\alpha$. Note that $\alpha\le w(H)$, since these single nodes must form a clique. Moreover, the $k-\alpha$ nodes remaining must be in groups of size bigger than one, so they create at most $\floor{\frac{k-\alpha}{2}}$ groups. So in total, we have $\floor{\frac{k+\alpha}{2}}\le\floor{\frac{k+\omega(H)}{2}}$ groups. Note that the values of Table \ref{table:maxcliqueminor} match $\floor{\frac{k+\omega(H)}{2}}$.

To give clique minors of size $\floor{\frac{k+\omega(H)}{2}}$ for complements of paths and cycles, we do the following. If $\bar{H}$ is a path, let $\bar{H}=v_1v_2\ldots v_k$ and if it is a cycle let $\bar{H}=v_1v_2\ldots v_kv_1$. First we consider the maximum clique of $H$: if $k$ is even, it is $\{v_1,v_3,\ldots,v_{k-1}\}$, if $\bar{H}=C_k$ for odd $k$ it is $\{v_1,v_3,\ldots,v_{k-2}\}$ and if $\bar{H}=P_k$ for odd $k$ it is $\{v_1,v_3,\ldots,v_{k-2},v_k\}$. 
We have to pair the remaining $k-\omega(H)$ (with possibly one group containing three nodes if $k-\omega(H)$ is odd), such that each group is connected and each two groups have an edge between them. 
To do so, we need to group them in a way that if the size of a group is $2$, then the two nodes are not of the form $v_i,v_{i+2}$ for some $i$ mod $k$, because then none of them are adjacent to $v_{i+1}$ which is in the maximum clique and is the only member of its group. Pairing vertices with this restriction can be easily done except for when the pattern is $\bar{P_4}, \bar{C_4}$ and $\bar{P_5}$. 
\end{proofof}

\begin{proofof}{Lemma \ref{lem:odd}}
Let $\bar{C_k}$ be the complement of the cycle $C_k=v_1v_2\ldots v_kv_1$ for odd integer $k>3$. First we show that the chromatic number of $\bar{C_k}$ is $\frac{k+1}{2}$. This is because $C_k$ doesn't have a triangle, so at most $2$ nodes can be of the same color in $\bar{C_k}$. This means that the chromatic number of $\bar{C_k}$ is at least $\frac{k+1}{2}$, and the following proper coloring has exactly $\frac{k+1}{2}$ colors: Color $v_1$ with $1$, and for $i=1,\ldots \frac{k-1}{2}$, $v_{2i}$ and $v_{2i+1}$ with color $i$. 

To show that $\bar{C_k}$ is color critical, it suffices to prove that the chromatic number of $\bar{C_k}\setminus v_1$ is strictly less than $\frac{k+1}{2}$, which can be easily seen by the coloring given above. 
\end{proofof}

\subsection{Algorithms}

\begin{proofof}{Theorem \ref{thm:3-node-pairs}}
If $n\le 7$, we look at any subgraph of size $3$ in the graph, and check if it is isomorphic to $H_1$ or $H_2$. So suppose $n>7$.

First suppose that $H_2=H_1\setminus e$ for some edge $e=uw$ in $H_1$. Let $v\neq u,w$ be the node in $H_1$ that is not the endpoint of $e$. Let $v',u'$ and $w'$ be the corresponding nodes in $H_2$, so the edges $u'v'$ and $v'w'$ exist if and only if $uv$ and $vw$ exist. There are three cases for the edge pairs $vu$ and $vw$: (1) $vu,vw\in E(H_1)$, (2) $\{vu,vw\}\cap E(H_1)=1$, (3) $vu,vw\notin E(H_1)$. For case (1), look for a node $z$ in the host graph that has degree at least $2$. If such $z$ exists, pick two of its neighbors and the two neighbors and $z$ form $H_1$ or $H_2$. If no such $z$ exists, then the graph doesn't have either of $H_1$ and $H_2$. The other cases are similar. For case (2), the host graph has $H_1$ or $H_2$ if and only if it has a node $z$ with degree at least $1$ and at most $n-2$. For case (3) the host graph has $H_1$ or $H_2$ if and only if it has a node $z$ with degree at most $n-3$. The node $z$ in all the cases can be found in linear time by checking the degree of each node.

The cases that are not covered by the argument above are $\{K_3, I_3\}$, $\{K_3,P_2\cup I_1\}$ and $\{I_3,P_3\}$. The last two cases are complements of each other, so we prove one of them. 

First we show that we can fine induced $\{K_3, I_3\}$ in linear time. WLOG suppose that there is a node $v$ in the host graph with $d(v)\ge 3$ (if no such vertex exists, consider the complement of the host graph). Consider three of the neighbors of $v$, if two of them are attached, we have a triangle. If there is no edge among them, we have a $I_3$. 

Finally, we prove that we can detect induced $\{K_3,P_2\cup I_1\}$ in $O(n^2)$ time. First run the induced $\{K_3, I_3\}$ detection algorithm in linear time. If it finds no pattern, then run the induced $\{I_3,P_2\cup I_1\}$ detection algorithm and we show that we are done: If it also finds no pattern, then there is no $K_3$ or $P_2\cup I_1$ in the graph. If it finds a pattern, since the host graph has no $I_3$, it must be $P_2\cup I_1$. So we are done. 

So suppose that the $\{K_3, I_3\}$ detection algorithm finds a pattern. If it is $K_3$, then we are done. So suppose it is $I_3$. In $O(nt)$ time we can make this independent set into a maximal independent set $S$, where $t$ is the size of $S$. For each $v\notin S$ in the host graph, $v$ is attached to at least one node in $S$. If it is not attached to all the nodes in $S$, then $v$ with a neighbor and a non-neighbor in $S$ form a $P_2\cup I_1$. Let $S'$ be the set of all nodes in the host graph that are not in $S$. We can check the $S$ neighbors of all $v\in S'$ in $O(nt)$ time, and if we don't find any $P_2\cup I_1$, there is no $P_2\cup I_1$ or $K_3$ that contains a vertex in $S$. So we can recurse on $S'$. Note that we spent $O(nt)$ time, and if the algorithm on $S'$ takes $O((n-t)^2)$, then the algorithm takes $O(n^2)$ in total.

\end{proofof}

\begin{proofof}{Theorem \ref{thm:c4-general}}
We are going to prove the theorem for $H$ being the paw and co-claw.
\paragraph{The paw} We first prove two claims.
\begin{claim}
\label{claim:cliquepaw}
If $N(v)$ is a clique for some $v$, then we can determine whether the graph has a $4$-cycle or paw in $O(n^2)$ time. 
\end{claim}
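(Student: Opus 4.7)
The plan is to exploit the clique neighborhood of $v$ to (i) rule $v$ out of induced $C_4$s, (ii) classify paws containing $v$ into a constant number of cases, and (iii) derive a tight structural decomposition of $G$ whenever no such pattern is found, reducing the remaining search to a simpler subproblem.

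Since $N(v)$ is a clique, any two neighbors of $v$ are adjacent; but an induced $C_4$ through $v$ would require two non-adjacent neighbors of $v$. Hence $v$ lies in no induced $C_4$, and every induced $C_4$ in $G$ lies in $G-v$. For paws I classify by the role of $v$ in the triangle-plus-pendant structure $\{t,x,y,p\}$ (triangle $txy$, pendant $p$ adjacent only to $t$). If $v=t$, then $p\in N(v)$, and by the clique property $p$ is adjacent to both $x,y\in N(v)$, contradicting the pendant condition; so this case is impossible. The remaining cases are (a) $v=p$, which needs $u\in N(v)$ with two external neighbors $x,y$ satisfying $xy\in E$; and (b) $v\in\{x,y\}$, which needs $w\in V\setminus N[v]$ with $0<|N(w)\cap N(v)|<|N(v)|$. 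Test (b) can be carried out in $O(n\cdot|N(v)|)=O(n^2)$ by scanning adjacencies.

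If (b) fails, every external neighbor of $N(v)$ is adjacent to all of $N(v)$; let $A$ be this set and $B:=V\setminus(N[v]\cup A)$. This yields the decomposition $V=\{v\}\cup K\cup A\cup B$ with $K:=N(v)$ a clique, $K\times A$ a complete bipartite join, and $B$ having no edges to $K$. Under this decomposition test (a) simplifies: for every $u\in K$, $N(u)\setminus N[v]=A$, so (a) collapses to searching for an edge inside $A$, costing $O(|A|^2)\le O(n^2)$. If (a) also fails, $A$ is independent.

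With the decomposition in hand I would enumerate the remaining patterns, all of which avoid $v$, by how their four vertices distribute across $K$, $A$, $B$. The join structure leaves only a short list of $K$-hitting configurations: an induced $C_4$ of shape $(1,2,1)$ requires a non-adjacent pair in $A$ with a common $B$-neighbor; a paw of shape $(2,1,1)$ requires $|K|\ge 2$ together with some $a\in A$ having a $B$-neighbor; and a paw of shape $(1,1,2)$ requires some $a\in A$ with two adjacent $B$-neighbors. Each of these can be verified in $O(n^2)$ via the common-neighbour table $L$. The main obstacle is the patterns lying entirely inside $G[A\cup B]$: I would dispatch these by noting that if no $a\in A$ has a $B$-neighbor then $G[A\cup B]$ is the disjoint union of the independent set $A$ and the subgraph $G[B]$, reducing the search to $G[B]$; otherwise the paw tests above have already fired (for $|K|\ge 2$ by shape $(2,1,1)$, and for $|K|\le 1$ by the analogous sub-case covered by the companion claim). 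Detection in the strictly smaller $G[B]$ is then handled by a single invocation of the claim on a simplicial vertex of $G[B]$ or by the companion claim stated next, so the total work telescopes to $O(n^2)$.
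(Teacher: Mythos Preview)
Your decomposition $V=\{v\}\cup K\cup A\cup B$ and your tests (a) and (b) are exactly the paper's argument: test (b) is the paper's ``partial adjacency to $N(v)$ gives a paw'' step, and test (a) is the paper's ``edge inside $S$ gives a paw'' step. Up to this point you and the paper agree.

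The gap is in your endgame. Your enumeration of $K$-hitting shapes is correct but unnecessary, and your proposed recursion into $G[B]$ is not justified: $G[B]$ need not contain a simplicial vertex (so the present claim need not apply), and the ``companion claim'' requires some $N(w)$ to contain an edge, which also need not hold. Even if one of them did apply, the recurrence $T(n)=O(n^2)+T(|B|)$ with $|B|$ possibly as large as $n-O(1)$ gives $O(n^3)$, not $O(n^2)$; ``telescopes to $O(n^2)$'' is asserted, not proved. Your $|K|\le 1$ subcase is likewise left as a pointer to an unspecified ``analogous sub-case''.

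The paper avoids all of this with one observation you are missing. Once you know there is no $A$--$B$ edge, the set $B$ has no edges to $K$ (by definition), to $A$ (just checked), or to $v$; hence $B$ is a union of connected components of $G$ disjoint from $v$'s component. Since both target patterns are connected, one may assume $G$ is connected at the outset (handle components separately; the squared sizes sum to at most $n^2$). Under that assumption $B=\emptyset$, and what remains is the clique $K$ completely joined to the independent set $\{v\}\cup A$, which contains neither an induced $C_4$ nor a paw. No shape enumeration and no recursion are needed.
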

\begin{proof}
First note that we can assume that the graph is connected since both patterns are connected. For each $u\notin N[v]$, if the graph has no paws, we have that either $u$ is attached to all vertices in $N(v)$, or it is attached to none. This is because if it is attached to $w\in N(v)$ and not attached to $z\in N(v)$, then $\{v,u,w,z\}$ induces a paw. So let $S$ be the set of vertices attached to all $N(v)$, and let $T$ be the rest. There is no edge between $T$ and $S$: if $w\in S$ is attached to $u\in T$, then for some $z\in N(v)$, $\{u,v,w,z\}$ induces a paw. Since the graph is connected, this means that $T=\emptyset$. Now if there is an edge $zw$ in $S$, then for some $u\in N(u)$, $\{u,v,w,z\}$ induces a paw. So there is no edge in $S$. So the graph consists of a clique, $N(v)$, and an independent set, $G\setminus N(v)$, with all the edges between them present, and it doesn't have a paw or a $4$-clique. We visited each edge at most once, so we spent $O(n^2)$ time. 
\end{proof}

\begin{claim}
\label{claim:clquepaw2}
If $N(v)$ has an edge, then we can determine whether the graph has a $4$-cycle or paw in $O(n^2)$ time.
\end{claim}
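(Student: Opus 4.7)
The plan is to invoke Olariu's structural theorem for paw-free graphs: a connected graph is paw-free iff each of its connected components is triangle-free or complete multipartite. The hypothesis $uw\in E(G)$ with $u,w\in N(v)$ furnishes a triangle $T=\{v,u,w\}$ in the component of $v$, so that component is paw-free iff it is complete multipartite, and complete multipartiteness is efficiently decidable by Lemma~\ref{lem:p3} with a clean $C_4$ criterion.

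I would first reduce to connected components, spending $O(\sum_i |V(C_i)|^2) \leq O(n^2)$ in total since both $C_4$ and the paw are connected. For the component $C$ containing $v$, I apply Lemma~\ref{lem:p3} to decide in $O(|C|^2)$ time whether $C$ has an induced $\bar{P_3}$. If it does, then $C$ is not complete multipartite, and since $C$ contains the triangle $T$, Olariu's theorem forces $C$ to contain an induced paw, so I output YES. Otherwise Lemma~\ref{lem:p3} certifies $C$ as complete multipartite and yields its parts as the cliques of $\bar{C}$; an induced $C_4$ in $C$ then exists iff at least two parts have size $\geq 2$, and this is checked in $O(|C|^2)$.

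For any other component $C'$, the same $\bar{P_3}$-test applies: if $\bar{P_3}$-free, $C'$ is complete multipartite and is handled by the part-size check. Otherwise I run the Richards--Liestman non-induced $C_4$ algorithm on $C'$; it either finds an induced $C_4$ (output YES), or a non-induced one (whose chord gives a triangle, so Olariu combined with the already-found $\bar{P_3}$ produces a paw, output YES), or no $C_4$ at all, in which case $C'$ is $C_4$-free and I search for a triangle by scanning each vertex's neighborhood for an edge. The last search is affordable because in a $C_4$-free graph every pair of vertices shares at most one common neighbor, giving $\sum_{x\in C'} \binom{d(x)}{2} \leq \binom{|C'|}{2}$ and hence $O(|C'|^2)$ total work; if a triangle is found, Olariu yields a paw, otherwise $C'$ contributes nothing.

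The main conceptual obstacle is leaning on Olariu's theorem to assert paw \emph{existence} from a $\bar{P_3}$-plus-triangle configuration without explicitly exhibiting a paw; since the claim only asks for a decision, existence is enough. For a self-contained alternative, a paw can be extracted explicitly by taking the isolated vertex $z$ of a $\bar{P_3}$, tracing a shortest path from $z$ to a triangle $T_0$, and inspecting the first vertex on this path adjacent to a nonempty proper subset of $T_0$: that vertex together with $T_0$ forms a paw, and the extraction runs in $O(|C|^2)$ time, still within the overall $O(n^2)$ budget.
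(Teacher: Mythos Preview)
Your decision argument is correct but follows a genuinely different route from the paper. The paper works \emph{locally} from the distinguished vertex $v$: it applies Lemma~\ref{lem:p3} only to $N(v)$, deduces that $N(v)$ is complete multipartite with at most one nontrivial part (else a $C_4$ or the clique case of Claim~\ref{claim:cliquepaw} applies), and then argues vertex-by-vertex about $u\notin N[v]$ to either exhibit a paw explicitly or conclude that $N[v]$ is a full connected component, whence no pattern. Your approach instead invokes Olariu's characterization of paw-free graphs globally on each component, reducing everything to a $\bar{P_3}$-test plus a part-size check in the complete-multipartite case, and a non-induced $C_4$ plus triangle search otherwise. This is conceptually cleaner and avoids the case analysis, at the cost of importing an external structural theorem and (in its main form) only certifying existence rather than exhibiting the pattern; the paper's argument is self-contained and constructive throughout.

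One caveat on your ``self-contained alternative'' for explicit extraction: the claim that the first path-vertex adjacent to a nonempty proper subset of $T_0$ yields a paw with $T_0$ is not quite right. If that vertex is adjacent to exactly \emph{two} vertices of $T_0$, the four vertices induce a diamond, not a paw. A constructive version of Olariu's theorem needs a bit more care here (e.g., iterating on the new triangle inside the diamond, or a separate case analysis using the edge of the $\bar{P_3}$). This gap does not affect your main argument, which only needs decision, and decision is all the claim asks for.
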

\begin{proof}
The proof is similar to Claim \ref{claim:cliquepaw}. 
Check if $N(v)$ has an induced $\bar{p_3}$ by Lemma \ref{lem:p3}. If it does, then there is a paw in the graph, otherwise, $N(v)$ is a complete $c$-partite graph for some $c$. Now if two parts of this $c$-partite graph has size at least $2$, we have a $4$-cycle. So at most one of them has size at least $2$, and the rest have size $1$. If all parts have size $1$, then $N(v)$ is a clique and by Claim \ref{claim:cliquepaw} we are done. So let the part with size bigger than $1$ be $I$. Let $J=N(v)\setminus I$. Note that since $N(v)$ has an edge, $J\neq \emptyset$. Let $w\in I$. 

For each $u\notin N[v]$, if there is an edge and a non edge in $J \cup w \times u$, then we have a paw since $J \cup w $ is a clique. So $u$ is attached to all vertices in $J \cup w$ or it is adjacent to none. Similar argument works for $J \cup z$ for any $z\in I$. So either $u$ is in a paw and we detect it, $u$ is attached to all $N(v)$, or $u$ is attached to none in $N(v)$. Now if $u$ is attached to $z,w\in I$, then $u,v,z,w$ induce a $4$-cycle. So $u$ is attached to none of the vertices in $N[u]$. We check this for every $u$. If we don't find any pattern, then $N[v]$ is disconnected from the rest of the graph, and hence $G=N[v]$ and we output that we found no patterns.
\end{proof}

We run the algorithm of \cite{non-induced-4cycle} on the host graph to find a non-induced $4$-cycle. If it outputs a pattern, then we either have an induced $4$-cycle, or a triangle, so we can use Claim \ref{claim:clquepaw2}. If it outputs that there is no $4$-cycle, then it means that for every pair of nodes $u,v$, there is at most one node attached to both $u$ and $v$, so $\sum_w N(w)^2=O(n^2)$. So for each node $w$ in the host graph, we check if there is an edge in $N(w)$, and if there is such $w$, we apply Claim \ref{claim:clquepaw2}. Note that in finding such $w$ we spend $\sum_w N(w)^2=O(n^2)$ time. 

\paragraph{Co-claw}
First, we remove any vertices with degree $n-1$ as they cannot be in any of the two patterns. So we can assume that the degree of every vertex is less than $n-1$. Now run the algorithm of Lemma \ref{lem:triangle-and-c4} to detect a triangle or a $4$-cycle. If it outputs no pattern, then there is no $4$-cycle or co-claw in the graph. If it outputs a $4$-cycle, we are done as well. So suppose that it outputs a triangle. 

Next, we expand the triangle into a maximal clique, by visiting each vertex and seeing if it is adjacent to all the vertices in the clique. Let this maximal clique be $C$. Suppose that there is a vertex $v\in C$ with no neighbor in $G\setminus C$. Run the algorithm of Lemma \ref{lem:triangle-and-c4} on $G\setminus C$ to find a triangle or a $4$-clique. If the algorithm outputs a triangle, then this triangle with $v$ forms a co-claw. If the algorithm outputs a $4$-cycle we are done. If it outputs no pattern, then there is no $4$-cycle or co-claw containing $v$, so we can remove $v$ from the graph. So we can assume that for every $v\in C$, $v$ has at least one neighbor in $G\setminus C$. We have spent $O(n^2)$ time so far. 

Now we show that if there is an edge in $G\setminus C$, then we have a $4$-cycle or a co-claw. First we check if there is a node $u\in G\setminus C$ that is not attached to at least three nodes in $C$. If such $u$ exists, then we have a co-claw. So suppose that for each $u\in G\setminus C$, there are at most two nodes in $C$ which are not attached to $u$. Let the set of nodes in $C$ not attached to $u$ be $S(u)$. We compute $S(u)$ for every $u$, in total time of $O(n^2)$. Let $e=uw$ be an edge in $G\setminus C$. If $S(u)\not\subseteq S(w)$ and $S(w)\not \subseteq S(v)$, then there is $v_1,v_2\in C$ where $v_1\in S(u)\setminus S(w)$ and $v_2\in S(w)\setminus S(u)$, and so $v_1,v_2,u,w$ induce a $4$-cycle. So WLOG assume that $S(u)\subseteq S(w)$. There are three cases:

\textbf{(case 1)} $S(u)=S(w)=\{v\}$: Since $v$ has a neighbor in $G\setminus C$, there is $z\in C$, $z\neq u,w$, where $zv\in E$. Since $C$ is a maximal clique, there is $v'\in C$, $v'\neq v$, where $v'z\notin E$. Note that $uv',wv'\in E$. If $zu\in E$, then $v,v',z,u$ is a $4$-cycle. If $zw\in E$, then $v,v',z,w$ is a $4$-cycle. If neither happens, then $v',u,w,z$ induce a co-claw. 

\textbf{(case 2)} $S(u)=S(w)=\{v_1,v_2\}$: Let $v_3\neq v_1,v_2$ be some vertex in $C$. Since $v_3$ has degree at most $n-2$, there is $z\in G\setminus C$ where $zv_3\notin E$. We know that $uv_3,wv_3\in E$, and either $zv_1\in E$, or $zv_2\in E$, since $|S(z)|\le 2$. Wlog assume that $zv_1\in E$. Then either $v_1v_3zu$ is a $4$-cycle, $v_1v_3wz$ is a $4$-cycle, or $uwv_3z$ is a co-claw. 

\textbf{(case 3)} $S(u)=\{v_1\}$, $S(w)=\{v_1,v_2\}$. Let $z$ be a neighbor of $v_1$ in $G\setminus C$. Suppose $z$ is not adjacent to some $v_3\in C$, $v_3\neq v_2$. Then since $uv_3,wv_3\in E$ either $wzv_3v_1$ is a $4$-clique, $zuv_1v_3$ is a $4$-cycle, or $wuv_3z$ is a co-claw. We can check if such $z$ exists in $O(d(v_1)n)$ time. So assume that there is no such $z$. This means that for each $z\in G\setminus C$, either $v_1\in S(z)$, or $S(z)=\{v_2\}$. Since $d(v_1)<n-1$, there exists some $z_1\in N(v_1)$, and so we have that $S(z_1)=\{v_2\}$.
Let $v_3\in C$, $v_3\neq v_1,v_2$. Since $d(v_3)\neq n-1$, there is $z_2\in G\setminus C$, where $v_3\in S(z_2)$. So $S(z_2)=\{v_1,v_3\}$, and hence $z_2v_2\in E$. Now if $z_1z_2\in E$, then $z_1z_2v_1v_2$ is a $4$-cycle. Otherwise, $z_1z_2v_1v_3$ is a co-claw.

So if we find no pattern, we can assume that there is no edge in $G\setminus C$. Since $|S(v)|\le 2$ for all $v\in G\setminus C$, this means that there is no co-claw or $4$-cycle in $G$.
\end{proofof}

\subsection{proof of Theorem \ref{thm:complement}}
\paragraph{Claw and co-claw} 
First we prove the following claim.
\begin{claim}
\label{claim:degree3}
Let $n\ge 11$. If there is a vertex $v\in V(G)$ such that $3\le d_G(v)\le n-4$, then $G$ has a claw or co-claw and we can find it in $O(n^2)$ time.
\end{claim}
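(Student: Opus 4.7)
Setting $A = N_G(v)$ and $B = V(G)\setminus(N_G(v)\cup\{v\})$, the hypothesis gives $|A|,|B|\ge 3$, and $n\ge 11$ forces $|A|+|B|\ge 10$. My plan is to search, at each $u\in A$ inside $B_u := N_G(u)\cap B$, and at each $w\in B$ inside $\overline A_w := A\setminus N_G(w)$, for one of four substructures, each of which directly exhibits the desired pattern. A non-edge $\{b,b'\}\subseteq B_u$ gives the induced claw $\{u,v,b,b'\}$ (since $u$ is adjacent to $v,b,b'$, while $v$ is non-adjacent to $B$ and $bb'\notin E$). A triangle $\{b_1,b_2,b_3\}\subseteq B_u$ gives the induced co-claw $\{v,b_1,b_2,b_3\}$. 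Symmetrically, an edge $\{a,a'\}\subseteq \overline A_w$ gives the induced co-claw $\{w,v,a,a'\}$ (with triangle $\{v,a,a'\}$ and isolated $w$), and an $I_3$ $\{a_1,a_2,a_3\}\subseteq \overline A_w$ gives the induced claw $\{v,a_1,a_2,a_3\}$.

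At each $u\in A$ I detect the first two cases by inspecting $B_u$: if $|B_u|\ge 3$, pick any three elements and examine the three pairs, which must either contain a non-edge (first case) or form a triangle (second case); if $|B_u|=2$, check the single pair; if $|B_u|\le 1$ there is nothing to test. The test at each $w\in B$ is analogous using $\overline A_w$. Constructing all $B_u$ and $\overline A_w$ takes $O(n^2)$ total time and each per-vertex test is $O(1)$, so the whole procedure runs in $O(n^2)$.

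The heart of the proof is showing that some test must succeed. Suppose not; then $|B_u|\le 2$ for every $u\in A$ and $|\overline A_w|\le 2$ for every $w\in B$. Double-counting the edges between $A$ and $B$ gives
\[
|B|(|A|-2) \;\le\; \sum_{w\in B}|N_G(w)\cap A| \;=\; \sum_{u\in A}|B_u| \;\le\; 2|A|,
\]
so $|B|\le 2|A|/(|A|-2)$. Combined with $|A|,|B|\ge 3$ this yields $|A|+|B|\le 9$ (the integer maximum is attained at $(|A|,|B|)\in\{(3,6),(6,3)\}$), contradicting $|A|+|B|\ge 10$. Hence some test succeeds and the algorithm outputs the corresponding claw or co-claw.

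The main obstacle I foresee is verifying exhaustiveness: a priori one might worry about claws or co-claws entirely disjoint from $\{v\}\cup N(v)$, so it is not obvious that four $v$-anchored patterns suffice. The double-counting inequality is what closes this gap, by showing that whenever all four pattern families fail, the bipartite graph between $A$ and $B$ is simultaneously too sparse (from the $|B_u|\le 2$ side) and too dense (from the $|\overline A_w|\le 2$ side) to accommodate $|A|+|B|\ge 10$, forcing one of the tests to find a pattern.
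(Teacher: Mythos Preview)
Your proof is correct and takes a genuinely different route from the paper's. The paper first passes to the complement if necessary so that $|A|\ge (n-1)/2\ge 5$, then searches $B$ directly: if $B$ contains a non-edge $uw$, it either finds some $z\in A$ adjacent to both (giving a claw $\{z,u,w,v\}$), or by pigeonhole locates three vertices of $A$ missing a common endpoint of $uw$ and checks those three for an edge or an $I_3$; if $B$ has no non-edge it is a clique of size $\ge 3$ and any triangle in $B$ together with $v$ is a co-claw. Your argument instead works symmetrically on both sides without a complementation step, testing the sets $B_u$ and $\overline A_w$ locally and closing the argument with a clean double-counting inequality $|B|(|A|-2)\le 2|A|$ that forces $|A|+|B|\le 9$. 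The paper's approach is a direct case analysis tailored to the asymmetric situation $|A|\ge 5$; yours is more uniform and makes the role of the threshold $n\ge 11$ completely transparent (it is exactly the smallest $n$ for which $|A|+|B|=n-1$ exceeds the extremal value $9$). Both run in $O(n^2)$ and both ultimately exhibit one of the same four $v$-anchored patterns, so neither is strictly stronger, but your counting argument is arguably the more reusable idea.
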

\begin{proof}
WLOG we can assume that $d_G(v)\ge \frac{n-1}{2}$ (otherwise consider $\bar{G}$). Let $N_v$ be the set of neighbors of $v$, and let $M_v$ be the rest of the nodes. We know that $|N_v|\ge 5$ and $|M_v|\ge 3$. Suppose that there is a non-edge $uw$ in $M_v$. If there is a node $z\in N_v$ that is attached to both $u$ and $w$, then $z,u,w,v$ form a claw. So we check all node in $N_v$ in $O(n)$ total time, and if we don't find a claw, for all $z\in N_v$, $z$ is not attached to at least one of $u$ and $w$. Since $|N_v|\ge5$, there is one of $u$ and $w$ that is not attached to at least $3$ nodes in $N_v$. Suppose $u$ is not attached to $z_1,z_2,z_3\in N_v$. If there is an edge between $z_1,z_2,z_3$, that edge with $v$ and $u$ form a co-claw. Otherwise, there is no edge between $z_1,z_2,z_3$ and so $z_1,z_2,z_3,v$ form a claw. We can check neighbors of $u$ and $w$ in $O(n)$ time and find a claw.

So if there is no non-edge in $M_v$, there is a triangle in $M_v$ and this triangle with $v$ forms a co-claw. 
\end{proof}
If $n<11$, we check all subgraphs of size $4$ for a claw or co-claw. Otherwise, we check the degree of all nodes and if there is a node $v$ such that $3\le d_G(v)\le n-4$, by Claim \ref{claim:degree3} we can find a claw or co-claw in $O(n^2)$ time. 

So suppose that there is no such vertex $v$. So for each $v\in V(G)$, $v$ is either low degree and $d_G(v)\le 2$ or $v$ is high degree and $d_G(v)\ge n-3$. Suppose that we have both kind of nodes in the graph. WLOG suppose that we have more low degree nodes. Let $v$ be a high degree node. Let $N_v$ be the set of neighbors of $v$. Since most nodes are low degree, there is a low degree node $u$ in $ N_v$ and we can find it in $O(|N_v|)$ time. So there are $z_1,z_2,z_3\in N_v$ that are not attached to $u$ and we can find them in $O(n)$ time. If there is a non-edge among $z_1,z_2,z_3$, then this non-edge with $u$ and $v$ form a claw. Otherwise $z_1,z_2,z_3$ form a triangle, and so they form a co-claw with $u$. 

So suppose that all nodes are low degree or all nodes are high degree. WLOG suppose the former happens. Since the degree of all nodes is less than $3$, there is no claw in the graph. To look for a co-claw, for every node $v$ we check if it is in a triangle in $O(1)$ time. If it is, then since all nodes are low degree there is a node in the graph that has no neighbors in the triangle, and hence we have a co-claw. This take $O(n)$ time, and overall the algorithm takes $O(n^2).$
\paragraph{$C_4$ and $2k_2$}
First we have the following observation that is easily verifiable.
\begin{observation}
\label{obs:clique-or-ind}
\cite{splitgraphs} If $G$ is decomposed into a clique and an independent set (i.e $G$ is a split graph), it doesn't have a $C_4$ or $2k_2$.
\end{observation}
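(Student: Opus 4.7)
The plan is to argue by a short case analysis on how the four vertices of a hypothetical induced $C_4$ or induced $2K_2$ in $G$ would have to distribute between the clique side $K$ and the independent side $I$ of the split decomposition. Writing the distribution as $(a,b) = (|V(H)\cap K|, |V(H)\cap I|)$ with $a+b=4$, I would rule out each of the five possibilities $(4,0),(3,1),(2,2),(1,3),(0,4)$ in turn.

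The extreme distributions $(4,0)$ and $(0,4)$ are eliminated immediately, since neither $C_4$ nor $2K_2$ is a clique, and neither is an independent set. The asymmetric cases $(3,1)$ and $(1,3)$ follow from two small structural observations, one per pattern: in $C_4$ every three vertices span both an edge and a non-edge, so no triple forms a triangle (blocking $(3,1)$) and no triple forms an independent set (blocking $(1,3)$); the same two properties hold for $2K_2$, since each of its four vertex-triples contains exactly one of the two matching edges and two non-edges. This leaves only the ``balanced'' case $(2,2)$.

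For $(2,2)$ the two vertices chosen from $K$ must be adjacent in the pattern and the two chosen from $I$ must be non-adjacent. In $C_4$ the only non-adjacent pairs are the two diagonals, so the pair in $I$ is forced to be a diagonal; but then the remaining pair in $K$ is the other diagonal, which is also a non-edge --- a contradiction. In $2K_2$ the only adjacent pairs are the two matching edges, so the pair in $K$ must be one of them; but then the pair in $I$ is the other matching edge, hence adjacent --- again a contradiction. Both patterns are therefore excluded.

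The argument amounts to a pigeonhole on the split partition, and the only place one could slip is in correctly enumerating which pairs and triples of $C_4$ and $2K_2$ are edges, non-edges, triangles, or independent triples. That small piece of bookkeeping is the main (and essentially only) obstacle; once it is laid down, every case closes in one line.
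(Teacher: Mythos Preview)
Your case analysis is correct. The paper itself does not give a proof of this observation: it simply states the fact as ``easily verifiable'' and cites the split-graph literature, so there is nothing substantive to compare your argument against beyond noting that your pigeonhole on the split partition is exactly the kind of routine verification the authors had in mind.
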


Now we explain the algorithm. If the size of the graph is at most $31$, we check all subgraphs of size $4$ to find a $C_4$ or $2k_2$. So assume that $n>31$.
\paragraph{Step 1.} Run the algorithm of Lemma \ref{lem:ramsey} on $G$ to find a $K_4$ or a $I_4$. WLOG suppose we find a $K_4$. In $O(n^2)$, we can turn this $4$-clique into a maximal clique $C$. Let $T_C$ be the set of nodes that are attached to all but exactly one node of the clique: $d_C(v)=|C|-1$ for $v\in T_C$. Let $S_T$ be the rest of the vertices. So for each $v\in S$, we have $d_C(v)\le |C|-2$. 

\paragraph{Step 2.} Scan $S_C$ for edges in $O(|S_C|^2)$ time. Suppose there is an edge $uv\in S_C$. If there are two nodes $c_1,c_2\in C$ such that $u$ and $v$ are not attached to $c_1$ and $c_2$, then $u,v,c_1,c_2$ form a $2k_2$. If such $c_1,c_2$ don't exist, then there exists $c_1',c_2'\in C$ such that $uc_1',vc_2'\in E(G)$ and $uc_2',vc_1'\notin E(G)$, and so $u,v,c_1,'c_2'$ form a $C_4$. So we scan $S_T$ for edges and if we find an edge, in $O(n)$ time we can find either a $C_4$ or a $2k_2$.

So we can assume that $S_T$ is an independent set. 

\paragraph{Step 3.} Compute $E^*=E_{T_C}$ which is the set of edges with both endpoints in $T_C$. If $E^*=\emptyset$, then we can apply Lemma \ref{lem:c4mainlemma} below. Otherwise, suppose there is an edge $e=uv\in E(T^*)$ such that $N_C(v)\neq N_C(u)$. Then there are nodes $c_1,c_2\in C$, such that $uc_1,vc_2\in E(G)$ and $uc_2,vc_1\notin E(G)$. So $u,v,c_1,c_2$ form a $4$-cycle. 

Now suppose that there is an edge $e=uv\in E(T^*)$ such that $N_C(v)= N_C(u)$. Let $z\in C$ be the vertex $u$ and $v$ are not attached to. Suppose that there is $w\in T_C$ that is attached to $z$, and let $z'\in N_C(w)$. So $uz',vz'\in E(G)$. If $wu\in E(G)$, then $z',z,w,u$ form a $4$-cycle. Similarly if $wv\in E(G)$, then $z',z',w,v$ form a $4$-cycle. Otherwise, $z,w,u,v$ form a $2k_2$. So if we haven't found a $4$-cycle of $2k_2$ so far, we know that all nodes in $T_C$ have the same neighbors in $C$, i.e. they are not attached to the same node $z\in C$.

\paragraph{Step 4.} Let $C'$ be the clique formed as follows: Remove $z$ from $C$, add $u$ and $v$ and then make the clique  maximal by looking for nodes to add from $T_C$. This can be done in $O(n\cdot (|C'|-|C|))$ time. We can also compute $T_{C'}$ from $T_{C}\cup S_C$ in $O(n.(|C'|-|C|))$ time. Note that $z\in S_{C'}$, as it is not attached to $u,v\in C'$. If $T_{C}\cap T_{C'}=\emptyset$, then $T_{C'}\subseteq S_C$, so $T_{C'}$ is an independent set and we can use Lemma \ref{lem:c4mainlemma}. 

Suppose $x\in T_{C}\cap T_{C'}$. Suppose $T_{C'}\cap S_C\neq\emptyset$, and $y\in T_{C'}\cap S_C$. Then $y$ and $x$ have different neighborhood sets in $C'$. So we can repeat steps $2$ and $3$ and find a pattern.

Now suppose that $T_{C'}\cap S_C=\emptyset$. Then $T_{C'}\subseteq T_{C}$. We repeat the steps $2,3$ and $4$ again for $C'$. Note that step $2$ can now be done in $O(n\cdot(|S_{C'}|-|S_{C}|))$, since $S_{C}\subseteq S_{C'}$. Step $3$ can be done in $O(|E_{T_{C}}|-|E_{T_{C'}}|)$ and step $4$ can be done in $O(n+n(|C'|-|C|))$. Moreover, the size of the clique is growing, so we repeat at most $O(n)$ times, and thus we spend $O(|E_{T_{C}}|+n^2)$, until we find a maximal clique $C^*$ that satisfies Lemma \ref{lem:c4mainlemma}.

\begin{lemma}
\label{lem:c4mainlemma}
Let $G$ be a $n$-node graph. Let $C$ be a maximal clique in $G$, let $T$ be the set of nodes $v$ where $d_C(v)=|C|-1$ and let $S$ be the rest of the nodes, i.e for every node $v\in S$ we have $d_C(v)\le |C|-2$. Then if $S$ and $T$ are independent sets, in $O(n^2)$ we can either find a $C_4$ or a $2k_2$ or show that $G$ doesn't have either of these patterns.
\end{lemma}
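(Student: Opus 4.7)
The plan is to classify every induced $C_4$ or $2K_2$ of $G$ by how its four vertices distribute among the parts $C$, $T$, $S$, and to test each resulting class in $O(n^2)$ time. Because $C$ is a clique and $T,S$ are independent sets, a short case analysis restricts an induced $C_4$ to at most $2$ vertices in $C$ (and if exactly $2$, they must be consecutive on the cycle), and an induced $2K_2$ to at most $1$ vertex in $C$ (two $C$-vertices of a $2K_2$ would force each non-$C$ vertex to miss two distinct $C$-vertices, incompatible with the single-missed-neighbor property of $T$ and with the independence of $S$). Moreover any edge between two non-$C$ vertices of such a pattern must cross between $T$ and $S$.

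In $O(n^2)$ preprocessing I compute: for each $t \in T$, the unique vertex $f(t) \in C$ non-adjacent to $t$; for each $s \in S$, the set $\sigma(s) = C \setminus N_C(s)$ (of size at least $2$); the bipartite subgraph $B$ of $G$ on $T \cup S$; the preimages $T_c = f^{-1}(c)$; and the bilinear counts $|N_S(t) \cap A_c|$, where $A_c := \{s \in S : c \in N_C(s)\}$. The counts are built by iterating over $s \in S$, $c \in N_C(s)$, $t \in N_T(s)$.

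The algorithm then runs the following tests in order.
\begin{enumerate}
\item \emph{$C_4$ with two consecutive $C$-vertices:} for each edge $(t,s) \in E(B)$, report a $C_4$ if $f(t) \notin \sigma(s)$; a witness is $f(t), c', t, s$ for any $c' \in \sigma(s)$.
\item \emph{$C_4$ with one $C$-vertex:} for each $t \in T$, report one if two distinct $S$-neighbors of $t$ are both adjacent to $f(t)$; for each $s \in S$, report one if $\sigma(s) \setminus \{f(t_1),f(t_2)\} \neq \emptyset$ for some $t_1,t_2 \in N_T(s)$.
\item \emph{$C_4$ with no $C$-vertex:} detect a $4$-cycle of $B$ by the classical $O(n^2)$ algorithm; if none, $|E(B)| = O(n^{3/2})$.
\item \emph{$2K_2$ with one $C$-vertex:} for each edge $(t,s) \in E(B)$ set $c := f(t)$ and check whether a vertex $u \notin \{c,t,s\}$ exists that is adjacent to $c$ but non-adjacent to both $t$ and $s$; the case $u \in T$ reduces to $|T| - |T_c| - |N_T(s)| + 1 > 0$, and $u \in S$ to $|A_c| - |A_c \cap N_S(t)| > 0$, both $O(1)$ after preprocessing.
\item \emph{$2K_2$ with no $C$-vertex:} an induced $2K_2$ of $B$, found by linear-time chain-graph recognition: sort $S$ by $|N_T(\cdot)|$, and on any consecutive pair whose $T$-neighborhoods are incomparable extract witnesses $t \in N_T(s_{i+1}) \setminus N_T(s_i)$ and $t' \in N_T(s_i) \setminus N_T(s_{i+1})$; together with $s_i,s_{i+1}$ they form an induced $2K_2$.
\end{enumerate}

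The main obstacle is step (4), for which a naive per-edge search of $T$ would cost $\Omega(n^{5/2})$. The key structural fallout of steps (1)--(3) is that every $(t,s) \in E(B)$ satisfies $f(t) \in \sigma(s)$, and every $s$ with $|N_T(s)| \geq 2$ has $|\sigma(s)|=|N_T(s)|=2$ with its two $T$-neighbors taking distinct $f$-values in $\sigma(s)$. This forces $|N_T(s) \cap T_{f(t)}|=1$ for every considered edge, collapsing the set-difference tests in step (4) to the constant-time cardinality checks above. Correctness of each case is a direct verification from the definitions of $f$ and $\sigma$ together with the clique/independent-set structure of the partition, and the total runtime is $O(n^2)$.
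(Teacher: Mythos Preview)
Your approach is correct and genuinely different from the paper's. The paper proceeds by a cascading structural case analysis (bounding $d_T(v)$ for $v\in S$, then $|\sigma(v)|$, then $|T|$, etc.), each time either producing a pattern or tightening the structure until the graph is shown to be a split graph and hence $\{C_4,2K_2\}$-free. You instead enumerate all possible placements of the four pattern vertices among $C,T,S$ and give a dedicated $O(n^2)$ test for each placement. Your method is more modular and the key structural insight --- that after steps (1)--(2) every edge $(t,s)\in B$ has $f(t)\in\sigma(s)$ and every $s$ with two $T$-neighbors has $|\sigma(s)|=|N_T(s)|=2$ with distinct $f$-values --- is exactly what makes the per-edge tests in step (4) constant time. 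Two small remarks: step (2) case $A$ is actually subsumed by step (1), since any $s\in N_S(t)$ adjacent to $f(t)$ already triggers step (1); and your chain-graph argument in step (5) is clean and avoids the paper's more ad hoc endgame.

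There is, however, one genuine runtime issue in your preprocessing. You claim to compute the table $|N_S(t)\cap A_c|$ for all pairs $(t,c)$ in $O(n^2)$ time via the triple loop over $s\in S$, $c\in N_C(s)$, $t\in N_T(s)$. That loop costs $\sum_{s\in S}|N_C(s)|\cdot|N_T(s)|$, which can be $\Omega(n^3)$ before step (2) bounds $|N_T(s)|$. The fix is easy and you have already done the hard part: step (4) only ever uses $c=f(t)$, so it suffices to compute, for each $t\in T$, the single count $|N_S(t)\cap A_{f(t)}|$ in time $O(|N_S(t)|)$, totalling $O(|E(B)|)\le O(n^2)$. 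In fact, after step (1) every $s\in N_S(t)$ satisfies $f(t)\in\sigma(s)$, i.e.\ $s\notin A_{f(t)}$, so this count is simply $0$ and the $u\in S$ test in step (4) reduces to $|A_{f(t)}|>0$. With this correction your proof goes through in $O(n^2)$ as claimed.
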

 
\begin{proof}
Suppose there is $v\in S$ such that $d_T(v)\ge 3$, i.e. $v$ has at least three neighbors in $T$. Let $c_1,c_2\in C$ be two nodes in $C$ that are not attached to $v$. Since each node in $T$ is attached to all but exactly one node in $C$, there is $i\in \{1,2\}$ such that $c_i$ is attached to at least two nodes in $N_T(v)$, say $w_1,w_2$. So $v,c_i,w_1,w_2$ form a $C_4$, and we can find it in $O(n)$ time once we have $v$.

So we can assume that $d_T(v)\le 2$ for all $v\in S$. Suppose there is a $v$ with $d_T(v)=2$ such that $v$ is not attached to at least three nodes $c_1,c_2,c_3$ in $C$. Then again since every node in $T$ is attached to all but exactly one node in $C$, there is $i\in \{1,2,3\}$ such that $c_i$ is attached to both nodes in $N_T(v)$. So $N_T(v), c_i,v$ form a $C_4$, and we can find it in $O(n)$ time once we have $v$.

So we can assume that for all $v\in S$ with $d_T(v)=2$, $v$ is attached to all but exactly two nodes in $C$. First suppose that $|T|\ge 3$. If there is a node $v\in S$ with $d_T(v)=2$, let $u\in T\setminus N_T(v)$. Suppose $v$ is not attached to $c_1,c_2\in C$. If both nodes in $N_T(v)$ are not attached to $c_1$, then both are attached to $c_2$ and $v,N_v(T),c_2$ form a $C_4$. Similarly if both nodes in $N_T(v)$ are not attached to $c_2$, we have a $C_4$. So suppose one is not attached to $c_1$ and the other is not attached to $c_2$. WLOG suppose $u$ is attached to $c_1$. There is $w\in N_T(v)$ that is not attached to $c_1$. So $w,u,v,c_1$ form a $2k_2$. So if $|T|\ge 3$ and there is $v\in S$ with $d_T(v)=2$, we find a $2k_2$ or $C_4$ in $O(n)$ time.

So we can suppose that either there is a node $v\in S$ with $d_T(v)=2$ and $|T|\le 2$, or for all $v\in S$, $d_T(v)\le 1$.

First suppose that $d_T(v)\le 1$ for all $v\in S$. If $d_T(v)=0$ for all $v\in S$, then $S\cup T$ is an independent set and by Observation \ref{obs:clique-or-ind} there is no $C_4$ or $2k_2$ in $G$. So suppose that there is $v\in S$ with $d_T(v)=1$. Let $w\in T$ be the neighbor of $v$. Suppose that $w$ is not attached to $c\in C$. If $vc\in E(G)$, let $c'$ be a node in $C$ that is not attached to $v$. Then $w$ must be attached to $c'$, since $w$ is attached to all but one node in $C$. So $w,v,c,c'$ form a $C_4$ and we can find it in $O(n)$ time. So suppose that $vc\notin E(G)$. If there is $w'\in T$ such that $w'$ is attached to $c$, then since $v$ is not attached to $w'$, $c,w',w,v$ form a $2k_2$ and we can find it in $O(n)$. 

So suppose that for all nodes $t\in T$, $t$ is not attached to $c$, and is attached to all nodes $C\setminus c$. Now if there is $v'\in S$, such that $d_T(v')=1$ and $v'$ is attached to $w'\neq w$, then $v,v',w,w'$ form a $2k_2$ and we can find it in $O(n^2)$ time. So suppose that for all $v'\in S$ with $d_T(v')=1$, $v'$ is attached to $w$ in $T$, and is attached to no other nodes in $T$. Similar to $v$, we can assume that for any $v'\in S$ with $d_T(v')=1$, $v'$ is not attached to $c$ (otherwise we find a $C_4$ in $O(n)$ time). So this means that $\tilde{C}=C\cup \{w\}\setminus \{c\}$ is a clique and $\tilde{T}=T\cup S\cup\{c\}\setminus \{w\}$ is an independent set and $\tilde{C}\cup \tilde{T}=V(G)$. So by observation \ref{obs:clique-or-ind}, $G$ has no $C_4$ or $2k_2$.

Now it suffices to solve the problem for the case where $|T|\le 2$ and there is $v\in S$ where $d_T(v)=2$. So $|T|=2$. If there is another node $v'\in S$, $v'\neq v$, such that $d_T(v)=2$, then $v,v',T$ form a $C_4$. So for all $v\neq v'\in S$, $d_T(v')\le 1$. Let $T=\{t_1,t_2\}$. If there is a $c\in C$, such that both $t_1$ and $t_2$ are not attached to $c$, then there is $c\neq c'\in C$, such that $v$ is not attached to $c'$ and $c'$ is attached to both $t_1,t_2$, and thus $c',v,t_1,t_2$ form a $C_4$. 

So suppose that there are $c_1,c_2\in C$, such that $t_1c_1\notin E(G)$ and $t_2c_2\notin E(G)$.
Note that from above we know that if $v$ has at most $|C|-3$ neighbors in $C$ we can find a $C_4$ or $2k_2$. So we can assume that $v$ is attached to all but exactly two nodes in $C$. If $v$ is attached to $c_1$, then $v,c_1,c_2,t_1$ form a $C_4$. If $v$ is attached to $c_2$, then $v,c_2,c_1,t_2$ form a $C_4$. So we can assume that $v$ is not attached to $c_1,c_2$, and is attached to all other nodes in $C$.

Now we are going to see if we find any $C_4$ or $2k_2$ containing $v$. First suppose that $v$ is in a $C_4$ and $u,w$ are its neighbors. Both $w,u$ cannot be in $C$, since $C$ is a clique. If $w\in C_4$ and $u\in T$, then $w\notin \{c_1,c_2\}$, so $w$ is attached to $u$. So this is not possible either. If $w,u\in T$, then the forth node must be in $C$, and since it cannot be $c_1$ or $c_2$, it is attached to $v$, which is again not possible. So there is no $C_4$ having $v$ as a node. 

Now suppose that there is a $2k_2$ containing $v$. Let the $k_2$ containing $v$ be $vu$. If $u\in T$, WLOG suppose $u=t_1$. Since $v$ is adjacent to $t_2$, the other $k_2$ must be fully in $C$ or have one node in $C$ and one node in $S$. If it is fully in $C$, then it must be $c_1c_2$. But $t_1$ is attached to $c_2$. So the other $k_2$ has one node in $C$ one node in $S$. In this case, the node in $C$ must be $c_1$ since $t_1$ is not attached to it. So we look at all the neighbors of $c_1$ in $S$, and see if they make a $2k_2$ with $c_1,t_2,v$, in $O(n)$ time. 

Now suppose that $u\in C$. This is not possible since all nodes in $T$ are adjacent to $v$, and all nodes in $C$ are adjacent to $u$, and so the other $k_2$ must be fully in $S$ which is an independent set. 

So we can remove $v$ from the graph, and now we don't have any node in $S$ that has two neighbors in $T$, and we can proceed as the case where all nodes in $S$ have at most one neighbor in $T$.
\end{proof}

\end{document}